\newcommand{\para}[1]
{\subsubsection*{\textnormal{\bf #1}}}
\newcommand{\mathsc}[1]{{\normalfont\textsc{#1}}}
\def\etal.{et\penalty50\ al.} 
\renewcommand{\iff}{\mathrel{\mathrm{\ iff\ }}}
\newcommand{\eqdef}{\mathrel{\stackrel{{\scriptscriptstyle\mathrm{def}}}{=}}}
\newtheorem{theorem}{Theorem}
\newtheorem{proposition}{Proposition}
\newtheorem{lemma}{Lemma}
\newtheorem{corollary}{Corollary}
\newtheorem{definition}{Definition}
\newtheorem*{notation}{Notation}
\newtheorem{fact}{Fact}
\newcommand{\setof}[1]{\{#1\}}
\newcommand{\powerset}{\raisebox{.15\baselineskip}{\Large\ensuremath{\wp}}}
\newcommand{\dom}{\mathrm{dom}}
\newcommand{\codom}{\mathrm{codom}}
\newcommand{\REL}{\ensuremath{\operatorname{REL}}\xspace}
\newcommand{\Rel}{\REL}
\newcommand{\LoI}{\ensuremath{\operatorname{LoI}}\xspace}
\newcommand{\LOI}{\LoI}
\newcommand{\infoleq}{\precsim} \newcommand{\pinfoleq}{\LOCIvariant{\infoleq}} 
\newcommand{\ER}{\ensuremath{\mathrm{ER}}}
\newcommand{\PRE}{\ensuremath{\mathrm{PRE}}}
\newcommand{\LOIvariant}[1]{\mathrel{#1_\text{\tiny \LOI}}}
\newcommand{\LOIleq}{\LOIvariant{\sqsubseteq}}
\newcommand{\LOIgeq}{\LOIvariant{\sqsupseteq}}
\newcommand{\LOIjoin}{\LOIvariant{\sqcup}}
\newcommand{\LOImeet}{\LOIvariant{\sqcap}}
\newcommand{\sLOIbigJoin}{\textstyle{\bigsqcup_{\text{\tiny LoI}}}}
\newcommand{\LoCI}{\ensuremath{\operatorname{LoCI}}\xspace}
\newcommand{\LOCI}{\LoCI}
\newcommand{\LOCIvariant}[1]{\mathrel{#1_\text{\tiny LoCI}}}
\newcommand{\LOCIleq}{\LOCIvariant{\sqsubseteq}}
\newcommand{\LOCIgeq}{\LOCIvariant{\sqsupseteq}}
\newcommand{\LOCIjoin}{\LOCIvariant{\sqcup}}
\newcommand{\LoCIleq}{\LOCIleq}
\newcommand{\LoCIgeq}{\LOCIgeq}
\newcommand{\sLOCIbigJoin}{\textstyle{\bigsqcup_{\text{\tiny LoCI}}}}
\newcommand{\COCI}{\ensuremath{\mathbf{CoCI}}\xspace}
\newcommand{\poleq}{\sqsubseteq}
\newcommand{\bigjoin}{\bigsqcup}
\newcommand{\biglub}{\bigjoin}
\newcommand{\cell}[1]{{[}#1{]}}
\newcommand{\All}{\rAll}
\newcommand{\Id}{\rId}
\newcommand{\rId}{\mathrm{Id}}
\newcommand{\rAll}{\mathrm{All}}
\newcommand{\ERarrow}{\Rightarrow}
\newcommand{\pref}[2]{\hyperref[#2]{#1\ref*{#2}}}
\newcommand{\Sec}[2][\S]{\pref{#1}{sec:#2}}
\newcommand{\Fig}[2][Fig.]{\pref{#1~}{fig:#2}}
\newcommand{\ie}{i.e.\ }
\newcommand{\consi}[1]{\widetilde{#1}}
\newcommand{\id}[1][]{\operatorname{id}_{#1}}
\newcounter{RuleRef}
\newcommand{\ruleref}[1]{\textsc{\hyperref[rule:#1]{#1}}}
\newif\ifceff
\newcommand{\domOne}{\textbf{1}} \newcommand{\domTwo}{\textbf{2}} \newcommand{\one}{{*}}
\newcommand{\infer}[3][]{\ensuremath{\displaystyle\frac{\begin{array}{c}#2\end{array}}{\begin{array}{c}#3\end{array}}~\text{#1}}}
\newcommand{\set}[1]{\{#1\}}
\newcommand{\setdef}[2]{\set{#1 \mid #2}}
\newcommand{\tiarrow}{\mathrel{{\Rightarrow}^{\mathsc{ti}}}}
\newcommand{\pker}[1]{\mathrel{\ker_{\poleq}(#1)}}
\newcommand{\qpo}[1]{\mathrel{{\poleq}_{#1}}}
\newcommand{\kset}[1]{K_{#1}}
\newcommand{\okset}[1]{K^{\poleq}_{#1}}
\newcommand{\Cp}{\mathrm{Cp}}
\newcommand{\Er}{\mathrm{Er}}
\newcommand{\EM}{\mathrm{EM}}
\newcommand{\powerdomain}{\powerset}
\newcommand{\plotkinunion}{\mathrel{\bar{\cup}}}
\newcommand{\kleisli}[1]{{{#1}^\dagger}}
\newcommand{\Cv}{\mathrm{Cv}}
\newcommand{\NDchoice}{\mid} 
\begin{document}

\title[Reconciling Shannon and Scott with a Lattice of Computable Information]{Reconciling Shannon and Scott \\ with a Lattice of Computable Information}

\author{Sebastian Hunt}
\email{s.hunt@city.ac.uk}
\orcid{0000-0001-7255-4465}
\affiliation{\institution{City, University of London}
\city{London}
\country{United Kingdom}
}
\author{David Sands}
\email{dave@chalmers.se}
\orcid{0000-0001-6221-0503}
\affiliation{\institution{Chalmers University of Technology}
\city{Gothenburg}
\country{Sweden}
}
\author{Sandro Stucki}
\authornote{This publication was written while the third author was at Chalmers, prior to joining Amazon.}
\email{sastucki@amazon.com}
\orcid{0000-0001-5608-8273}
\affiliation{\institution{Amazon Prime Video}
\city{Gothenburg}
\country{Sweden}
}

\begin{abstract}
  This paper proposes a reconciliation of two different theories of information. The first, originally proposed in a lesser-known work by Claude Shannon (some five years after the publication of his celebrated quantitative theory of communication), describes how the information content of channels can be described \emph{qualitatively}, but still abstractly, in terms of \emph{information elements}, where information elements can be viewed as equivalence relations over the data source domain. Shannon showed that these elements have a partial ordering, expressing when one information element is more informative than another, and that these partially ordered information elements form a complete lattice.  In the context of security and information flow this structure has been independently rediscovered several times, and used as a foundation for understanding and reasoning about information flow.  
  
  The second theory of information is Dana Scott's domain theory, a mathematical framework for giving meaning to programs as continuous functions over a particular topology. Scott's partial ordering also represents when one element is more informative than another, but in the sense of computational progress -- i.e.\ when one element is a more defined or evolved version of another.
  
  To give a satisfactory account of information flow in computer programs it is necessary to consider both theories together, in order to understand not only what information is conveyed by a program
  (viewed as a channel, \`a la Shannon)
  but also how the precision with which that information can be observed is determined by the definedness of its encoding
  (\`a la Scott).
  To this end we show how these theories can be fruitfully combined, by 
  defining \emph{the Lattice of Computable Information} (\LOCI), a lattice of preorders rather than equivalence relations. \LOCI retains the rich lattice structure of Shannon's theory, filters out elements that do not make computational sense, and refines the remaining information elements to reflect how Scott's ordering captures possible varieties in the way that information is presented.
 
 We show how the new theory facilitates the first general definition of termination-insensitive information flow properties, a weakened form of information flow property commonly targeted by static program analyses.

\end{abstract}

\begin{CCSXML}
<ccs2012>
<concept>
<concept_id>10003752.10010124.10010138.10010143</concept_id>
<concept_desc>Theory of computation~Program analysis</concept_desc>
<concept_significance>300</concept_significance>
</concept>
<concept>
<concept_id>10003752.10010124.10010131.10010133</concept_id>
<concept_desc>Theory of computation~Denotational semantics</concept_desc>
<concept_significance>300</concept_significance>
</concept>
<concept>
<concept_id>10002978.10003006.10011608</concept_id>
<concept_desc>Security and privacy~Information flow control</concept_desc>
<concept_significance>500</concept_significance>
</concept>
</ccs2012>
\end{CCSXML}

\ccsdesc[300]{Theory of computation~Program analysis}
\ccsdesc[300]{Theory of computation~Denotational semantics}
\ccsdesc[500]{Security and privacy~Information flow control}
\keywords{Information Flow, Semantics}

\maketitle

\section{Introduction}
\label{sec:intro}

\textbf{Note to Reader:} this paper is not about information theory \cite{Shannon:Mathematical:1948}, but about a theory of information \cite{Shannon:Lattice}.

\subsection{What is the Information in Information Flow?}

The study of information flow is central to understanding many properties of computer programs, and in particular for certain classes of confidentiality and integrity properties.  In this paper we are concerned with providing a better semantic foundation for studying information flow.  

The starting point for understanding information flow is to understand information itself.  Shannon's celebrated theory of information \cite{Shannon:Mathematical:1948}
naturally comes to mind, but Shannon's theory is a theory about \emph{quantities} 
of information, and purposefully abstracts from the information itself.  In a relatively obscure paper\footnote{With around 150 citations, a factor of 1000 fewer than his seminal work on information theory \cite{Shannon:Mathematical:1948} (source: Google Scholar); according to \citet{Rioul+:22}, all but ten of these actually intended to cite the 1948 paper.},
\citet{Shannon:Lattice}
himself notes:
\begin{quotation}
$\ldots$  $H(X)$  [the entropy of a channel $X$] can hardly be said to represent the actual information. Thus, two entirely different sources might produce
information at the same rate (same $H$) but certainly they are not producing the
same information.
\end{quotation}
Shannon goes on to introduce the term \emph{information elements} to denote the information itself.  The concept of an information element can be derived by considering some channel -- a random variable in Shannon's world, but we can think of it as simply a function $f$ from a ``source'' domain to some ``observation'' codomain -- and asking what information does $f$ produce about its input.  Shannon's idea was to view the information itself as the set of functions which are equivalent, up to bijective postprocessing, with $f$, i.e.\ $\setdef{b \circ f}{\text{$b$ is bijective on the range of $f$}}$ -- in other words, all the alternative ways in which the information revealed by $f$ might be faithfully represented. 

Shannon observed that information elements have a natural partial ordering, reflecting when one information element is subsumed by (represents more information than) another, and that any set of information elements relating to a common information source domain can be completed into a \emph{lattice}, with a least-upper-bound representing any information-preserving combination of information elements, and a greatest-lower-bound, representing the common information shared by two information elements, thus providing the title of Shannon's note: ``A Lattice Theory of Information''.  Shannon observes that any such lattice of information over a given source domain can be embedded into a general and well-known lattice, namely the lattice of equivalence relations over that source domain \cite{Ore:Theory}. In fact, the most precise lattice of information for a given domain, i.e.\ the one containing all information elements over that domain, is isomorphic to the lattice of equivalence relations over that domain.  In the remainder of this paper will think in terms of the most precise lattice of information for any given domain. 

This lattice structure, independently dubbed \emph{the lattice of information} by \citet{Landauer:Redmond:CSFW93}, can be used in a uniform way to phrase a large variety of interesting information flow questions, from simple confidentiality questions (is the information in the public output channel of a program no greater than in the public input data?), to arbitrarily fine-grained, potentially conditional information flow policies.  
The lattice of information, described in more detail in \S\ref{sec:loi}, is the starting point of our study. 

\subsection{Shortcomings of the Lattice of Information}
\label{sec:shortcomings}
The lattice of information provides a framework for reasoning about a variety of information flow properties in a uniform way. It is natural in this approach, to view programs as functions from an input domain to some output domain.  But this is where we hit a shortcoming in the lattice of information: program behaviours may be \emph{partial}, ranging from simple nontermination, to various degrees of partiality when modelling structured outputs such as streams.  While these features can be modelled in a functional way using domain theory (see e.g.~\cite{AbramskyJ94book}) the lattice of information is oblivious to the distinction between degrees of partiality. 

Towards an example, consider the following two Haskell functions: 
\begin{lstlisting}[belowskip=5pt]
    parity1 x = if even x then 1 else 0
\end{lstlisting}
\begin{lstlisting}    
    parity2 x = if even x then "Even" else "Odd"
\end{lstlisting}
Even though these two functions have different codomains, intuitively they release the same information about their argument, albeit encoded in different ways.  In Shannon's view they represent the same information element. The information released by a function $f$ can be represented simply by its \emph{kernel} -- the smallest equivalence relation that relates two inputs whenever they get mapped to the same output by $f$.  It is easy to see that the two functions above have the same kernel. 

What about programs with partial behaviours?  A natural approach is to follow the denotational semantics school, and model nontermination as a special ``undefined'' value, $\bot$, and more generally to capture nontermination and partiality via certain families of partially ordered sets (\emph{domains} \cite{AbramskyJ94book}) and to model programs as continuous functions between domains.  
Consider this example: 
\begin{lstlisting}
    parity0 x = if even x then 1 else parity0 x
\end{lstlisting}
Here the program returns \lstinline{1} if the input is even, and fails to terminate otherwise. The kernel of (the denotation of) this function is the same as the examples above, which means that it is considered to reveal the same amount of information.  But intuitively this is clearly not the case: \lstinline!parity0! provides information in a less useful form  than \lstinline!parity1!. When the input is odd, an observer of \lstinline!parity0! will remain in limbo, waiting for an output that never comes, whereas an observer of \lstinline!parity1! will see the value 0 and thus learn the parity of the input. The two are only equivalent from Shannon's perspective if we allow \emph{uncomputable} postprocessors.
(Of course, we are abstracting away entirely from timing considerations here. This is an intrinsic feature of the denotational model, and a common assumption in security reasoning.) 

Intuitively, \lstinline!parity0! provides information which is consistent with \lstinline{parity1}, but the ``quality'' is lower, since some of the information is encoded by nontermination.

Now consider programs \lstinline!A! and \lstinline!B!,
where the input is the value of variable \lstinline{x} and the output domain is a channel on which multiple values may be sent.
Program \lstinline!A! simply outputs the absolute value of \lstinline{x}.
Program \lstinline!B! outputs the same value but in unary, as a sequence of outputs, then silently diverges.

\begin{wrapfigure}[10]{r}{0.3\textwidth}\begin{pseudocode}
A:  output(abs(x))

B:  y := abs(x);
    for i := 1 to y {
      output ()
    };
    while True { };
\end{pseudocode}
\end{wrapfigure}Just as in the previous example, \lstinline{A} and \lstinline{B} compute functions
which have the same kernel, so in the lattice of information
 they are equivalent.
 But consider what we can actually deduce from \lstinline{B} after observing $n$ output events:
 we know that the absolute value of \lstinline{x} is some value $\geq n$,
 but we cannot infer that it is exactly $n$, since we do not know whether there are more outputs yet to come, or if the program is stuck in the final loop. By contrast, as soon as we see the output of A, we know with certainty the absolute value of \lstinline{x}.
In summary, the lattice of information fails to take into account that information can be encoded at different degrees of definedness\footnote{Here we have drawn, albeit very informally, on foundational ideas developed by \citet{Smyth:Powerdomains}, \citet{Abramksy:Thesis,Abramsky:DomainTheoryLogical} and \citet{Vickers:TopologyLogic}, which reveal deep connections between domain theory, topology and logics of observable properties.}.

A second shortcoming addressed in this paper, again related to the lattice of information's unawareness of nontermination, is its inability to express, in a general non ad hoc way, a standard and widely used weakening of information flow properties to the so-called \emph{termination-insensitive} properties \cite{Sabelfeld:Sands:PER,Sabelfeld:Myers:Language}. (When considering programs with stream outputs, they are also referred to as \emph{progress-insensitive} properties \cite{Askarov:Sabelfeld:Tight}.)  These properties are weakenings of information flow policies which ignore any information which is purely conveyed by the definedness of the output (i.e.\ termination in the case of batch computation, and progress in the case of stream-based output).

\subsection{Contributions}

\para{Contribution 1: A refined lattice of information} 
In this paper we present a new abstraction for information, the \emph{Lattice of Computable Information} (\LoCI),  which reconciles Shannon's lattice of information with Scott's domain ordering (\Sec{loci}).  It does so by moving from a lattice of equivalence relations to a lattice of preorder relations, where the equivalence classes of the preorder reflect the ``information elements'', and the ordering between them captures the distinction in quality of information that arises through partiality and nontermination (the Scott ordering).
Just as with the lattice of information, \LoCI induces an information ordering relation on functions; in this ordering, \verb!parity0! is less than \verb!parity1!, but \verb!parity1! is still equivalent to \verb!parity2!.  
Similarly programs $A$ and $B$ above are related but not equivalent. 
We show that \LoCI is, like the lattice of information, well behaved with respect to various composition properties of functions. 

\para{Contribution 2: A generalised definition of termination-insensitive noninterference}
The lattice of computable information gives us the ability to make finer distinctions about information flow with respect to progress and termination.  By modelling this distinction we also have the ability to systematically ignore it; this provides the first uniform generalisation of the definition of termination-insensitive information flow properties (\Sec{tini}).

The remainder of the paper begins with a review of the lattice of information (\Sec{loi}), which is followed by our refinement  (\Sec{loci}), the treatment of termination-insensitivity (\Sec{tini}), 
a discussion of related work (\Sec{related}), and some directions for further work (\Sec{future}).

\section{The Lattice of Information}
\label{sec:loi}

The lattice of information is a way to abstract the information about a data source $D$ which might be revealed by various functions over that data. 
Mathematically, it is simply the set of equivalence relations over $D$, ordered by reverse inclusion, a structure that forms a \emph{complete lattice} \cite{Ore:Theory}, i.e.\ every set of elements in the lattice has a least upper bound and a greatest lower bound.  The lattice of information has been rediscovered in several contexts relating to information and information flow, e.g.\ using partial equivalence relations (PERs) \cite{Hunt:PhD,Hunt:Sands:PER}.  Here we use the terminology from \citet{Landauer:Redmond:CSFW93} who call it the \emph{lattice of information}. 

To introduce the lattice of information let us consider a simple set of values
\[ D = \set{\text{Red}, \text{Orange}, \text{Green}, \text{Blue} } \]
and the following three functions over $D$: \begingroup
\allowdisplaybreaks
\begin{gather*}
\text{isPrimary}(c) = \begin{cases}
  \text{True} & \text{if $c \in \set{\text{Red}, \text{Blue}}$} \\
  \text{False} & \text{otherwise}
\end{cases}
\qquad \quad
\text{isTrafficLight}(c) = \begin{cases}
  \text{False} & \text{if $c = \text{Blue}$} \\
  \text{True} & \text{otherwise}
\end{cases}
\\
\text{primary}(c) = \begin{cases}
  \text{``The primary colour red''} & \text{if $c = \text{Red}$} \\
  \text{``The primary colour blue''} & \text{if $c = \text{Blue}$} \\
  \text{``Not a primary colour''} & \text{otherwise}
\end{cases}
\end{gather*}
\endgroup
Now consider the information that each of these functions reveals about its input:  
\emph{isPrimary} and \emph{isTrafficLight} reveal incomparable information about their inputs -- for example we cannot define either one of them by postprocessing the result of the other. 
The function \emph{primary}, however, not only subsumes both of them, but represents \emph{exactly} the information that the pair of them together reveal about the input, nothing more, nothing less. The lattice of information (over $D$) makes this precise by representing the information itself as an equivalence relation on the elements of $D$.
Elements that are equivalent for a given relation are elements which we can think of as indistinguishable.  
\begin{definition}[Lattice of Information]
For a set $D$, the lattice of information over $D$, $\LOI(D)$, is defined to be the lattice \[
\LOI(D) = ( \ER(D), \LOIleq, \LOIjoin, \LOImeet )
\]
where $\ER(D)$ is the set of all equivalence relations over $D$,
$P \LOIleq Q \eqdef Q \subseteq P$,  the join operation $\LOIjoin$ is set intersection of relations,
    and the meet, $\LOImeet$, is the transitive closure of the set-union of relations. 
\end{definition}
Note that $\LOI(D)$ is a complete lattice (contains all joins and meets, not just the binary ones) \cite{Ore:Theory}. The top element of $\LOI(D)$ is the identity relation on $D$, which we write as $\Id_D$, or just $\Id$ when $D$ is clear from context; the bottom element is the relation which relates every element to every other element, which we write as $\All_D$, or just $\All$. 

In the above definitions we consider equivalence relations to be sets of pairs of elements of $D$.  Another useful way to view equivalence relations is as partitions of $D$ into disjoint blocks (equivalence classes). 
Given an equivalence relation $P$ on a set $D$ and an element $a \in D$, let $[a]_P$ denote the (necessarily unique) equivalence
class of $P$ which contains $a$. Let $[P]$ denote the set of all equivalence classes of $P$. Note that $[P]$ is a partition of $A$. 
\begin{figure}[htb]
    \centering
    \includegraphics[width=0.5\textwidth]{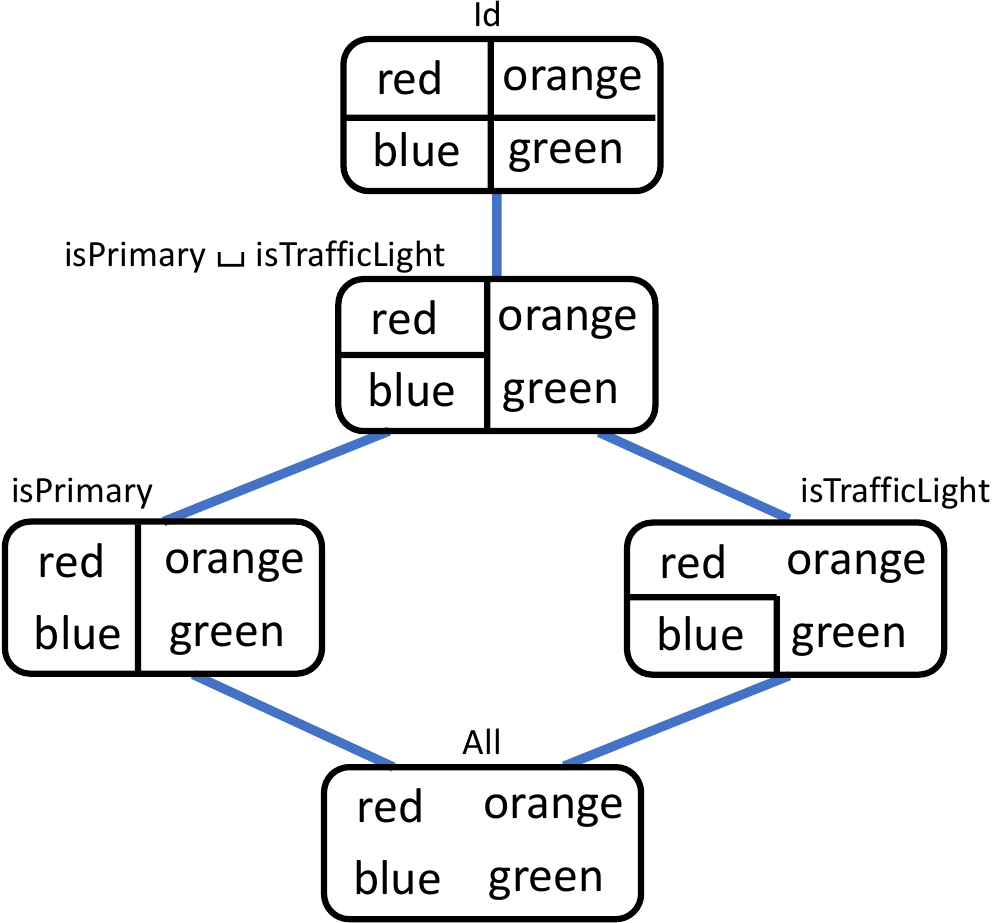}
\caption{An Example Sublattice of the lattice of Information over $\setof{\text{Red}, \text{Orange}, \text{Green}, \text{Blue}}$}
    \label{fig:LOI}
\end{figure}

In \Fig{LOI} we present a Hasse diagram of a sublattice of the lattice of information containing  the points representing the information provided by the functions above, and visualising the equivalence relations by representing them as partitions.  Note that with the partition view, the ordering relation is partition refinement.  The full lattice $\LOI\setof{\text{Red}, \text{Orange}, \text{Green}, \text{Blue}}$ contains 15 elements (known in combinatorics as the \emph{$4^{th}$ Bell number}).

\subsection{The Information Ordering on Functions}
\label{sec:info-order}
To understand the formal connection between the functions and the corresponding information that they release, we use the well-known notion of the \emph{kernel}
of a function: We recall that the \emph{kernel} of a function $f: D \rightarrow E$ is the equivalence relation $\ker(f) \in \LOI(D)$ which relates all elements mapped by $f$ to the same result:
$a \mathrel{\ker(f)} b$ iff $f(a) = f(b)$.

Thus the points illustrated in the lattice do indeed correspond to the respective kernels of the functions, and it can readily be seen that $\ker(\text{primary}) = \ker(\text{isPrimary}) \LOIjoin \ker(\text{isTrafficLight})$.

Note that taking kernels induces an information preorder on any functions $f$ and $g$ which have a common input domain (we write $\dom(f) = \dom(g)$), namely $f \infoleq g \iff \ker(f) \LOIleq \ker(g)$, i.e.\ $g$ reveals at least as much information about its argument as $f$.

Note that this information ordering between functions can be characterised in a number of ways.  
\begin{proposition}
\label{prop:postprocessing}
For any functions $f$ and $g$ such that $\dom(f) = \dom(g)$ the following are equivalent:
\begin{enumerate}
    \item $f \infoleq g$
    \item $\setdef{p \circ f }{ \codom(f) = \dom(p)} \subseteq 
           \setdef{p\circ g }{ \codom(g) = \dom(p)} $ (where $\codom(f)$ is the codomain of function $f$)
    \item There exists $p$ such that $f = p \circ g$
\end{enumerate}
\end{proposition}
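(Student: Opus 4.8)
The plan is to prove the three statements equivalent by establishing the cycle $(1)\Rightarrow(3)\Rightarrow(2)\Rightarrow(1)$. First I would unfold the definitions: by the definition of the information preorder, $(1)$ says $\ker(f) \LOIleq \ker(g)$, which under the reverse-inclusion ordering of $\LOI$ means $\ker(g) \subseteq \ker(f)$; concretely, whenever $g(a) = g(b)$ we also have $f(a) = f(b)$. With this reformulation in hand, the only step carrying real content is $(1)\Rightarrow(3)$, the explicit construction of a postprocessor $p$ with $f = p \circ g$; the remaining two implications are routine manipulations of function composition.

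For $(1)\Rightarrow(3)$ I would define $p$ on the codomain of $g$ as follows. For each value $y$ in the image of $g$, choose some $a$ with $g(a) = y$ and set $p(y) \defeq f(a)$; for $y$ outside the image of $g$, set $p(y)$ to be any fixed element of $\codom(f)$. The key verification is that $p$ is well defined on the image of $g$: if $g(a) = g(a') = y$ then by the kernel inclusion $f(a) = f(a')$, so the value $p(y)$ does not depend on the chosen representative. It then follows immediately that for every $a \in \dom(g)$ we have $(p \circ g)(a) = p(g(a)) = f(a)$, hence $f = p \circ g$, which is $(3)$.

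For $(3)\Rightarrow(2)$, given $f = p \circ g$, any element of the left-hand set has the form $q \circ f$ with $\dom(q) = \codom(f)$; then $q \circ f = q \circ (p \circ g) = (q \circ p) \circ g$, and since $\dom(q \circ p) = \codom(g)$ this exhibits $q \circ f$ as a member of the right-hand set, proving the inclusion. For $(2)\Rightarrow(1)$, taking $q = \id[\codom(f)]$ shows that $f = \id[\codom(f)] \circ f$ lies in the left-hand set, so by $(2)$ it lies in the right-hand set, i.e. $f = p \circ g$ for some $p$; then $g(a) = g(b)$ implies $f(a) = p(g(a)) = p(g(b)) = f(b)$, giving $\ker(g) \subseteq \ker(f)$ and hence $(1)$.

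I expect the main (though minor) obstacle to be precisely the well-definedness argument in $(1)\Rightarrow(3)$, together with the attendant edge case: choosing representatives uses a choice principle, and defining $p$ outside the image of $g$ requires $\codom(f)$ to be nonempty. The latter fails only when $\dom(f) = \dom(g) = \emptyset$, in which case $f$ and $g$ are both the empty function and all three statements hold vacuously, so I would dispatch this case separately at the outset. No other step poses any difficulty.
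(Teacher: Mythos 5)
Your proof is correct: the cycle $(1)\Rightarrow(3)\Rightarrow(2)\Rightarrow(1)$ is sound, the well-definedness of the postprocessor $p$ on the image of $g$ (via the kernel inclusion $\ker(g) \subseteq \ker(f)$) is exactly where the content lies, and your explicit treatment of the choice principle and the empty-domain edge case is harmless extra care. The paper states this proposition without proof, treating it as routine, and your argument is precisely the standard one the authors leave implicit.
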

The proposition essentially highlights the fact that the information ordering on functions can alternatively be understood in terms of postprocessing (the function $p$). The set $\setdef{p \circ f }{ \codom(f) = \dom(p)}$ can be viewed as all the things which can be computed from the result of applying $f$. 

\subsection{An Epistemic View}
\label{sec:epistemic}
In our refinement of the lattice of information we will lean on an \emph{epistemic} characterisation of the function ordering which focuses on the facts which an observer of the output of a function might learn about its input. 
\begin{definition}\label{def:knowledge-set}
  For $f: A \rightarrow B$ and $a \in A$, define the $f$-\emph{knowledge set} for $a$ as:
\[
\kset{f}(a) = \setdef{ a' \in A }{ f(a) = f(a') }
\]
\end{definition}
The knowledge set for an input $a$ is thus what an observer who knows the function $f$ can maximally deduce about the input if they only get to observe the result, $f(a)$. 
For example, $\kset{\text{primary}}(\text{Green}) = \setdef{c}{\text{primary}(c) = \text{primary}(\text{Green})} =  \set{\text{Green},\text{Orange}}$. 
Note that although we use the terminology ``knowledge'', following work on the semantics of dynamic security policies \cite{Askarov:Sabelfeld:Gradual,Askarov:Chong:Learning}, it is perhaps more correct to think of this as \emph{uncertainty} in the sense that a smaller set corresponds to a more precise deduction.  The point here is that  $f \infoleq g$ can be characterised in terms of knowledge sets: 
$g$ will produce knowledge sets which are at least as precise as those  of $f$:
\begin{proposition}\label{prop:knowledge-set}
Let $f$ and $g$ be any two functions with domain $A$.
Then $f \infoleq g$ iff $\kset{g}(a) \subseteq \kset{f}(a)$ for all $a \in A$.
\end{proposition}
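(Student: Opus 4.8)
The plan is to unfold both sides of the stated equivalence into elementary conditions on the relations $\ker(f)$ and $\ker(g)$, and to observe that the two unfoldings coincide. The key observation is that the knowledge set $\kset{f}(a)$ is precisely the equivalence class of $a$ under $\ker(f)$: we have $a' \in \kset{f}(a)$ exactly when $f(a) = f(a')$, which is the defining condition for $a \mathrel{\ker(f)} a'$. Thus the proposition asserts that $\ker(f) \LOIleq \ker(g)$ holds iff the $\ker(g)$-class of every point is contained in its $\ker(f)$-class.

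First I would unfold the left-hand side. By the definition of the information preorder on functions, $f \infoleq g$ means $\ker(f) \LOIleq \ker(g)$, and by the (reverse-inclusion) definition of the lattice order, this is in turn equivalent to $\ker(g) \subseteq \ker(f)$ as sets of pairs. Spelled out pointwise, $\ker(g) \subseteq \ker(f)$ says that for all $a, a' \in A$, whenever $g(a) = g(a')$ then $f(a) = f(a')$.

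Next I would unfold the right-hand side. Using the identification of knowledge sets with equivalence classes, the condition $\kset{g}(a) \subseteq \kset{f}(a)$ for all $a$ says exactly that for all $a, a'$, if $a' \in \kset{g}(a)$ then $a' \in \kset{f}(a)$ — that is, if $g(a) = g(a')$ then $f(a) = f(a')$. This is word-for-word the pointwise rendering of $\ker(g) \subseteq \ker(f)$ obtained above, so the two sides are equivalent and the proof is complete.

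The only point requiring care — and the single place where a direction could slip — is the double reversal: the lattice order $\LOIleq$ is reverse inclusion, so the more informative function $g$ corresponds to the \emph{smaller} relation $\ker(g)$ and hence to the \emph{smaller}, more precise knowledge sets. I would make this explicit to confirm that the containment indeed points the way claimed. One may further note that translating relation inclusion into pointwise class inclusion is valid for arbitrary relations and does not even invoke reflexivity, since both formulations simply quantify over the same pairs $(a,a')$; no genuine obstacle arises, and the result is a direct consequence of the definitions.
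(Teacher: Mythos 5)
Your proof is correct: the paper states this proposition without proof, treating it as immediate from the definitions, and your unfolding — identifying $\kset{f}(a)$ with the $\ker(f)$-equivalence class of $a$ and translating $\ker(g) \subseteq \ker(f)$ into pointwise class inclusion — is exactly the routine argument the paper implicitly relies on. Your explicit attention to the double reversal induced by the reverse-inclusion order $\LOIleq$ is well placed, since that is the one spot where the direction could slip.
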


\subsection{Information Flow and Generalised Kernels}
Although we can understand the information released by a function by considering its kernel as an element of the lattice of information, for various reasons it is useful to generalise this idea. The first reason is that we are often interested in understanding the information flow through a function when just a part of the function's output is observed.  For example, if we want to know whether a function is secure, this may require verifying that the public parts of the output reveal information about at most the non-secret inputs.  The second reason to generalise the way we think about information flow of functions is to build compositional reasoning principles.  Suppose that we know that a function $f$ reveals information $P$ about its input. Now suppose that we wish to reason about $f \circ g$.  In order to make use of what we know about $f$ we need to understand the information flow of $g$ when the output is ``observed'' through $P$.  This motivates the following generalised information flow definition (the specific notation here is taken from \cite{Hunt:PhD,Sabelfeld:Sands:PER}, but we state it for arbitrary binary relations \`a la logical relations  \cite{Reynolds:1983}):
\begin{definition}
\label{def:TwoRunsParrowQ}
Let $P$ and $Q$ be binary relations on sets $A$ and $B$, respectively.
Let $f: A \rightarrow B$.

Define:
\[
  f: P \ERarrow Q \iff
  \forall a, a'.
  ({a \mathrel{P} a'} \quad \text{implies} \quad {{f(a)} \mathrel{Q} {f(a')}})
\]
\end{definition}
When $P$ and $Q$ are equivalence relations, these definitions describe information flow properties of $f$ where $P$ describes an upper bound on what can be learned about the input when observing the output ``through'' $Q$ (i.e.\ we cannot distinguish $Q$-related outputs). 

We can read $f: P \ERarrow Q$ as an information flow typing at the semantic level.  As such it can be seen to enjoy natural composition and subtyping properties.  Again, we state these in a more general form as we will reuse them for different kinds of relation:
\begin{fact}
\label{fact:sub-comp}
The following inference rules are valid for all functions and binary relations of appropriate type:\[
\infer[\emph{Sub}]{\;P' \subseteq P \;\;\; f: P \ERarrow Q \;\;\; Q \subseteq Q'\;}
{\;f: {P'} \ERarrow {Q'}\;}
\hspace{3em}
\infer[\emph{Comp}]{\;f: P \ERarrow Q \;\;\; g: Q \ERarrow R\;}
{\;g \circ f: {P} \ERarrow {R}\;}
\]
When these relations are elements of the lattice of information, the conditions
$P' \subseteq P$ and $Q \subseteq Q'$ in the Sub-rule amount to
$P' \LOIgeq P$ and  $Q \LOIgeq Q'$,
respectively.

\end{fact}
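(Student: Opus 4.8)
The plan is to prove both rules by directly unfolding \Def{TwoRunsParrowQ}: each is a one-step monotonicity (for \emph{Sub}) or transitivity (for \emph{Comp}) argument at the level of related pairs. Neither rule uses any special structure of $P,Q,R$, so I would establish them for arbitrary binary relations exactly as the Fact is stated, and treat the lattice-of-information clause as a separate closing remark.

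For the \emph{Sub} rule I would fix arbitrary $a,a'$ with $a \mathrel{P'} a'$ and chase them through the three hypotheses in order: $P' \subseteq P$ upgrades this to $a \mathrel{P} a'$; the typing hypothesis $f: P \ERarrow Q$ then yields $f(a) \mathrel{Q} f(a')$; and finally $Q \subseteq Q'$ weakens this to $f(a) \mathrel{Q'} f(a')$. As $a,a'$ were arbitrary, this is precisely $f: P' \ERarrow Q'$ by the definition.

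For the \emph{Comp} rule I would fix $a,a'$ with $a \mathrel{P} a'$ and interlock the two typings through the shared set $B = \codom(f) = \dom(g)$: from $f: P \ERarrow Q$ we get $f(a) \mathrel{Q} f(a')$, and applying $g: Q \ERarrow R$ to the pair $(f(a),f(a'))$ gives $g(f(a)) \mathrel{R} g(f(a'))$, i.e.\ $(g \comp f)(a) \mathrel{R} (g \comp f)(a')$. The only point needing care is the composition convention --- here $g \comp f$ applies $f$ first --- so that the intermediate relation $Q$ on $B$ is exactly what makes the two hypotheses compose; getting this direction backwards is the sole way the argument could go wrong.

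For the closing clause, the lattice translation is a one-line check against $P \LOIleq Q \eqdef Q \subseteq P$: the side condition $P' \subseteq P$ reads as $P \LOIleq P'$, i.e.\ $P' \LOIgeq P$, and symmetrically $Q \subseteq Q'$ reads as $Q \LOIgeq Q'$. Since the whole argument is a mechanical unfolding, I expect no genuine obstacle; the only place to stay alert is the reversed inclusion underlying $\LOIleq$, which flips the apparent direction of the subset side conditions when they are reread as lattice inequalities.
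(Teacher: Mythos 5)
Your proof is correct and matches the paper's intent exactly: the paper states this as a \emph{Fact} with no written proof, treating both rules as immediate consequences of unfolding \Def{TwoRunsParrowQ}, which is precisely your element-chasing argument (including the correct convention that $g \circ f$ applies $f$ first, and the correct reversal of inclusions under $\LOIleq$). Nothing is missing and nothing differs in substance.
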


Information flow properties also satisfy weakest precondition and strongest postcondition-like properties.
To present these, we start by generalising the notion of kernel of a function:
\begin{definition}[Generalised Kernel]
  Let $\Rel(A)$ denote the set of all binary relations on a set $A$.
  For any $f: A \to B$, define
  $f^\ast: \Rel(B) \to \Rel(A)$ as follows:
  \[ {x \mathrel{f^\ast(R)} y} \iff {f(x) \mathrel{R} f(y)} \]
  We call this the \emph{generalised kernel map}, since $\ker(f) = f^\ast(\Id)$.
\end{definition}
Now, it is evident that $f^\ast$ preserves reflexivity, transitivity and symmetry,
so restricting $f^\ast$ to equivalence relations immediately yields a well defined map in \LoI
(\citet{Landauer:Redmond:CSFW93} use the notation $f\#$ for this map).
Moreover, we can define a partner $f_!$, which operates in the opposite direction and has dual properties (as formalised below):
\begin{definition}
\label{def:wp-sp-LoI}
  For $f : A \to B$:
  \begin{enumerate}
    \item $f^\ast : \LoI(B) \to \LoI(A)$ is the restriction of the generalised kernel map to $\LoI(B)$.
    \item $f_! : \LoI(A) \to \LoI(B)$ is given by
    ${f_!(P)} = {\sLOIbigJoin \setdef{Q \in \LoI}{f : P \ERarrow Q}}$.
\end{enumerate}
\end{definition}
Note that we are overloading our notation here, using $f^\ast$ for both the map on $\REL$ and its restriction to $\LoI$. Later, in \S\ref{sec:infoLoCI}, we overload it again (along with $f_!$).
Our justification for this overloading is that in each case these maps are doing essentially the same thing\footnote{This can be made precise, categorically. See \S\ref{sec:categorical}.}:
$f^\ast(Q)$ is the \emph{weakest precondition} for $Q$ (i.e.
the smallest $P$ such that $f: P \ERarrow Q$)
while $f_!(P)$ is the \emph{strongest postcondition} of $P$
(i.e.\ the largest $Q$ such that $f: P \ERarrow Q$),
where ``smallest'' and ``largest'' are interpreted within the relevant lattice
(\LoI here, our refined lattice \LoCI later).
The following proposition formalises this for $\LoI$
(see Proposition~\ref{prop:wp-sp-LoCI} for its \LoCI counterpart):
\begin{proposition}
\label{prop:wp-sp-LoI}
For any $f: A \to B$, $f^\ast$ and $f_!$ are monotone and,
for any $P \in \LoI(A)$ and $Q \in \LoI(B)$,
the following are all equivalent: 

\hfill 
\emph{(1)}   ${f: P \ERarrow Q}$
\hfill    \emph{(2)}    ${{f^\ast(Q)} \LOIleq P}$
 \hfill   \emph{(3)}    ${Q \LOIleq {f_!(P)}}$
\hfill \mbox{}
\end{proposition}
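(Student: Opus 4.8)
The plan is to prove the three equivalences by unwinding the definitions at the level of relations-as-sets-of-pairs, and then to read off monotonicity of the two maps. Throughout I would use that $\LOIleq$ is reverse inclusion---so that for $R, S \in \LoI$ we have $R \LOIleq S$ exactly when $S \subseteq R$---and that the join $\sLOIbigJoin$ is relational intersection.

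First I would dispatch the equivalence of (1) and (2), which is essentially immediate. Unfolding the generalised kernel, $f^\ast(Q) \LOIleq P$ says $P \subseteq f^\ast(Q)$, i.e.\ that $a \mathrel{P} a'$ implies $f(a) \mathrel{Q} f(a')$ for all $a, a'$; this is precisely $f : P \ERarrow Q$. This step needs nothing beyond the definitions and in fact holds for arbitrary binary relations; the restriction to $\LoI$ enters only through the already-noted fact that $f^\ast$ preserves equivalence relations, which guarantees $f^\ast(Q) \in \LoI(A)$.

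The equivalence of (1) and (3) is where the only real content lies, namely in showing that the supremum defining $f_!(P)$ is actually attained, i.e.\ that $f : P \ERarrow f_!(P)$. For this I would first observe that the predicate $f : P \ERarrow (-)$ is closed under arbitrary joins: if $f : P \ERarrow Q_i$ for every member of a family $\setof{Q_i}$, then for $a \mathrel{P} a'$ we get $f(a) \mathrel{Q_i} f(a')$ for all $i$, hence $f(a) \mathrel{(\bigcap_i Q_i)} f(a')$, i.e.\ $f : P \ERarrow \bigcap_i Q_i$; since the $\LoI$-join coincides with intersection, the join of any such family again satisfies the predicate. Applying this to the family $\setdef{Q}{f : P \ERarrow Q}$ yields $f : P \ERarrow f_!(P)$, so $f_!(P)$ is genuinely the largest (strongest) postcondition. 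Given this, (1)$\Rightarrow$(3) is immediate, since any $Q$ with $f : P \ERarrow Q$ lies in the joined family and is therefore below the join: $Q \LOIleq f_!(P)$. For (3)$\Rightarrow$(1), from $Q \LOIleq f_!(P)$ together with $f : P \ERarrow f_!(P)$, the \emph{Sub} rule of Fact~\ref{fact:sub-comp} (weakening the postcondition) delivers $f : P \ERarrow Q$.

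Finally, the biconditional $f^\ast(Q) \LOIleq P \iff Q \LOIleq f_!(P)$ obtained by chaining (2)$\Leftrightarrow$(1)$\Leftrightarrow$(3) exhibits $(f^\ast, f_!)$ as a monotone Galois connection ($f^\ast$ lower, $f_!$ upper) between $\LoI(B)$ and $\LoI(A)$, so that both maps are monotone by the standard adjunction argument; for $f^\ast$ this is also a one-line direct check, since $Q' \subseteq Q$ pulls back to $f^\ast(Q') \subseteq f^\ast(Q)$. I expect the attainment of the supremum, $f : P \ERarrow f_!(P)$, to be the only step requiring care; everything else is definitional unwinding or a direct appeal to Fact~\ref{fact:sub-comp}.
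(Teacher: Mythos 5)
Your proposal is correct, and the paper in fact states Proposition~\ref{prop:wp-sp-LoI} without proof, treating it as routine; your argument is precisely the standard one it implicitly relies on. In particular you correctly isolate the one step with content---that the family $\setdef{Q}{f : P \ERarrow Q}$ is closed under intersection (which is the \LoI join, and always contains $\All$), so the supremum defining $f_!(P)$ is attained and $f : P \ERarrow f_!(P)$ holds---after which the equivalences and the monotonicity of $f^\ast$ and $f_!$ follow by definitional unwinding, the Sub rule of Fact~\ref{fact:sub-comp}, and the standard Galois-connection argument.
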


We have summarised a range of key properties of the lattice of information that make it useful for both formulating a wide variety of information flow properties, as well as proving them in a compositional way. An important goal in refining the lattice of information will be to ensure that we still enjoy properties of the same kind. 
\typeout{End of section ###################################}

\section{LoCI: The Lattice of Computable Information}
\label{sec:loci}

Our goal in this section is to introduce a refinement of noninterference which accounts for the difference in quality of knowledge that arises from nontermination, or more generally partiality, for example when programs produce output streams that may at some point fail to progress. We will assume that a program is modelled in a domain-theoretic denotational style, as a continuous function between partially ordered sets. In this setting, the order relation on a set of values models their relative degrees of ``definedness''.
Simple nontermination is modelled as a bottom element, $\bot$, and in general the ordering relation reflects the evolution of computation. Following Scott, the pioneer of this approach, when one element $d$ is dominated by another $e$, one can think of $e$ as containing \emph{more information} than $d$.  In the domain-theoretic view, a partial element is not a concrete observation or outcome, but a degree of knowledge about a computation.
In this sense $\bot$ represents no knowledge -- you do not fully observe a nonterminating computation, it may still evolve into some more defined result.
Note how this view emphasises how we are abstracting away from time. 
This also explains the basic requirement that all functions (which will be the denotation of programs) are monotone: if you know more about the input (in Scott's sense) you know more about the output.
In domain theory (a standard reference is \cite{AbramskyJ94book}) one restricts attention to
some subclass of well-behaved partially ordered sets (the \emph{domains} of the theory),
in order that recursive computations may be given denotations as least fixpoints.
Being well-behaved in this context entails the existence of suprema of directed sets (and usually a requirement that the domain has a finitary presentation in terms of its compact elements).
In this paper we keep our key definitions as general as possible by stating them for arbitrary partially ordered sets, but still requiring that the functions under study are continuous (preserve directed suprema when they exist).
We expect that some avenues for future work may require additional structure
to be imposed (see \S\ref{sec:future}).

\subsection{Order-Theoretic Preliminaries}

A \emph{partial order} on a set $A$ is a reflexive, transitive and antisymmetric relation on $A$.
A \emph{poset} is a pair $(A, {\poleq}_A)$ where $\poleq$ is a partial order on $A$.
We typically elide the subscript on ${\poleq}_A$ when $A$ is clear from the context.

The \emph{supremum} of a subset $X \subseteq A$, if it exists, is the least upper bound $\biglub X$ with respect to $\poleq_A$.

A set $X \subseteq A$ is \emph{directed} if $X$ is non-empty and, for all $x_1 \in X, x_2 \in X$, there exists $x' \in X$ such that $x_1 \poleq x'$ and $x_2 \poleq x'$.
For posets $A$ and $B$, a function $f: A \rightarrow B$ is \emph{monotone} iff $a \poleq a'$ implies $f(a) \poleq f(a')$.
A function $f:  A \rightarrow B$ is \emph{Scott-continuous} if,
for all $X$ directed in $A$,
whenever $\biglub X$ exists in $A$ then $\biglub f(X)$ exists in $B$ and is equal to $f(\biglub X)$.
From now on we will simply say continuous when we mean Scott-continuous.
Note:
\begin{enumerate}
    \item Continuity implies monotonicity because $a \poleq a'$ implies both that $\{a,a'\}$ is directed and that $\biglub\{a,a'\} = a'$, while $\biglub\{f(a), f(a')\} = f(a')$ implies $f(a) \poleq f(a')$.
    \item Monotonicity in turn implies that, if $X$ is directed in $A$, then $f(X)$ is directed in $B$.
\end{enumerate}
\begin{notation}
In what follows, we write $f \in [A \to B]$ as a shorthand to mean both that $A$ and $B$ are posets and that $f$ is continuous.
\end{notation}

\subsection{Ordered Knowledge Sets}
Our starting point is the epistemic view presented in \S\ref{sec:epistemic}. Recall that we defined the $f$-knowledge set for an input $a$ to be the set $\setdef{a' }{ f(a') = f(a) }$, which is what we learn by observing the output of $f$ when the input is $a$.
However, as discussed in the introduction to this section, in a domain-theoretic setting, observation of a
\emph{partial} output should be understood as provisional: there may be more to come. This requires us to modify the definition of knowledge set accordingly. What we learn about the input when we see a partial output is that the input could be anything which produces that observation,
\emph{or something greater}. Hence:
\begin{definition}\label{def:ordered-knowledge-set}
For $f \in  [A \to B]$ and $a \in A$, define the \emph{ordered $f$-knowledge set} for $a$ as:
\[
\okset{f}(a) = \setdef{ a' \in A }{ f(a) \poleq f(a') }
\]
\end{definition}
Recall (Proposition~\ref{prop:knowledge-set}) that the LoI preorder on functions
has an alternative characterisation in terms of knowledge sets: the kernel of $g$ is a refinement of (i.e.\ more discriminating than) the kernel of $f$ just when each knowledge set of $g$ is a subset of (i.e.\ more precise than) the corresponding knowledge set of $f$.
Unsurprisingly however, if we compare continuous functions based on their \emph{ordered} knowledge sets, the correspondence with LoI is lost.
Consider the examples \texttt{parity0} and \texttt{parity1} from \S\ref{sec:shortcomings}.
We can model these as functions $f_0, f_1 \in  [Z \rightarrow Z_\bot]$,
where $Z$ is discretely ordered (the partial order is just equality)
and the lifting $Z_\bot$ adds a new element $\bot$ which is $\poleq$ everything.
We have:
\begin{center}
    $f_0(x) = \left\{ \begin{array}{cl}
            1       & \mbox{ if } x \mbox{ is even} \\
            \bot    & \mbox{ if } x \mbox{ is odd}
        \end{array}\right.$
        \hspace{3em}and\hspace{3em}
    $f_1(x) = \left\{ \begin{array}{cl}
            1   & \mbox{ if } x \mbox{ is even} \\
            0   & \mbox{ if } x \mbox{ is odd}
        \end{array}\right.$
\end{center}
As discussed previously, these two functions have the same kernel and so are LoI-equivalent.
Moreover, in accordance with Proposition~\ref{prop:knowledge-set},
it is easy to see that they induce the same knowledge sets:
$\kset{f_0}(x) = \kset{f_1}(x)$ for all $x \in Z$.
However, they do not induce the same \emph{ordered} knowledge sets.
In particular, when $x$ is odd we have
$\okset{f_1}(x) = \setdef{ y \in Z }{ y \mbox{ is odd} }$
but
$\okset{f_0}(x) = Z$.
In fact, not only do the two functions induce different ordered knowledge sets, but
$f_1$ is (strictly) more informative than $f_0$, since
$\okset{f_1}(x) \subseteq \okset{f_0}(x)$ for all $x$
(and $\okset{f_1}(x) \neq \okset{f_0}(x)$ for some $x$).

Our key insight is that it is possible to define an alternative information lattice,
one which corresponds exactly with ordered knowledge sets,
by using (a certain class of) preorders, in place of the equivalence relations used in LoI.

\subsection{Ordered Kernels}
A \emph{preorder} is simply a reflexive and transitive binary relation.
Clearly, every equivalence relation is a preorder, but not every preorder is an equivalence relation.
As with equivalence relations, it is possible to present a preorder in an alternative form, as a partition rather than a binary relation, but with one additional piece of information: a partial order on the blocks of the partition. In fact, there is a straightforward 1-1 correspondence between preorders and partially ordered partitions:
\begin{enumerate}
    \item Given a preorder $Q$ on a set $A$, for each $a \in A$, define
    $\cell{a}_Q = \setdef{a' }{ {a \mathrel{Q} a'} \wedge {a' \mathrel{Q} a}}$
    and $\cell{Q} = \setdef{ \cell{a}_Q }{ a \in A }$.
    (Note: although we appear to be overloading the notation introduced in \S\ref{sec:loi}, the definitions agree in the case that $Q$ is an equivalence relation.)
    
    Then define $\cell{a_1}_Q \qpo{Q} \cell{a_2}_Q$ iff $a_1 \mathrel{Q} a_2$.
    This is a well-defined partial order on $\cell{Q}$.
    
    \item Conversely, given a poset
    $(\Phi, \poleq)$,
    where $\Phi$ is a partition of set $A$,
    we recover the corresponding preorder on $A$ by
    defining $a \mathrel{Q} a'$ iff $\cell{a}_\Phi \poleq \cell{a'}_\Phi$.
\end{enumerate}
For a preorder $Q$,
we refer to the equivalence relation with equivalence classes $\cell{Q}$
as the \emph{underlying} equivalence relation of $Q$.
Clearly, the underlying equivalence relation of $Q$ is just $Q \cap Q^{-1}$.

Taking the same path for kernels that we took from unordered to ordered knowledge sets, we arrive at the following definition:
\begin{definition}
Let $(B,\poleq)$ be a poset.
Given $f \in  [A \rightarrow B]$,
define its \emph{ordered kernel} $\pker{f}$ to be $f^\ast(\poleq)$,
thus $x \pker{f} y$ iff $f(x) \poleq f(y)$.
\end{definition}
\begin{proposition}
\label{prop:forgetting}
$\pker{f}$ is a preorder, and its underlying equivalence relation is $\ker(f)$.
\end{proposition}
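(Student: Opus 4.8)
The plan is to read off both claims directly from the defining properties of the partial order $\poleq$ on $B$, namely reflexivity, transitivity and antisymmetry. I would note at the outset that neither continuity of $f$ nor any order structure on $A$ is actually used: the argument works for any function $f$ into a poset $B$, so the poset assumption on $B$ is doing all the work.

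For the first claim, that $\pker{f}$ is a preorder, I would appeal to the fact recorded just before the Generalised Kernel definition, that $f^\ast$ preserves reflexivity and transitivity. Since $\pker{f} = f^\ast(\poleq)$ and $\poleq$ is in particular reflexive and transitive, $\pker{f}$ inherits both properties and is therefore a preorder. Spelled out, reflexivity of $\pker{f}$ amounts to $f(x) \poleq f(x)$, which holds by reflexivity of $\poleq$; and transitivity follows since $f(x) \poleq f(y)$ together with $f(y) \poleq f(z)$ give $f(x) \poleq f(z)$ by transitivity of $\poleq$, i.e.\ $x \pker{f} z$.

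For the second claim, I would compute the underlying equivalence relation $\pker{f} \cap (\pker{f})^{-1}$ by unfolding the definitions. A pair $(x,y)$ lies in this intersection exactly when both $f(x) \poleq f(y)$ and $f(y) \poleq f(x)$ hold. The one point to invoke is the antisymmetry of $\poleq$: it makes the conjunction of these two inequalities equivalent to $f(x) = f(y)$, which is by definition $x \mathrel{\ker(f)} y$; conversely $f(x) = f(y)$ yields both inequalities by reflexivity. Hence $\pker{f} \cap (\pker{f})^{-1} = \ker(f)$, as required.

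There is no real obstacle here, but the step that deserves emphasis rather than effort is the use of antisymmetry: it is precisely what collapses the underlying equivalence relation down to genuine equality of outputs. Starting from an arbitrary preorder on $B$ one would only obtain mutual relatedness of the outputs, but because $B$ is a poset this mutual relatedness is equality, matching $\ker(f) = f^\ast(\Id)$ exactly.
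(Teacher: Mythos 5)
Your proof is correct and matches the paper's intended reasoning: the paper states this proposition without proof, treating it as immediate from the earlier observation that $f^\ast$ preserves reflexivity and transitivity, together with the characterisation of the underlying equivalence relation as $Q \cap Q^{-1}$ — exactly the two ingredients you use. Your additional remarks (that antisymmetry of $\poleq_B$ is what collapses mutual relatedness to $\ker(f)$, and that no continuity or order on $A$ is needed) are accurate and consistent with how the paper uses the result.
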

Only some preorders are the ordered kernels of continuous functions.
For example, if $a \poleq a'$ and $Q$ is the ordered kernel of some continuous $f$, then it must be the case
that $a \mathrel{Q} a'$, since $a \poleq a'$ implies $f(a) \poleq f(a')$.
\begin{definition}[Complete Preorder]
Let $A$ be a poset and let $Q$ be a preorder on $A$.
We say that $Q$ is \emph{complete} iff, whenever $X$ is directed in $A$ and $\biglub X$ exists:
\begin{enumerate}
\item $\forall x \in X. \;x \mathrel{Q} (\biglub X)$
\item $\forall a \in A.\;(\forall x \in X.\;x \mathrel{Q} a) \mathrel{\text{implies}}
{(\biglub X) \mathrel{Q} a}$
\end{enumerate}
Note that part (1) entails that every complete $Q$ contains the domain ordering $(\poleq)$.
\end{definition}

It is perhaps more illuminating to see the definition of completeness for $Q$ presented in terms of its corresponding partially ordered partition:
\begin{lemma}
\label{lemma:continuous-quotient}
    Let $A$ be a poset and let $Q$ be a preorder on $A$.
    Then $Q$ is complete iff, whenever $X$ is directed in $A$ and $\biglub X$ exists in $A$,
    $\biglub{\setdef{\cell{x}_Q }{ x \in X }}$ exists in $(\cell{Q}, \qpo{Q})$ and is equal to $\cell{\biglub X}_Q$.
    
    In other words, $Q$ is complete iff the quotient map $(\lambda a. \cell{a}_Q): A \to (\cell{Q}, \poleq_Q)$ is continuous.
\end{lemma}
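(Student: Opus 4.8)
The plan is to prove the stated biconditional by directly unfolding the notion of supremum in the partially ordered partition $(\cell{Q}, \qpo{Q})$ and matching it, clause by clause, against the two conditions in the definition of completeness. The only tool needed is the defining equivalence $\cell{a_1}_Q \qpo{Q} \cell{a_2}_Q \iff a_1 \mathrel{Q} a_2$ from the preorder/partition correspondence. I regard the ``in other words'' reformulation as identical to the first phrasing: it is simply the explicit definition of Scott-continuity instantiated at the quotient map $f = (\lambda a.\, \cell{a}_Q)$, for which $f(X) = \setdef{\cell{x}_Q}{x \in X}$ and $f(\biglub X) = \cell{\biglub X}_Q$, so there is nothing extra to check there.

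First I would fix a directed $X \subseteq A$ whose supremum $\biglub X$ exists in $A$, abbreviate $S = \setdef{\cell{x}_Q}{x \in X}$, and determine exactly when $\cell{\biglub X}_Q$ is the least upper bound of $S$ in $(\cell{Q}, \qpo{Q})$. For the upper-bound clause: $\cell{\biglub X}_Q$ bounds $S$ from above iff $\cell{x}_Q \qpo{Q} \cell{\biglub X}_Q$ for every $x \in X$, which by the defining equivalence is exactly $x \mathrel{Q} \biglub X$ for every $x \in X$ --- i.e.\ condition (1). For the leastness clause I use that every block of $\cell{Q}$ has the form $\cell{a}_Q$ for some $a \in A$: such a block bounds $S$ from above iff $x \mathrel{Q} a$ for all $x \in X$, while $\cell{\biglub X}_Q \qpo{Q} \cell{a}_Q$ iff $\biglub X \mathrel{Q} a$. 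Hence ``$\cell{\biglub X}_Q$ lies below every upper bound of $S$'' translates precisely to condition (2).

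Combining the two clauses gives, for each such $X$: conditions (1) and (2) hold for $X$ iff $\cell{\biglub X}_Q$ is an upper bound of $S$ lying below all upper bounds, i.e.\ iff $\biglub S$ exists and equals $\cell{\biglub X}_Q$. Universally quantifying over all directed $X$ with existing suprema then turns the left-hand side into ``$Q$ is complete'' and the right-hand side into the condition in the statement, discharging both directions simultaneously.

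The hard part is not an obstacle so much as two points that must be handled with care. The first is the existence claim: $\biglub S$ should not be assumed a priori, and indeed it need not be taken as a hypothesis, since a supremum is by definition the least member of the set of upper bounds, so verifying (1) together with (2) establishes both the existence of $\biglub S$ and its value at once. The second is the leastness step, where the quantifier ``over every upper bound of $S$ in $\cell{Q}$'' must be faithfully mirrored by ``over every $a \in A$'' in condition (2); this relies on surjectivity of $a \mapsto \cell{a}_Q$ onto $\cell{Q}$ and on the well-definedness of $\qpo{Q}$. Finally, to reconcile the continuity phrasing, I would note in passing that condition (1) forces $Q \supseteq {\poleq}$ and hence the quotient map to be monotone, in agreement with the remark following the definition of completeness and with the fact that continuity entails monotonicity.
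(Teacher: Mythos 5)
Your proposal is correct, and it is exactly the routine definition-unfolding argument the paper intends: the lemma is stated without an explicit proof precisely because the equivalence follows by translating ``upper bound'' and ``least'' in $(\cell{Q}, \qpo{Q})$ back through $\cell{a_1}_Q \qpo{Q} \cell{a_2}_Q \iff a_1 \mathrel{Q} a_2$ into conditions (1) and (2) of completeness, as you do. Your two points of care --- that existence of the supremum of $\setdef{\cell{x}_Q}{x \in X}$ is established rather than assumed, and that surjectivity of $a \mapsto \cell{a}_Q$ justifies quantifying over $a \in A$ in the leastness clause --- are exactly the right ones.
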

To round off this section, we establish that the complete preorders on a poset are just the ordered kernels of all the continuous functions with that domain:
\begin{theorem}
\label{theorem:LoCI-ordered-kernel}
Let $A$ be a poset. Then $Q$ is a complete preorder on $A$ iff there is some
poset $B$ and $f \in  [A \rightarrow B]$ such that $Q = {\pker{f}}$.
\begin{proof}
The implication from left to right is established by Lemma~\ref{lemma:continuous-quotient}.

For the implication right to left, assume $f$ is continuous and let $Q = {\pker{f}}$.
Let $X$ be directed in $A$ such that $\biglub X$ exists. Then:
\begin{enumerate}
    \item Let $x \in X$. Since $f$ is monotone, $f(x) \poleq f(\biglub X)$, thus $x \mathrel{Q} (\biglub X)$.
    \item Let $a \in A$ be such that $x \mathrel{Q} a$ for all $x \in X$.
            Then $f(x) \poleq f(a)$ for all $x \in X$,
            hence $(\biglub f(X)) \poleq f(a)$,
            hence $f(\biglub X) \poleq f(a)$.
            Thus $(\biglub X) \mathrel{Q} a$.
\end{enumerate}
\end{proof}
\end{theorem}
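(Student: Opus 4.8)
The plan is to prove the two implications separately, leaning on the preliminary results already established. For the left-to-right direction, suppose $Q$ is a complete preorder on $A$. Here I would exhibit a concrete witness for $B$ and $f$: take $B$ to be the quotient poset $(\cell{Q}, \qpo{Q})$ (well-defined as a poset by the preorder/partially-ordered-partition correspondence) and let $f$ be the quotient map $\lambda a.\,\cell{a}_Q$. Lemma~\ref{lemma:continuous-quotient} says precisely that completeness of $Q$ is equivalent to continuity of this quotient map, so $f \in [A \to B]$ is immediate. It then remains only to check that $\pker{f}$ recovers $Q$ exactly: unfolding definitions, $x \pker{f} y$ holds iff $\cell{x}_Q \qpo{Q} \cell{y}_Q$, which by the defining clause of $\qpo{Q}$ is exactly $x \mathrel{Q} y$. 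Hence $Q = \pker{f}$ as required, and the substantive content of this direction is entirely delegated to the lemma.

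For the right-to-left direction, suppose $f \in [A \to B]$ and set $Q = \pker{f}$. That $Q$ is a preorder is already guaranteed by Proposition~\ref{prop:forgetting}, so the work is to verify the two completeness conditions. Fix a directed $X$ whose supremum $\biglub X$ exists. Condition (1) is the easy half: since continuity implies monotonicity, $x \poleq \biglub X$ gives $f(x) \poleq f(\biglub X)$, i.e.\ $x \mathrel{Q} (\biglub X)$. For condition (2), suppose $a \in A$ satisfies $x \mathrel{Q} a$ for all $x \in X$, that is, $f(x) \poleq f(a)$ for all $x$; so $f(a)$ is an upper bound of the image $f(X)$.

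The step I expect to carry the real weight is closing condition (2), since this is the one place where mere monotonicity does not suffice and full continuity is needed. Monotonicity guarantees that $f(X)$ is directed, and continuity guarantees that $\biglub f(X)$ exists and equals $f(\biglub X)$. Having $f(a)$ as an upper bound of $f(X)$ then forces $f(\biglub X) = \biglub f(X) \poleq f(a)$, which is exactly $(\biglub X) \mathrel{Q} a$. I would stress that it is the \emph{equation} $f(\biglub X) = \biglub f(X)$, and not merely the fact that $f(\biglub X)$ is \emph{some} upper bound, that makes the argument go through: this is precisely the content of continuity, and explains why the theorem must be stated for continuous rather than merely monotone functions.
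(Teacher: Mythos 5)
Your proposal is correct and takes essentially the same approach as the paper: the left-to-right direction is discharged via Lemma~\ref{lemma:continuous-quotient} with the quotient map into $(\cell{Q}, \qpo{Q})$ as the witness (you merely make explicit the routine check that $\pker{f} = Q$, which the paper leaves implicit), and the right-to-left direction verifies the two completeness conditions exactly as the paper does, with monotonicity handling condition (1) and continuity supplying the equation $f(\biglub X) = \biglub f(X)$ that closes condition (2).
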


\subsection{LoCI}
We now define the lattice of computable information as a lattice of complete preorders, directly analogous to the definition of LoI as a lattice of equivalence relations. In particular, we can rely on the fact that the complete preorders are closed under intersection:
\begin{lemma}
  Let $\{Q_i\}$ be an arbitrary family of complete preorders.
  Then $\bigcap Q_i$ is a complete preorder.
\end{lemma}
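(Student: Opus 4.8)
The plan is to prove the statement directly from the definitions, exploiting the fact that each defining condition for a complete preorder is universally quantified and hence stable under arbitrary intersection. Write $Q = \bigcap Q_i$. First I would check that $Q$ is a preorder: reflexivity and transitivity are inherited pointwise, since any pair belonging to $Q$ belongs to every $Q_i$, and each $Q_i$ is reflexive and transitive. This step is entirely routine.

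The substance is verifying the two completeness conditions. Fix a directed set $X \subseteq A$ for which $\biglub X$ exists. For condition (1), I would observe that for each $x \in X$ and each index $i$, completeness of $Q_i$ gives $x \mathrel{Q_i} (\biglub X)$; since this holds for every $i$, the definition of intersection yields $x \mathrel{Q} (\biglub X)$.

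For condition (2), suppose $a \in A$ satisfies $x \mathrel{Q} a$ for all $x \in X$. The key move is to push this hypothesis down to each component relation: because $Q \subseteq Q_i$, we have $x \mathrel{Q_i} a$ for all $x \in X$, so part (2) of completeness for $Q_i$ yields $(\biglub X) \mathrel{Q_i} a$. As $i$ was arbitrary, $(\biglub X) \mathrel{Q} a$, which is exactly what is required.

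There is no genuine obstacle here; the only point needing a moment's care is the direction of the containment $Q \subseteq Q_i$ invoked in condition (2), which is what allows a hypothesis phrased for $Q$ to be reused as a hypothesis for each $Q_i$. An alternative, less direct route would represent each $Q_i$ as the ordered kernel of a continuous $f_i$ via Theorem~\ref{theorem:LoCI-ordered-kernel} and then exhibit $Q$ as the ordered kernel of the tupling $\langle f_i \rangle$ into the product poset $\prod_i B_i$; but for an arbitrary index set this introduces needless bookkeeping about products, so the direct argument above is preferable.
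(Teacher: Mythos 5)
Your direct verification is correct, and it is the natural argument here: the paper in fact states this lemma without proof, treating it as routine, and your unwinding of the definitions (with the correctly identified pivot that $Q \subseteq Q_i$ lets the hypothesis of completeness condition (2) for the intersection be passed down to each $Q_i$) is exactly the intended justification. Nothing is missing, and your decision to avoid the detour through ordered kernels and products is sound.
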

\begin{definition}[Lattice of Computable Information]
For a poset $A$, the lattice of information over $A$, $\LOCI(A)$, is defined to be the lattice \[
\LOCI(A) = ( \PRE(A), \LOCIleq, \LOCIjoin)
\]
where $\PRE(A)$ is the set of all complete preorders on $A$,
$P \LoCIleq Q \eqdef Q \subseteq P$, and  
${\sLOCIbigJoin} \eqdef {\bigcap}$ \end{definition}
Since $\LOCI(A)$ has all joins (not just the binary ones), with the bottom element given by $\All_A = A \times A$, and top element $\poleq_A$, it also has all meets, and hence is  a \emph{complete} lattice. Meets are not used in what follows so we do not dwell on them further here.

 As for \LOI, we can define a preorder on (continuous) functions based on their ordered kernels:
 $f \pinfoleq g$ iff ${\pker{f}} \LoCIleq {\pker{g}}$. As claimed above, this corresponds exactly to an ordering of continuous functions based on their ordered knowledge sets:
\begin{proposition}\label{prop:ordered-knowledge-set}
Let $A$ be a poset and let $f$ and $g$ be any two continuous functions with domain $A$.
Then $f \pinfoleq g$ iff $\okset{g}(a) \subseteq \okset{f}(a)$ for all $a \in A$.
\end{proposition}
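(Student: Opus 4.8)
The plan is to reduce the claim to a direct unfolding of the definitions, in exact parallel with the proof of its \LoCI counterpart for the unordered case, \Prop{knowledge-set}. The pivotal observation is that the ordered knowledge set $\okset{f}(a)$ is nothing but the \emph{principal up-set} of $a$ with respect to the ordered kernel $\pker{f}$. Indeed, by \Def{ordered-knowledge-set} we have $a' \in \okset{f}(a)$ iff $f(a) \poleq f(a')$, and by the definition of the ordered kernel this holds iff $a \pker{f} a'$. Hence $\okset{f}(a) = \setdef{a' \in A}{a \pker{f} a'}$, and likewise for $g$.

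First I would unfold the left-hand side. By definition $f \pinfoleq g$ means $\pker{f} \LoCIleq \pker{g}$, which by the definition of $\LoCIleq$ as reverse inclusion amounts to $\pker{g} \subseteq \pker{f}$; spelled out, this says that for all $x, y \in A$, $x \pker{g} y$ implies $x \pker{f} y$.

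Next I would unfold the right-hand side using the up-set description above. The condition that $\okset{g}(a) \subseteq \okset{f}(a)$ holds for every $a \in A$ says exactly that, for all $a$ and all $a'$, $a \pker{g} a'$ implies $a \pker{f} a'$. Quantifying over all pairs, this is literally the containment $\pker{g} \subseteq \pker{f}$ obtained above. The two characterisations therefore coincide, establishing both directions of the iff at once.

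I do not expect any real obstacle: the argument is a purely relational calculation. The two points requiring care are (i) correctly tracking the order-reversal built into $\LoCIleq$, so that $f \pinfoleq g$ becomes $\pker{g} \subseteq \pker{f}$ rather than the reverse, and (ii) checking that the per-point inclusions, once universally quantified over $a$, reassemble into the single relational inclusion without loss. It is worth noting that the hypotheses that $A$ is a poset and that $f, g$ are continuous play no role in the calculation itself; they enter only to guarantee, via Theorem~\ref{theorem:LoCI-ordered-kernel}, that $\pker{f}$ and $\pker{g}$ are genuine complete preorders, hence bona fide elements of $\LOCI(A)$ on which $\pinfoleq$ is defined.
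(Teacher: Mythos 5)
Your proof is correct and matches the intended argument: the paper states this proposition without an explicit proof precisely because it reduces, as you show, to the observation that $\okset{f}(a) = \setdef{a' \in A}{a \pker{f} a'}$ together with the definition of $\LoCIleq$ as reverse inclusion, so the pointwise knowledge-set containments reassemble into exactly $\pker{g} \subseteq \pker{f}$. Your closing remark is also accurate: continuity of $f$ and $g$ plays no role in the calculation itself and serves only to guarantee, via Theorem~\ref{theorem:LoCI-ordered-kernel}, that the ordered kernels are complete preorders and hence genuine elements of $\LoCI(A)$.
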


\subsection{An Example LoCI} In this section we describe $\LOCI(V)$ for the simple four-point domain $V$ shown in \Fig{lifted-vee}.
A Hasse diagram of the lattice structure is shown in \Fig{preorders-lifted-vee}.
On the right  we enumerate all the complete preorders on $V$, presented as partially ordered partitions. Note that we write $a$ to mean the singleton block $\set{a}$, and $ac\bot$ to mean $\set{a,c,\bot}$, etc.

\begin{figure}[htbp]
     \centering
    \hfill
     \begin{subfigure}[b][][c]{0.12\textwidth}
     \includegraphics[width=\textwidth
]{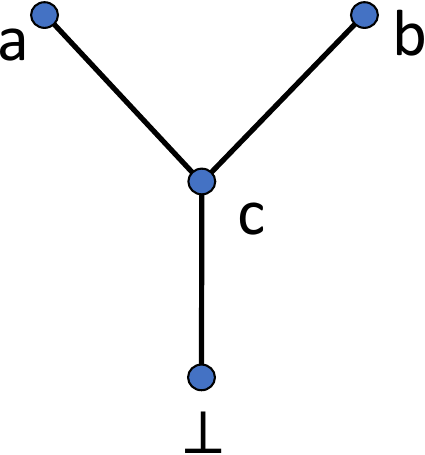}
     \vspace{8ex}
     \caption{Domain $V$}
     \label{fig:lifted-vee}
     \end{subfigure}
     \hfill
\begin{subfigure}[b][][c]{0.75\textwidth}
     \includegraphics[width=\textwidth]{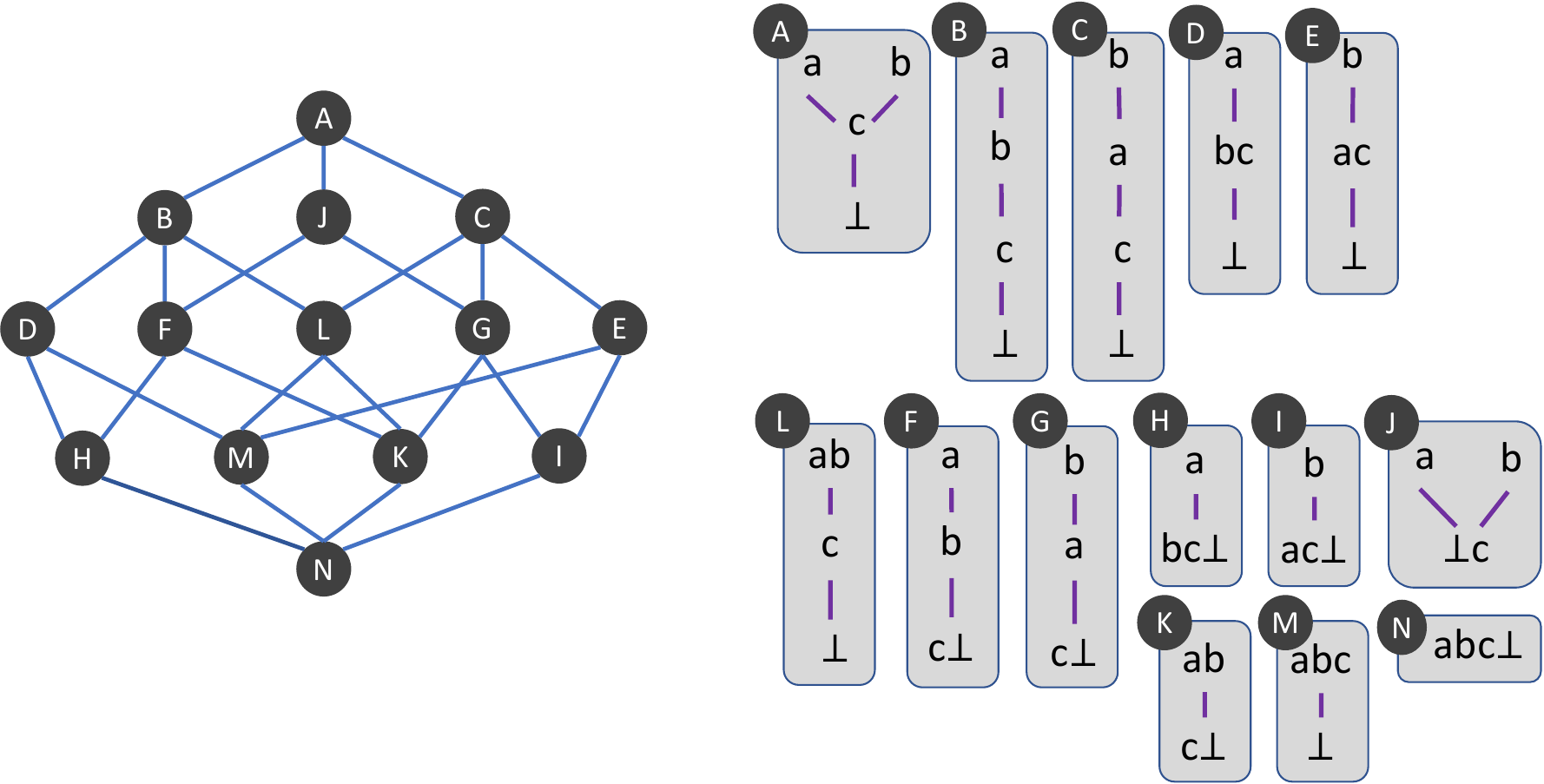}
     \vspace{-5ex}
     \caption{Preorders over $V$}
     \label{fig:preorders-lifted-vee}
 \end{subfigure}
 \vspace{-1ex}
 \label{fig:LOI(V)}
 \caption{$\LoCI(V)$}
\end{figure}

 Let us now consider two continuous functions whose ordered kernels are presented here,
 $f_1, f_2 \in  [V \rightarrow V]$ where:
 \begin{center}
    $f_1 = \lambda x. a$ 
    \hspace{3em}
    $f_2 = \lambda x. \left\{ \begin{array}{cl}
            a & \mbox{ if } x = a \\
            c & \mbox{ if } x \in \{ b, c \} \\
            \bot & \mbox{ if } x = \bot
        \end{array}\right.$ 
 \end{center}
 Since $f_1$ is a constant function it conveys no information about its input, so its ordered kernel is the least element,  \textsf{\small N} ($= \All)$.
 The ordered kernel of $f_2$ is \textsf{\small D}: when the input is $a$, the observer learns this exactly; when the input is $b$ or $c$, the observer learns only that the input belongs to $\{a,b,c\}$; when the input is $\bot$, the observer (inevitably) learns nothing at all.
 Thus, in the \LOCI ordering, $f_2$ is strictly more informative than $f_1$.
 It is interesting to note by contrast, that in the Scott-ordering on functions, $f_1$ is \emph{maximal}, and strictly more defined than $f_2$
 (recall that $f \poleq g$ in the Scott-order iff $f(x) \poleq g(x)$ for all $x$).
 In general, the Scott-ordering between functions tells us little or nothing about their relative capacity to convey information about their inputs. This can be viewed as an instance of the \emph{refinement problem} known from secure information flow \cite{McLean:General}, where a point in a domain can be viewed as its upper set (all its possible ``futures'') and a higher point is then a refinement (a smaller set of futures).

 \subsection{Information Flow Properties in LoCI}
 \label{sec:infoLoCI}
 We can directly use the notation $f : P \ERarrow Q$ introduced earlier to express information flow properties for $P$ and $Q$ in $\LoCI$.  
 Since the ordering on relations in $\LOCI$ is still reversed set containment, both the ``subtyping'' and composition properties stated previously
 (Fact~\ref{fact:sub-comp}) hold equally well for $\LOCI$ as for $\LOI$.
And, as promised, we also have weakest precondition and strongest postcondition properties, provided by appropriate versions of $f^\ast$ and $f_!$ for continuous $f$ and complete preorders:
\begin{definition}
\label{def:wp-sp-LoCI}
  For $f \in [A \to B]$:
  \begin{enumerate}
    \item $f^\ast : \LoCI(B) \to \LoCI(A)$ is the restriction of the generalised kernel map to $\LoCI(B)$.
    \item $f_! : \LoCI(A) \to \LoCI(B)$ is given by
    ${f_!(P)} = {\sLOCIbigJoin \setdef{Q \in \LoCI}{f : P \ERarrow Q}}$.
\end{enumerate}
\end{definition}
(Well-definedness of $f^\ast : \LoCI(B) \to \LoCI(A)$
is slightly less immediate than for the $\LoI$ variant,
but the key requirement is to show that $f^\ast(Q)$ is complete and this follows easily using continuity of $f$.)
The $\LoCI$ analogue of Proposition~\ref{prop:wp-sp-LoI} is then:
\begin{proposition}
\label{prop:wp-sp-LoCI}
For any $f \in [A \to B]$,
$f^\ast$ and $f_!$ are monotone and,
for any $P \in \LoCI(A)$ and $Q \in \LoCI(B)$,
the following are all equivalent:

\hfill
\emph{(1)} ${f: P \ERarrow Q}$
\hfill    
\emph{(2)}     ${{f^\ast(Q)} \LOCIleq P}$
\hfill
\emph{(3)}    ${Q \LOCIleq {f_!(P)}}$
\hfill   
    \mbox{}
\end{proposition}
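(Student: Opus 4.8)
The plan is to mirror the proof of Proposition~\ref{prop:wp-sp-LoI}, since the $\LoCI$ definitions of $f^\ast$ and $f_!$ (Definition~\ref{def:wp-sp-LoCI}) are the literal analogues of the $\LoI$ ones, and the $\LoCI$ order is again reverse inclusion with joins given by intersection. First I would discharge well-definedness. That $f_!(P) = \bigcap \setdef{Q \in \LoCI(B)}{f : P \ERarrow Q}$ lands in $\LoCI(B)$ is immediate: it is an intersection of complete preorders, and complete preorders are closed under intersection (and the family is nonempty, since it contains $\All_B$). That $f^\ast(Q)$ lands in $\LoCI(A)$ is the one genuinely new ingredient relative to the $\LoI$ case, and is where continuity of $f$ enters: for $X$ directed in $A$ with $\biglub X$ existing, continuity gives $f(\biglub X) = \biglub f(X)$ with $f(X)$ directed in $B$, so the two completeness clauses for $f^\ast(Q)$ reduce directly to the corresponding clauses for the complete preorder $Q$ at the directed set $f(X)$.

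With well-definedness in hand, I would establish the equivalences. Conditions~(1) and~(2) are interderivable by unwinding: since $\LOCIleq$ is reverse inclusion, $f^\ast(Q) \LOCIleq P$ says $P \subseteq f^\ast(Q)$, and ``$a \mathrel P a'$ implies $a \mathrel{f^\ast(Q)} a'$'' unfolds to ``$a \mathrel P a'$ implies $f(a) \mathrel Q f(a')$'', which is exactly $f : P \ERarrow Q$. For the passage between~(1) and~(3), the forward direction is direct: if $f : P \ERarrow Q$ then $Q$ is one of the relations intersected in $f_!(P)$, so $f_!(P) \subseteq Q$, i.e.\ $Q \LOCIleq f_!(P)$. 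The converse rests on the auxiliary fact that $f : P \ERarrow f_!(P)$, which holds because $a \mathrel P a'$ forces $f(a) \mathrel Q f(a')$ for every $Q$ in the intersected family, hence $f(a) \mathrel{f_!(P)} f(a')$; then from $f_!(P) \subseteq Q$ the Sub rule of Fact~\ref{fact:sub-comp} yields $f : P \ERarrow Q$.

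It remains to treat monotonicity. For $f^\ast$ this is direct: if $Q_1 \LOCIleq Q_2$, i.e.\ $Q_2 \subseteq Q_1$, then $f(x) \mathrel{Q_2} f(y)$ implies $f(x) \mathrel{Q_1} f(y)$, so $f^\ast(Q_2) \subseteq f^\ast(Q_1)$, i.e.\ $f^\ast(Q_1) \LOCIleq f^\ast(Q_2)$. For $f_!$ I would avoid a direct argument and instead observe that combining the two equivalences just proved yields $f^\ast(Q) \LOCIleq P \iff Q \LOCIleq f_!(P)$, which is precisely the statement that $(f^\ast, f_!)$ forms a Galois connection between $\LoCI(B)$ and $\LoCI(A)$; monotonicity of both adjoints is then standard.

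I expect the only real obstacle to be the completeness of $f^\ast(Q)$, as this is the single place where the domain-theoretic hypotheses (continuity and the existence of directed suprema) are actually used. Everything else is a transcription of the $\LoI$ argument, turning on nothing more than the facts that the $\LoCI$ order is reverse inclusion and that $f_!(P)$ is the intersection of all valid postconditions.
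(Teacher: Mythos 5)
Your proof is correct and follows exactly the route the paper intends: the paper gives no explicit proof of this proposition, remarking only that the sole non-routine point is well-definedness of $f^\ast$ --- that is, completeness of $f^\ast(Q)$, which you correctly derive from continuity of $f$ via the directed set $f(X)$ --- with everything else a transcription of the \LoI argument. Your Galois-connection shortcut for monotonicity of $f_!$ also agrees with the paper's own later observation in \S\ref{sec:categorical} that each $(h^\ast, h_!)$ is an adjoint pair.
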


\newcommand{\PC}{\mathbf{PC}}
\subsection{A Category of Computable Information}
\label{sec:categorical}
Some of the definitions and properties introduced earlier can be recast in category-theoretic terms through the framework of \emph{Grothendieck fibrations}.
In this subsection, we briefly sketch the relevant connections.
The subsection is intended as an outline for interested readers rather than a definitive category-theoretic treatment of $\LOCI$ -- which is beyond the scope of this paper.
The remainder of the paper does not depend on any of the ideas discussed in this subsection, but some notational choices and technical developments are inspired by it.

So far we have treated posets
$A$, $B$ 
and continuous functions 
$f: A \to B$
as a 
semantic framework,
in which we have studied, separately, the 
information associated with individual domains
$A$ via $\LOCI(A)$, 
and the flow of information over a channel $f$ via $f: -\ERarrow-$.
An alternative approach is to combine the information represented by a preorder $P \in \LOCI(A)$ and its underlying poset $A$ into a single mathematical structure, and to study the overall properties of such \emph{information domains}.

\begin{definition}
  An \emph{information domain} is a pair $(A, P)$ consisting of a poset $A$ and a complete preorder $P \in \LOCI(A)$.  An \emph{information-sensitive function} between information domains $(A, P)$ and $(B, Q)$ is a continuous function $f: A \to B$, such that $f: P \ERarrow Q$.
\end{definition}

Information domains and information-sensitive functions form the \emph{category of computable information} $\COCI$.
Identities and composition are defined via the underlying continuous maps; composition preserves information-sensitivity by Fact~\ref{fact:sub-comp}~(Comp).

The category $\COCI$ and the family of lattices $\LOCI(A)$ are related by a \emph{fibration} or,
to use the terminology coined by
\citet{MelliesZ15popl}, by a \emph{type refinement system}.
Intuitively, we may think of a poset $A$ as a \emph{type}, and of an information domain $(A, P)$ as a \emph{refinement} of $A$.
For each type $A$, there is a subcategory of $\COCI$, called the \emph{fibre over $A$}, whose objects are the refinements of $A$, and which is equivalent to $\LOCI(A)$.

Formally, there is a forgetful functor $U$ from $\COCI$ to the category $\PC$ of posets and continuous functions that maps refinements to their types $U(A, P) = A$ and information-sensitive functions to the underlying continuous maps $U(f) = f$.
The fibre $\COCI_A$ over $A$ is the ``inverse image'' of $A$ under $U$, \ie the subcategory of $\COCI$ with objects of the form $(A, P)$ and arrows of the form $\operatorname{id}_A : (A, P) \to (A, Q)$, where $P, Q \in \LOCI(A)$.
Note that the objects of $\COCI_A$ are uniquely determined by their second component, and that there is an arrow between the pair of objects $(A, P)$ and $(A, Q)$ iff $P \LOCIgeq Q$.
In other words, the category $\COCI_A$ is equivalent to the dual lattice of $\LOCI(A)$, thought of as a complete (and co-complete) posetal category.  In line with the terminology of \citeauthor{MelliesZ15popl}, we may call $\COCI_A$ the \emph{subtyping} lattice over $A$.

Furthermore, the functor $U$ is a \emph{bifibration}.
Intuitively, this ensures that we can reindex refinements along continuous maps.
The formal definition of a bifibration is somewhat involved \citep[see e.g.][]{MelliesZ15popl}, but it can be shown, in our setting, to correspond to the existence of weakest preconditions and strongest postconditions as characterised in Proposition~\ref{prop:wp-sp-LoCI}, plus the following identities
\begin{align*}
    \operatorname{id}_A^\ast    &= \operatorname{id}_{\LoCI(A)} & 
    (g \circ f)^\ast            &= f^\ast \circ g^\ast &
    (\operatorname{id}_A)_!     &= \operatorname{id}_{\LoCI(A)} & 
    (g \circ f)_!               &= g_! \circ f_!
\end{align*}
which are easy to prove.
For the last one, rather than showing
${(g \circ f)_!} = {g_! \circ f_!}$ directly -- which is awkward --
it is simpler to show
${(g \circ f)^\ast} = {f^\ast \circ g^\ast}$ first,
and then use the fact that
each $(h^\ast,h_!)$
is an adjoint pair. The \emph{cartesian} and \emph{opcartesian liftings} of $f: A \to B$ to $(B, Q)$ and $(A, P)$ are then given by $f: (A, f^\ast(Q)) \to (B, Q)$ and $f: (A, P) \to (B, f_!(P))$, respectively.

Using the reindexing maps $f^\ast$ and $f_!$, we can extend the poset-indexed set $\{ \COCI_A \}_{A \in \PC}$ of fibres over $A$ into a \emph{poset-indexed category}, that is, a contravariant functor $F: \PC^{\operatorname{op}} \to \mathbf{Cat}$, that maps posets $A$ to fibres $F(A) = \COCI_A$ and whose action on continuous maps $f: A \to B$ is given by
\begin{align*}
    F(f) &: \COCI_B \to \COCI_A \\
    F(f)&(Q) = f^\ast(Q)
\end{align*}
Replacing the reindexing map $f^\ast$ with $f_!$, we obtain a similar, covariant functor $G: \PC \to \mathbf{Cat}$.\footnote{The existence of $F$ and $G$ is in fact sufficient to establish that $U$ is a bifibration.}

The family of lattices $\LOCI(A)$ and the category $\COCI$ fully determine each other: we may obtain $\LOCI(A)$ as the fibres of $\COCI$ via $U$, and conversely, we may reconstruct the category $\COCI$ from 
the indexed category $F$ via the Grothendieck construction $\COCI = \int F$.

Finally, note that the above can also be adapted to the simpler setting of $\LOI$.
In that case, types are simply sets, and refinements are \emph{setoids}, \ie pairs $(S, R)$ consisting of a set $S$ and an equivalence relation
$R \in \LoI(S)$.
The relevant fibration is the obvious forgetful functor $U: \mathbf{Setoid} \to \mathbf{Set}$ from the category of setoids and equivalence-preserving maps to the underlying sets and total functions.

 \subsection{A Partial Embedding of LoI into LoCI}
 As discussed earlier, a key advantage of $\LOCI$ in comparison to $\LOI$ is that it distinguishes between functions which have the same (unordered) kernel but which differ fundamentally in what information they actually make available to an output observer, due to different degrees of partiality.

 But there is another advantage of $\LOCI$: it excludes ``uncomputable'' kernels, those equivalence relations in $\LOI$ which are not the kernel of \emph{any} continuous function.
  Consider the example of $\LOCI(V)$ in \Fig{preorders-lifted-vee}.
  Since $V$ has four elements, there are 15 distinct equivalence relations in $\LOI(V)$. Note, however, that $\LOCI(V)$ has only 14 elements. Clearly then there must be at least one equivalence relation which is being excluded by $\LOCI(V)$
  (in fact, five elements of $\LOI(V)$ are excluded).
 Let us settle on some terminology for this:
 \begin{definition}
   Let $A$ be a poset. Let $R$ be an equivalence relation on $A$ and let $Q$ be a complete preorder on $A$. Say that $Q$ \emph{realises} $R$
if $R$ is the underlying equivalence relation of $Q$.
   When such $Q$ exists for a given $R$, we say that $R$ is \emph{realisable}.
 \end{definition}
 Note that, by Proposition~\ref{prop:forgetting},
 the underlying equivalence relation of ${\pker{f}}$ is ${\ker(f)}$,
 so by Theorem~\ref{theorem:LoCI-ordered-kernel} it is equivalent to say that $R$ is realisable iff $R$ is the kernel of some continuous function.
 
 In $\LOCI(V)$, note that A, B and C all realise the identity relation.
 Similarly, F, G and J all realise the same equivalence relation as each other.
Thus, while $\LOCI(V)$ has 14 elements, together they realise only 10 of the 15 possible equivalence relations over $V$.
 As an example of a missing equivalence relation, consider the one with equivalence classes $\{a, b, \bot\}$, $\{c\}$.
 Recall that a subset $X$ of a poset is \emph{convex} iff,
whenever $x \poleq y \poleq z$ and $x, z \in X$,
then $y \in X$.
 Note that $\{a, b, \bot\}$ is not convex, but it is easy to see that all equivalence classes in the kernel of a monotone function \emph{must} be convex.
 (The convexity test also fails for the four other missing equivalence relations. But convexity alone is not sufficient for realisability, even in the finite case.
 See \S\ref{sec:verifying-realisability} below.)
 
 When an equivalence relation $S$ \emph{is} realisable, we can show that there must be a \emph{greatest} element of LoCI which realises it.
 Moreover, we can use this realiser to re-express an \LoI property
 $f: R \ERarrow S$ as an equivalent \LoCI property. 
 To this end, we define a pair of monotone maps which allow us to move back and forth between $\LoI$ and $\LoCI$:
 \begin{definition}
 For poset $A$ define
 $\Cp_A: \LoI(A) \to \LoCI(A)$ and $\Er_A: \LoCI(A) \to \LoI(A)$ by:
 \begin{enumerate}
     \item $\Cp_A(R)
            = \sLOCIbigJoin \setdef{ P \in \LoCI(A) }{ P \supseteq R }
            = \bigcap \setdef{ P \in \LoCI(A) }{ P \supseteq R }$
     \item $\Er_A(P)$ is the underlying equivalence relation of $P$:
     $\Er_A(P) = P \cap {P^{-1}}$
\end{enumerate}
 It is easy to see that both maps are monotone.
 We will routinely omit the subscripts on $\Cp$ and $\Er$ in contexts where the intended domain is clear.
 \end{definition}
 Note that, by definition, $R \in \LoI(A)$ is realisable iff there exists some $P \in \LoCI(A)$ such that $\Er(P) = R$.
 Now, $\Cp(R)$ is defined above to be the greatest $P \in \LoCI(A)$ such that $P \supseteq R$.
 But observe that
 $\Er(P) \LOIleq R$ iff $\Er(P) \supseteq R$,
and
$\Er(P) = P \cap {P^{-1}} \supseteq R$ iff $P \supseteq R$,
since $R$ is symmetric.
So we have actually defined $\Cp(R)$ to be the greatest $P \in \LoCI(A)$ such that
$\Er(P) \LOIleq R$.
 The following propositions are immediate consequences:
\begin{proposition}
 \label{prop:greatest-realiser}
 $R$ is realisable iff \,$\Er(\Cp(R)) = R$
 (in which case $\Cp(R)$ is its greatest realiser).
\end{proposition}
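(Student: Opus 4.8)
The plan is to prove the two directions of the biconditional separately, treating the right-to-left implication as essentially immediate and concentrating the work on the left-to-right direction, which is where the hypothesis of realisability is actually used. Throughout I would work directly with set inclusion, since both $\Cp(R) \subseteq P_0$ and $\Er(P) = P \cap P^{-1}$ are most naturally manipulated that way, translating into the $\LoCIleq$ order only at the end. The key structural observation, already made in the remark preceding the statement, is that $\Cp(R)$ is the greatest complete preorder $P$ (in the $\LoCIleq$ order) with $\Er(P) \LoCIleq R$; in categorical language this exhibits $(\Er, \Cp)$ as a Galois connection, and the proposition is then the standard fact that the fixed points of the interior operator $\Er \circ \Cp$ coincide with the image of $\Er$, \ie with the realisable relations. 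I would, however, give the elementary argument rather than invoke this machinery.

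For the right-to-left direction I would simply observe that if $\Er(\Cp(R)) = R$, then $\Cp(R)$ is a complete preorder whose underlying equivalence relation is $R$, so $R$ is realisable by definition, with $\Cp(R)$ itself as a witness.

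For the left-to-right direction I would establish the equality $\Er(\Cp(R)) = R$ by proving two inclusions. The inclusion $R \subseteq \Er(\Cp(R))$ holds unconditionally: since $\Cp(R)$ is the intersection of complete preorders each containing $R$, we have $R \subseteq \Cp(R)$, and because $R$ is symmetric this also gives $R = R^{-1} \subseteq \Cp(R)^{-1}$, so $R \subseteq \Cp(R) \cap \Cp(R)^{-1} = \Er(\Cp(R))$. The reverse inclusion is where \emph{realisability} enters. Let $P_0 \in \LoCI(A)$ witness realisability, so $\Er(P_0) = R$; then $R = P_0 \cap P_0^{-1} \subseteq P_0$, so $P_0$ is one of the complete preorders containing $R$ over which $\Cp(R)$ is the intersection, whence $\Cp(R) \subseteq P_0$. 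Since $\Er$ is monotone with respect to inclusion ($P \subseteq P'$ gives $P \cap P^{-1} \subseteq P' \cap P'^{-1}$), this yields $\Er(\Cp(R)) \subseteq \Er(P_0) = R$. The two inclusions combine to $\Er(\Cp(R)) = R$.

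For the parenthetical claim that $\Cp(R)$ is then the greatest realiser, I would note that $\Cp(R)$ realises $R$ by the equality just established, and that the argument above already shows any realiser $P_0$ satisfies $\Cp(R) \subseteq P_0$, \ie $P_0 \LoCIleq \Cp(R)$; hence $\Cp(R)$ dominates every realiser. I expect the only real obstacle to be recognising the asymmetry between the two inclusions: $R \subseteq \Er(\Cp(R))$ holds for every $R$ (so $\Er \circ \Cp$ merely coarsens $R$ to the nearest realisable relation beneath it), whereas the reverse inclusion $\Er(\Cp(R)) \subseteq R$ is the crux and fails exactly for the non-realisable relations — such as the equivalence relation with classes $\{a,b,\bot\}, \{c\}$ over $V$ discussed earlier — so the witnessing realiser is genuinely indispensable there.
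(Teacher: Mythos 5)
Your proof is correct and takes essentially the same approach as the paper: the paper derives the proposition as an immediate consequence of exactly the observation you lean on, namely that $\Cp(R)$ is by construction the greatest $P \in \LoCI(A)$ with $\Er(P) \LoCIleq R$ (equivalently, $P \supseteq R$, using symmetry of $R$), which is the Galois-connection fact. Your two-inclusion argument merely spells out the details the paper leaves implicit, correctly locating the unconditional inclusion $R \subseteq \Er(\Cp(R))$ and the use of the realisability witness in the reverse one.
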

 \begin{proposition}
 \label{prop:GC}
 The pair $(\Er_A, \Cp_A)$ forms a \emph{Galois connection} between
 \LoCI(A) and \LoI(A). That is to say
 for every $P \in \LoCI(A)$ and every $R \in \LoI(A)$:
\begin{equation}
\label{eq:GC}
    {\Er(P) \LOIleq R} \iff {P \LOCIleq \Cp(R)} \tag{GC}
\end{equation}
 (See \cite{GaloisConnectionsPrimer} for an introduction to Galois connections.)
\end{proposition}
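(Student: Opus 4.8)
The plan is to prove the biconditional (GC) by unfolding the two order relations---both of which are reverse set inclusion---and thereby reducing the claim to a short chain of equivalent inclusion statements between subsets of $A \times A$. In fact the three links of this chain have already been isolated in the paragraph immediately preceding the proposition, so the proof amounts to stringing them together and observing that each link is a logical equivalence rather than a one-way implication.

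First I would rewrite each side of (GC) as a plain inclusion: by $P \LOIleq Q \eqdef Q \subseteq P$ the left-hand side $\Er(P) \LOIleq R$ is equivalent to $R \subseteq \Er(P)$, and by $P \LoCIleq Q \eqdef Q \subseteq P$ the right-hand side $P \LOCIleq \Cp(R)$ is equivalent to $\Cp(R) \subseteq P$. Second, I would eliminate the occurrence of $\Er$: since $R$ is an equivalence relation it is symmetric, so from $\Er(P) = P \cap P^{-1}$ one obtains $R \subseteq \Er(P) \iff R \subseteq P$ (forwards because $\Er(P) \subseteq P$, backwards because $R \subseteq P$ together with $R = R^{-1}$ forces $R \subseteq P^{-1}$ and hence $R \subseteq P \cap P^{-1}$). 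Third, I would eliminate $\Cp$: since $\Cp(R) = \bigcap \setdef{P' \in \LoCI(A)}{P' \supseteq R}$ both contains $R$ and is itself a complete preorder, it is the $\subseteq$-smallest complete preorder containing $R$, so for every $P \in \LoCI(A)$ we have $R \subseteq P \iff \Cp(R) \subseteq P$ (forwards because $P$ then belongs to the family defining the intersection, backwards because $R \subseteq \Cp(R) \subseteq P$). Composing the three equivalences yields exactly (GC).

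The computation is routine once the definitions are opened up, so the only step carrying any genuine content is the third, and its subtlety is purely one of membership: I must know that $\Cp(R)$ is itself an element of $\LoCI(A)$ before I may treat it as a lattice element on the right of (GC). This is guaranteed by the earlier lemma that an arbitrary intersection of complete preorders is again a complete preorder, which is what makes $\bigcap \setdef{P'}{P' \supseteq R}$ well defined in $\LoCI(A)$ and simultaneously identifies it as the greatest element of $\LoCI(A)$ lying $\supseteq R$. With this in hand everything else is bookkeeping, and the companion Proposition~\ref{prop:greatest-realiser} drops out of the same observations.
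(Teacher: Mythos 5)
Your proof is correct and takes essentially the same route as the paper, which derives (GC) as an immediate consequence of exactly the three equivalences you isolate: unfolding both orders to reverse inclusion, eliminating $\Er$ via the symmetry of $R$ (so $R \subseteq \Er(P) = P \cap P^{-1}$ iff $R \subseteq P$), and identifying $\Cp(R)$ as the $\subseteq$-least complete preorder containing $R$, whose membership in $\LoCI(A)$ rests on the intersection-closure lemma just as you say. Your explicit flagging of that well-definedness point is the only step with real content, and the paper relies on it in the same way.
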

This extends to an encoding of \LoI properties as \LoCI properties:
\begin{theorem}
For all $f \in [A \to B]$, for all $R \in LoI(A)$, for all $Q \in \LoCI(B)$:
\[ f: R \ERarrow \Er(Q) \iff f: \Cp(R) \ERarrow Q
\]
\begin{proof}
By Propositions \ref{prop:wp-sp-LoI} and \ref{prop:wp-sp-LoCI}, it suffices to show
$f^\ast(\Er(Q)) \LOIleq R$ iff $f^\ast(Q) \LoCIleq \Cp(R)$.
First we note that the following holds by an easy unwinding of the definitions:
\begin{equation}
    \label{eqn:f-star-Er-swap}
    \tag{$\ast$}
    f^\ast(\Er_B(Q)) = \Er_A(f^\ast(Q))
\end{equation}
Then we have:
\[
f^\ast(\Er(Q)) \LOIleq R
\iff \Er(f^\ast(Q)) \LOIleq R
\iff f^\ast(Q) \LoCIleq \Cp(R)
\]
where the first equivalence holds by (\ref{eqn:f-star-Er-swap})
and the second by (\ref{eq:GC}).
\end{proof}
\end{theorem}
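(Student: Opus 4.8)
The plan is to reduce the biconditional to a single inequality and then bridge \LoI and \LoCI using the Galois connection of Proposition~\ref{prop:GC}. First I would rewrite both sides using the weakest-precondition characterisations. By Proposition~\ref{prop:wp-sp-LoI}, the left-hand property $f : R \ERarrow \Er(Q)$ is equivalent to $f^\ast(\Er(Q)) \LOIleq R$ (noting $\Er(Q) \in \LoI(B)$ and $R \in \LoI(A)$). By Proposition~\ref{prop:wp-sp-LoCI}, the right-hand property $f : \Cp(R) \ERarrow Q$ is equivalent to $f^\ast(Q) \LOCIleq \Cp(R)$ (noting $\Cp(R) \in \LoCI(A)$ and $Q \in \LoCI(B)$). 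So it suffices to prove
\[ f^\ast(\Er(Q)) \LOIleq R \iff f^\ast(Q) \LOCIleq \Cp(R). \]

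The key lemma I would isolate is that the generalised kernel map commutes with the operation of taking underlying equivalence relations, i.e.\ $f^\ast(\Er_B(Q)) = \Er_A(f^\ast(Q))$. This holds because $f^\ast$ is defined pointwise through $f$ and therefore preserves both inversion and intersection: $f^\ast(Q^{-1}) = (f^\ast(Q))^{-1}$ and $f^\ast(Q_1 \cap Q_2) = f^\ast(Q_1) \cap f^\ast(Q_2)$. Hence $f^\ast(\Er(Q)) = f^\ast(Q \cap Q^{-1}) = f^\ast(Q) \cap (f^\ast(Q))^{-1} = \Er(f^\ast(Q))$. Equivalently, unwinding the membership relations directly, $x \mathrel{f^\ast(\Er(Q))} y$ holds iff $f(x) \mathrel{Q} f(y)$ and $f(y) \mathrel{Q} f(x)$, which is exactly the condition defining $x \mathrel{\Er(f^\ast(Q))} y$.

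With the commutation identity in hand, the remaining step is a two-link chain of equivalences. I would rewrite the left inequality using the identity, turning $f^\ast(\Er(Q)) \LOIleq R$ into $\Er(f^\ast(Q)) \LOIleq R$. Then applying the Galois connection of Proposition~\ref{prop:GC} with $P \defeq f^\ast(Q)$ --- which states precisely $\Er(P) \LOIleq R \iff P \LOCIleq \Cp(R)$ --- yields $f^\ast(Q) \LOCIleq \Cp(R)$, closing the equivalence.

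I expect the only genuinely delicate point to be type-correctness rather than computation: invoking Proposition~\ref{prop:wp-sp-LoCI} and the Galois connection requires $f^\ast(Q)$ to be an element of $\LoCI(A)$, i.e.\ a \emph{complete} preorder, which is guaranteed by continuity of $f$ (the well-definedness of $f^\ast : \LoCI(B) \to \LoCI(A)$ recorded after Definition~\ref{def:wp-sp-LoCI}). Everything else is routine assembly; the real content is the clean commutation of $f^\ast$ with $\Er$, which is what lets the \LoI inequality and the \LoCI inequality translate into one another through the Galois connection.
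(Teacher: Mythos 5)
Your proposal is correct and follows essentially the same route as the paper's proof: reduce via the weakest-precondition characterisations (Propositions~\ref{prop:wp-sp-LoI} and \ref{prop:wp-sp-LoCI}) to the inequality $f^\ast(\Er(Q)) \LOIleq R \iff f^\ast(Q) \LOCIleq \Cp(R)$, establish the commutation identity $f^\ast(\Er_B(Q)) = \Er_A(f^\ast(Q))$, and close with the Galois connection of Proposition~\ref{prop:GC}. Your only addition is to spell out the ``easy unwinding'' behind the commutation identity (via preservation of intersection and inversion by $f^\ast$) and to note explicitly that continuity of $f$ guarantees $f^\ast(Q) \in \LoCI(A)$, both of which the paper leaves implicit.
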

\begin{corollary}
\label{coroll:encodingLoIinLoCI}
If $S$ is realisable then
$f: R \ERarrow S \iff f: \Cp(R) \ERarrow \Cp(S)$.
\begin{proof}
By Proposition~\ref{prop:greatest-realiser},
$S$ is realisable iff $S = \Er(\Cp(S))$,
so let $Q = \Cp(S)$ in the theorem.
\end{proof}
\end{corollary}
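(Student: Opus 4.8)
The plan is to obtain the Corollary as a direct specialisation of the preceding Theorem, which establishes $f: R \ERarrow \Er(Q) \iff f: \Cp(R) \ERarrow Q$ for \emph{every} $Q \in \LoCI(B)$. All that remains is to pick $Q$ so that its underlying equivalence relation $\Er(Q)$ coincides exactly with the given $S$, and the realisability hypothesis is precisely the ingredient that guarantees such a $Q$ exists.

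First I would unpack the hypothesis that $S$ is realisable. By Proposition~\ref{prop:greatest-realiser}, realisability of $S$ is equivalent to the identity $\Er(\Cp(S)) = S$, where $\Cp(S)$ is the greatest complete preorder whose underlying equivalence relation is $S$. This is the only place the hypothesis enters, and it is what lets me recover $S$ on the nose rather than merely up to the Galois connection $(\Er_A, \Cp_A)$ of Proposition~\ref{prop:GC}.

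Next I would instantiate the Theorem at $Q \defeq \Cp(S) \in \LoCI(B)$. Under this choice the left-hand property $f: R \ERarrow \Er(Q)$ reads $f: R \ERarrow \Er(\Cp(S))$, which by the realisability identity is exactly $f: R \ERarrow S$, while the right-hand property $f: \Cp(R) \ERarrow Q$ reads $f: \Cp(R) \ERarrow \Cp(S)$. Composing these rewrites with the biconditional supplied by the Theorem yields $f: R \ERarrow S \iff f: \Cp(R) \ERarrow \Cp(S)$, as required.

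I do not expect any real obstacle: the substantive content is already carried by the Theorem (through the weakest-precondition characterisations of Propositions~\ref{prop:wp-sp-LoI} and~\ref{prop:wp-sp-LoCI}, the swap identity $f^\ast(\Er(Q)) = \Er(f^\ast(Q))$, and the Galois connection~(\ref{eq:GC})) and by Proposition~\ref{prop:greatest-realiser}. The one subtlety worth flagging is that realisability is genuinely needed: for a non-realisable $S$ one would only have $\Er(\Cp(S)) \LOIgeq S$ strictly, so the instantiated left-hand property would concern $\Er(\Cp(S))$ rather than $S$ itself and the equivalence would break. Thus the entire proof reduces to setting $Q = \Cp(S)$ and reading off the conclusion.
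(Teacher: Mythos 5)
Your proof is correct and is essentially the paper's own argument: both invoke Proposition~\ref{prop:greatest-realiser} to rewrite $S$ as $\Er(\Cp(S))$ and then specialise the preceding theorem at $Q = \Cp(S)$. (One cosmetic slip in your closing aside: since $\Cp(S) \supseteq S$ and $S$ is symmetric, for non-realisable $S$ one has $S \subsetneq \Er(\Cp(S))$, i.e.\ $\Er(\Cp(S)) \LOIleq S$ strictly rather than $\LOIgeq$; this does not affect the proof itself.)
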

It is interesting to note that Corollary~\ref{coroll:encodingLoIinLoCI} does not require $R$ to be realisable.
However, in general, the equivalence does not hold unless $S$ is realisable.
For a counterexample, consider the three-point lattice $A = \{0,1,2\}$ with
$0 \sqsubset 1 \sqsubset 2$, and let $S$ be the equivalence relation with equivalence classes
$\{0,2\}$ and $\{1\}$.
The first of these classes is not convex, so $S$ is not realisable.
Now consider the property $f: \All \ERarrow S$.
It is easy to see that this property fails to hold for some choices of $f \in [A \to A]$ (e.g. choose $f$ to be the identity).
However, $\Cp(S) = \Cp(\All) = \All$ and $f: \All \ERarrow \All$ holds trivially.

 \subsubsection{Verifying Realisability}
\begin{wrapfigure}{r}{0.3\textwidth}
     \centering
\includegraphics[width=0.25\textwidth,trim=0pt 0pt 0pt 60pt]{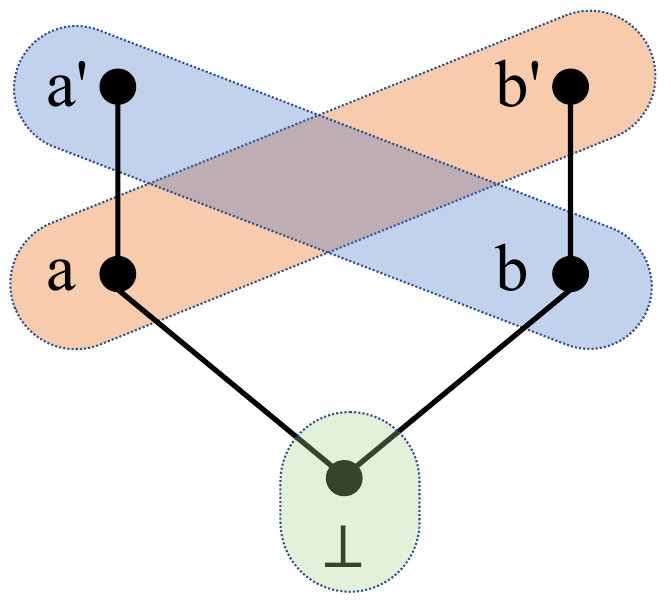}
     \caption{An Unrealisable Equivalence Relation}
     \label{fig:unrealisable-example}
\end{wrapfigure}     
We describe a simple necessary condition for realisability,
 which is also sufficient in the finite case.
 It is motivated by the following example.
 \label{sec:verifying-realisability}
  Let $R$ be the equivalence relation shown in \Fig{unrealisable-example}.
  The three equivalence classes are clearly convex, but $R$ is not realisable. To see why, suppose that $R$ is the kernel of $f$.
There must be distinct elements $x$ and $y$ such that
 $f(\{a,b'\}) = \{x\}$
 and
 $f(\{b,a'\}) = \{y\}$.
 If $f$ is monotone then, since $a \poleq a'$ and $b \poleq b'$, it must be the case that
 $x \poleq y \poleq x$,
 which contradicts the assumption that $x$ and $y$ are distinct.
 
 The example of \Fig{unrealisable-example} generalises quite directly.
 Given any equivalence relation $R$ on a poset $A$,
 define $\phi$ as the relation on $\cell{R}$ which relates two equivalence classes whenever they contain $(\poleq)$-related elements:
 \(
    {\cell{a}_R \mathrel{\phi} \cell{b}_R}
    \iff
    \exists x \in \cell{a}_R. \exists y \in \cell{b}_R. x \poleq y
 \).
 In \Fig{unrealisable-example}, unrealisability manifests as a non-trivial cycle in the graph of $\phi$, that is, a sequence
 \( \cell{a_1}_R \mathrel{\phi} \cdots \mathrel{\phi} \cell{a_n}_R \mathrel{\phi} \cell{a_1}_R\)
 with $n > 1$
 and such that all $\cell{a_i}_R$ are distinct.
 By the obvious inductive generalisation of the above argument,
 any monotone $f$ necessarily maps all $a_i$ to the same value, thus
 making $\ker(f) = R$ impossible.
 So if the graph of $\phi$ contains a non-trivial cycle, $R$ is not realisable.
 (Note also that this generalises the convexity condition:
 if any $\cell{a}_R$ is non-convex,
 there will be a non-trivial cycle with $n = 2$.)
 
 Conversely, to say that $\phi$ is free of such cycles is just to say that the transitive closure $\phi^{+}$ is antisymmetric. Clearly, $\phi^{+}$ is
 also reflexive and transitive, thus
 $B = (\cell{R}, \phi^{+})$ is a poset.
 Let $f: A \to B$ be the map $a \mapsto \cell{a}_R$.
 Then $f$ is monotone (because
 ${x \poleq y}$ implies ${\cell{x}_R \mathrel{\phi} \cell{y}_R}$)
 and
 $\ker(f) = R$.
 In the case that $\poleq_A$ is of finite height, this establishes that $R$ is realisable.
 
 \subsection{Post Processing}
 In \S\ref{sec:loi} we introduced three equivalent ways of ordering functions,
 the first based on inclusion of their kernels ($\infoleq$),
 the second in terms of their inter-definability via postprocessing (Proposition~\ref{prop:postprocessing}),
 and the third in terms of their knowledge sets (Proposition~\ref{prop:knowledge-set}).
 Moving to a setting of posets and continuous functions,
 we have presented direct analogues of the first of these in terms of ordered kernels ($\pinfoleq$),
 and of the third in terms of ordered knowledge sets (Proposition~\ref{prop:ordered-knowledge-set}).
 However, it turns out that there is no direct analogue of the postprocessing correspondence.
 To see why, we consider two pairs of counterexamples which illustrate two essentially different ways in which the postprocessing correspondence fails for $\LOCI$.

\para{Counterexample 1: Non-Existence of a Monotone Postprocessor}
Consider a test \texttt{isEven1} on natural numbers which simply returns True or False.
This can be modelled in the obvious way by a function
$\mathrm{isEven1} \in  [N \rightarrow \mathrm{Bool}_\bot]$,
where $N$ is the unordered set of natural numbers and $\mathrm{Bool}_\bot$ is the lifted domain of Booleans in \Fig{counter-codomains}.
\begin{figure}[bth]
     \centering
     \begin{subfigure}[b]{.4\textwidth}
     \centering
     \includegraphics[width=\linewidth]{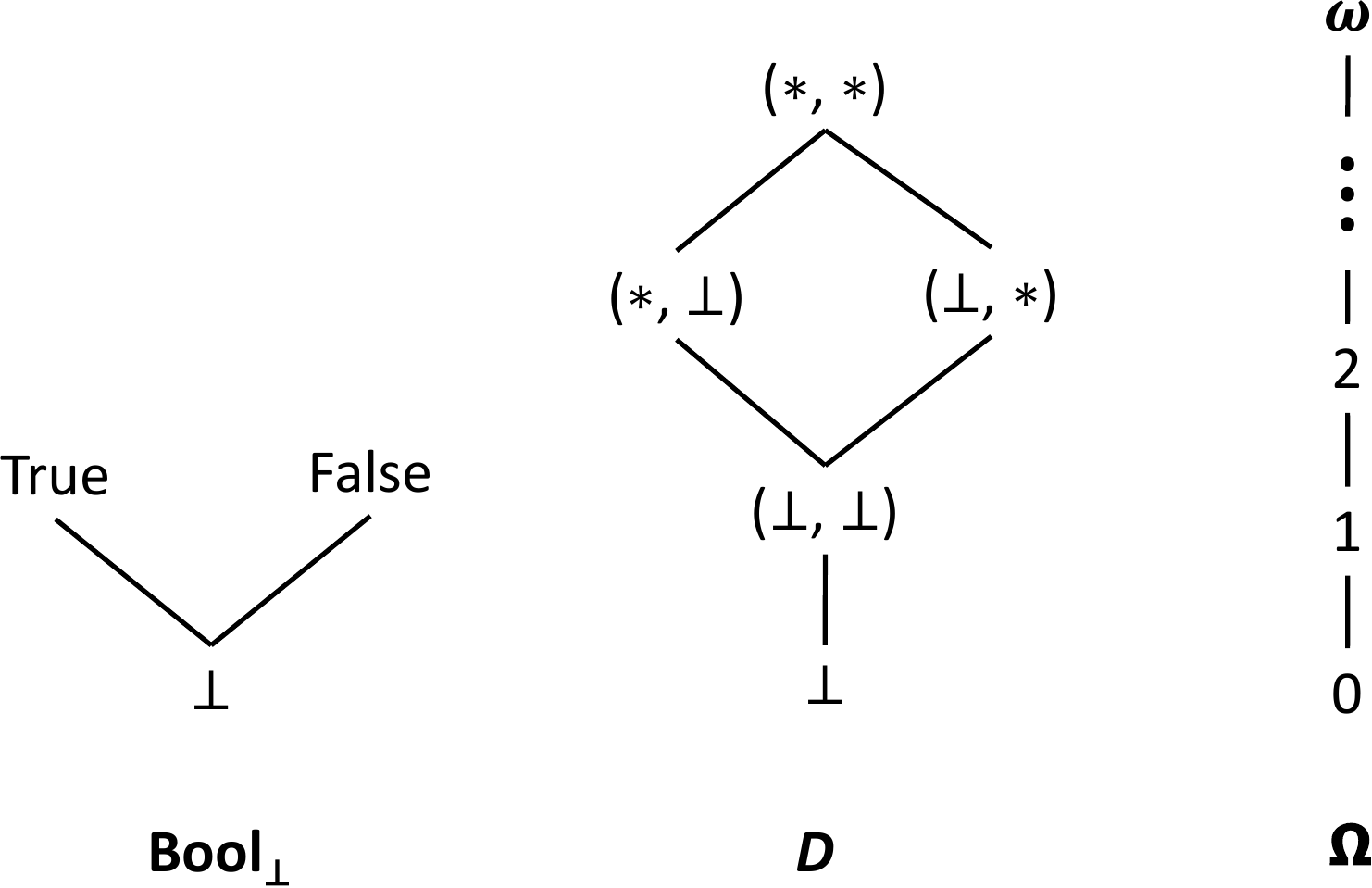}
     \caption{Codomains}
     \label{fig:counter-codomains}
     \end{subfigure}\hspace{3em}
     \begin{subfigure}[b]{.32\textwidth}
     \centering
     \includegraphics[width=0.7\textwidth]{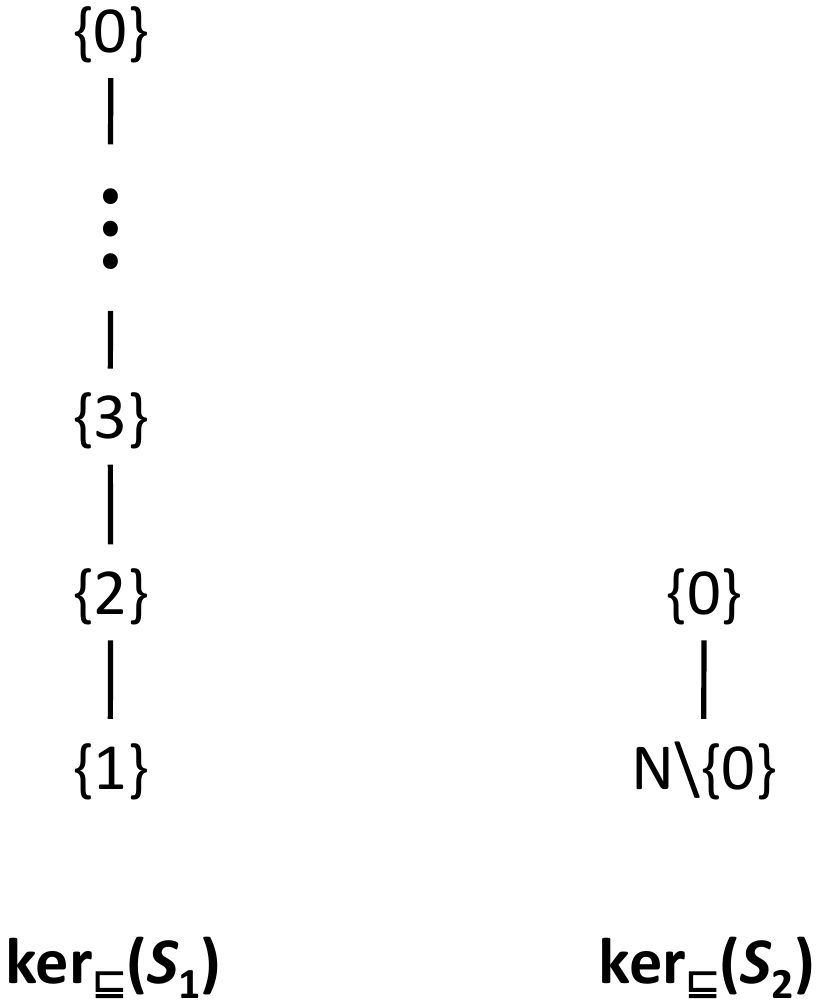}
     \caption{Kernels}
     \label{fig:counter-kernels}
     \end{subfigure}
     \caption{Postprocessing Counterexamples}
 \end{figure}

Now consider the following Haskell-style function definition 
\begin{lstlisting}
    isEven x  = if even x then ((), spin) else (spin, ())
      where spin = spin
\end{lstlisting}

Tuples in Haskell are both lazy and lifted, so this can be modelled by a function
$\mathrm{isEven2} \in  [N \rightarrow D]$, where $D$ is the lifted diamond domain in
\Fig{counter-codomains}.  (Haskell does not have a primitive type for natural numbers, but only integers, but for the sake of the example let us assume that the program operates over naturals.) 

Both these functions have the same kernel (ordered and unordered): it simply partitions $N$ into the sets of even and odd numbers.
So $(\mathrm{isEven2} \pinfoleq \mathrm{isEven1})$ and
$(\mathrm{isEven1} \pinfoleq \mathrm{isEven2})$.
We can certainly obtain isEven2 from isEven1 by postprocessing:
map $\bot$ to $\bot$, map $T$ to $(\ast,\bot)$, and map $F$ to $(\bot,\ast)$.
However, there is no continuous postprocessor $p \in  [D \rightarrow \mathrm{Bool}_\bot]$ such that
$\mathrm{isEven1} = p \circ \mathrm{isEven2}$.
The problem is that any such $p$ must map $(\ast,\bot)$ to $T$ and $(\bot,\ast)$ to $F$.
But then,
since $(\ast,\ast)$ is greater than both $(\ast,\bot)$ and $(\bot,\ast)$,
$p$ must map $(\ast,\ast)$ to a value greater than both $T$ and $F$, and no such value exists.
Note, however that $(\ast,\ast)$ is not actually in the \emph{range} of isEven2.
If $p$ was not required to be monotone, the problem would therefore be easily resolved, since $p$ could arbitrarily map $(\ast,\ast)$ to either $T$ or $F$ (or even to $\bot$).
Unfortunately, such $p$ would not actually be computable.
Nonetheless, it is clear that it is indeed \emph{computationally feasible} to learn exactly the same information from the output of the two functions. For example, we may poll the two elements in the output of isEven2 in alternation, until one becomes defined; as soon as this happens we will know the parity of the input.
This behaviour is clearly implementable in principle, even though it does not define a monotone function in $D \rightarrow \mathrm{Bool}_\bot$.
(Of course, we cannot implement this behaviour in sequential Haskell, but this is just a limitation of the language.)

Conceivably, a slightly more liberal postprocessing condition could be designed to accommodate this and similar counterexamples (allow postprocessors to be partial, for example).

\para{Counterexample 2: Non-Existence of a Continuous Postprocessor}
Consider these two programs:

\begin{minipage}[t]{0.48\textwidth}
\begin{pseudocode}
S1: if (x == 0) while True output();
    for i := 1 to x - 1 {
      output ()
    };
    while True { }
\end{pseudocode}
\end{minipage}\hfill
\begin{minipage}[t]{0.48\textwidth}
\begin{pseudocode}
S2: if (x == 0) while True output();
    while True { }
\end{pseudocode}
\end{minipage}

Both programs take a natural number $x$ and produce a partial or infinite stream of units. They can be modelled by functions $S_1, S_2 \in  [N \rightarrow \Omega]$,
where $\Omega$ is the poset illustrated in \Fig{counter-codomains}. (In the picture for $\Omega$ we represent each partial stream of units by its length; the limit point $\omega$ represents the infinite stream.)
When $x = 0$, both programs produce an infinite stream. When $x > 0$, S1 produces a stream of length $x-1$, and then diverges; S2 simply diverges immediately.

As illustrated in \Fig{counter-kernels}, the ordered kernel for $S_1$ is isomorphic to $\Omega$, while the ordered kernel for $S_2$ is a two-point lattice.
Clearly, $S_2 \pinfoleq S_1$.
But there is no continuous $p \in  [\Omega \rightarrow \Omega]$ such that $S_2 = {p \circ S_1}$. The problem in this case is that $p$ would have to send all the finite elements of $\Omega$ to the bottom point $0$, while sending the limit point $\omega$ to a different value.

The key thing to note here is that, although $\cell{{\pker{S_1}}}$ contains $\{0\}$ as a maximal element
(it is the inverse image under S1 of the infinite output stream)
an observer of S1 will never actually learn that $x = 0$ in finite time.
With each observed output event, the observer rules out one more possible value for $x$,
but there will always be infinitely many possible values remaining.
After observing $n$ output events, the observer knows only that ${x = 0} \vee {x > n}$.
By contrast, an observer of S2 learns that $x = 0$ as soon as the first output event is observed. (On the other hand, when $x > 0$, an S2 observer learns nothing at all.)

Perhaps the best we can claim is that the \LOCI model is conservative, in the sense that it faithfully captures what an observer will learn ``in the limit''. But, as S1 illustrates, sometimes the limit never comes.

\section{Termination-Insensitive Properties}
\label{sec:tini}
In this section we turn to the question of how \(\LoCI\) can help us to formulate the first general definition of a class of weakened information-flow properties known as \emph{termination-insensitive} properties (or sometimes, \emph{progress-insensitive} properties). 

\subsection{What is Termination-Insensitivity?}
We quote \citet{Askarov+:Termination}:
\begin{quotation}
Current tools for analysing information flow in programs build upon
ideas going back to Denning's work from the 70's $\langle$\cite{Denning:Certification}$\rangle$. These systems enforce an
imperfect notion of information flow which has become known as termination-insensitive noninterference. Under this version of noninterference, information
leaks are permitted if they are transmitted purely by the program’s termination behaviour (i.e.\ whether it terminates or not). This imperfection is the price to pay
for having a security condition which is relatively liberal (e.g.\ allowing while-loops whose termination may depend on the value of a secret) and easy to check.
\end{quotation}
The term \emph{noninterference} in the language-based security literature refers to a class of information flow properties built around a lattice of security labels (otherwise known as \emph{security clearance levels}) \cite{Denning:Lattice}, in the simplest case two labels, $H$ (the label for secrets) and $L$ (the label for non-secrets), together with a ``may flow'' partial order $\prec$, where in the simple case $L \prec H$, expressing that public data may flow to (be combined with) secrets.

On the semantic side, for each label $k$ there is a notion of \emph{indistinguishability} between inputs and, respectively, outputs --  equivalence relations which determines whether an observer at level $k$ can see the difference between two different elements.  These relations must agree with the flow relation in the sense that whenever $j\prec k$ then indistinguishability at level $k$ implies indistinguishably at level $j$. Indistinguishability relations are either given directly, or can be constructed as the kernel of some projection function which extracts the data of classification at most $k$. Thus ``ideal'' noninterference for a program denotation $f$ can be stated in terms of the lattice of information as a conjunction of properties of the form $f: P_k \ERarrow Q_k$, expressing that an output observer at level $k$ learns no more than the level-$k$ input.

Without focusing on security policies in particular, we will show how to take any property of the form 
$f: P \ERarrow Q$ and weaken it to a property which allows for termination leaks. The key to this is to use the preorder refinement of $Q$ to get a handle on exactly what leaks to allow.   The case when $P$ and $Q$ are used to model security levels will just be a specific instantiation.  
But even for this instantiation we present a new generalisation of the notion of termination sensitivity beyond the two special cases that have been studied in the literature, namely (i) the ``batch-job'' case when programs either terminate or deliver a result, and (ii) the case when programs output a stream of values. 
In the recent literature the term \emph{progress-insensitivity} has been used to describe the latter case, but in this section we will not distinguish these concepts -- they are equally problematic for a Denning-style program analysis. 
Case (i) we will refer to henceforth as \emph{simple termination-insensitive noninterference} and is relevant when the result domain of a computation is a flat domain.

As a simple example of case (i) consider the programs
\begin{center}
    $A \; = \; {}$\lstinline!while (h>0) { }!
    \qquad and \qquad 
    $B \; = \; {}$\lstinline!while (h>0) {h := h-1}!.
\end{center}
Assume that $h$ is a secret.  Standard information flow analyses notice that the loop condition in each case references variable $h$, but since typical analyses do not have the ability to analyse termination properties of loops, they must conservatively assume that information about $h$ leaks in both cases (when in fact it only leaks for program $A$). This prevents us from verifying the security of any loops depending on secrets. However, a termination-insensitive analysis ignores leaks through termination behaviour and thus both $A$ and $B$ are permitted by termination-insensitive noninterference: such an analysis is more permissive because it allows loops depending on secrets (such as $B$), but less secure because it also allows leaky program $A$ (which terminates only when $h \leq 0$).

Case (ii), progress-insensitivity, is the same issue but for programs producing streams. Consider here two programs which never terminate (thanks to $D\,= \,$ \lstinline!while True { }!):
\begin{center}
    $A' \, = \, {}$ \lstinline!output(1); !$A$\lstinline!; output(1); !$D$
    \qquad versus \qquad
    $B' \, = \, {}$ \lstinline!output(1); !$B$\lstinline!; output(1); !$D$.
\end{center}
 Here $B'$ is noninterfering but $A'$ is not, but both are permitted by the termination-insensitive condition (aka progress-insensitivity) for stream output defined in e.g.~\cite{Askarov+:Termination}.
The point of this  example is to illustrate that the carrier of the information leak is not just the simple ``does it terminate or not'', but the \emph{cause} of the leak is the same.

The definition in \cite{Askarov+:Termination} is ad hoc in that it is specific to the particular model of computation. If the computation model is changed (for example, if there are parallel output streams, or if there is a value delivered on termination) then the definition has to be rebuilt from scratch, and there is no general recipe to do this. 

\subsection{Detour: Termination-Insensitivity in the Lattice of Information} Before we get to our definition, it is worth considering how termination-insensitive properties might be encoded in the lattice of information directly.  The question is how to take an arbitrary property of the form $P \ERarrow Q$ and weaken it to a termination-insensitive variant $P' \ERarrow Q'$.

We are not aware of a general approach to this in the literature. In this section we look at a promising approach  which works for some specific and interesting choices of $P$ and $Q$, but which we failed to generalise. We will later prove that it cannot be generalised in a way which matches the definition which we provide in \S\ref{sec:gtini}.

So how might one weaken a property of the form $P \ERarrow Q$ to allow termination leaks? It is tempting to try to encode this by weakening $Q$ (taking a more liberal relation) -- and indeed that is what has been done in typical relational proofs of simple termination-insensitive noninterference by breaking transitivity and allowing any value in the codomain to be indistinguishable from $\bot$. 
Our approach in \S\ref{sec:gtini} can be seen as a generalisation of this approach.
But it is useful first to consider how far we can get while remaining within the realm of equivalence relations.
 Sterling and Harper in a recent paper on the topic \cite{Sterling:Harper:Sheaf} say (in relation to a specific work \cite{Abadi+:Core} using a relational, semantic proof of noninterference)
\begin{quote}
``A more significant and harder to resolve problem is the fact that the indistinguishability relation \ldots cannot be construed as an equivalence relation''
\end{quote}
While this seems to be true if we restrict ourselves to solving the problem by weakening $Q$, in fact it \emph{is} possible to express termination-insensitivity of types (i) and (ii) just using equivalence relations.
The trick is not to weaken $Q$, but instead to strengthen $P$.

The approach, which we briefly introduce here, is based on Bay and Askarov's study of progress-insensitive noninterference \cite{Bay:Askarov:Reconciling}.  Their idea is to characterise a hypothetical observer who \emph{only} learns through progress or termination behaviour. In the specific case of \cite{Bay:Askarov:Reconciling} it is a ``progress observer''  who sees the length of the output stream, but not the values within it. Let us illustrate this idea in the more basic context of simple termination-insensitive properties.  Suppose we want to define a simple termination-insensitive  variant of a property of the form $f:\ P \ERarrow Q$ for some function $f \in [D \rightarrow V_\bot]$ where $V$ is a flat set of values. We characterise the termination observer by the relation $T = \setof{(\bot,\bot)} \cup \setdef{(\mathrm{lift}(u),\mathrm{lift}(v))}{u \in V, v \in V}$.  The key idea is that we modify the property $f:\ P \ERarrow Q$ not by weakening the observation $Q$, but by strengthening the prior knowledge $P$. We need to express that by observing $Q$ you learn nothing more than $P$ plus whatever you can learn from termination; here ``plus'' means least upper bound, and ``what you learn from termination'' is expressed as the generalised kernel of $f$ with respect to $T$,  namely $f^\ast(T)$.
Thus the simple termination-insensitive weakening of $f:\ P \ERarrow Q$ is
\[
f:\ P \LOIjoin {f^\ast(T)} \ERarrow Q.
\]
The general idea could then be, for each codomain, to define a suitable termination observer $T$.  Bay and Askarov did this for the domain of streams to obtain ``progress-insensitive'' noninterference.  We see two reasons to tackle this differently: 
\begin{enumerate}
    \item Reasoning explicitly about $P \LOIjoin {f^\ast(T)}$ is potentially cumbersome, especially since we don't care \emph{what} is leaked in a termination-insensitive property.
    \item \label{item:non-existence-of-T}
    Finding a suitable $T$ that works as intended but over an arbitrary domain is not only non-obvious, but, we suspect, not possible in general.
\end{enumerate}
In \S\ref{sec:impossibility} we return to point (\ref{item:non-existence-of-T}) to show that it is not possible to find a definition of $T$ which matches the generalised termination-insensitivity which we now introduce.  

\subsection{Using LoCI to Define Generalised Termination-Insensitivity}
\label{sec:gtini}
Here we provide a general solution to systematically weakening an $\LOI$ property $f: R \ERarrow S$ to a termination-insensitive counterpart
(we assume $S$ is realisable).

The first step is to encode $f: {R \ERarrow S}$ as the \LoCI property
$f: P \ERarrow Q$, where $P = \Cp(R)$ and $Q = \Cp(S)$, as allowed by
Corollary~\ref{coroll:encodingLoIinLoCI}.
Preorder $Q$ has the same equivalence classes as $S$, but the classes themselves are minimally ordered to respect the domain order; it is precisely this ordering which gives us a handle on the weakening we need to make.

As a starting point, consider how simple termination-insensitive noninterference is proven: one ignores distinctions that the observer might make between nontermination and termination. In a relational presentation (e.g.~\cite{Abadi+:Core}) this is achieved by simply relating bottom to everything (and vice-versa) and not requiring transitivity.  What is the generalisation to richer domains (i.e.\ domains with more ``height'')?
The first natural attempt comes from the observation that,
in a Scott-style semantics, operational differences in termination behaviour manifest denotationally as differences in definedness, i.e.\ as inequations with respect to the domain ordering.

Towards a generalisation, let us start by assuming that $S$ is the identity, so preorder $Q = {\Cp(\Id)}$ is the top element of \LOCI, i.e.\ it is just the domain ordering. This corresponds to an observer who can ``see'' everything (but some observations are more definite than others).  The obvious weakening of the property $f: P \ERarrow (\poleq)$ is to symmetrise $(\poleq)$ thus:
\[ \setdef{(d,e)}{d \poleq e ~\text{or}~ e \poleq d} \]
This is ``the right thing'' for some domains but not all. As an example of where it does \emph{not} do the right thing, consider the domain $\domTwo \times \domTwo$ where $\domTwo = \domOne_\bot$, and $\domOne = \set{\one}$.  This domain contains four elements in a diamond shape. Suppose that a value of this type is computed by two loops, one to produce the first element, and one to produce the second. A termination-insensitive analysis ignores the leaks from the termination of each loop, so our weakening of any desired relation on $\domTwo \times \domTwo$ must relate $(\bot,\one)$ and $(\one,\bot)$ (and hence termination-insensitivity must inevitably leak all information about this domain).  But what do $(\bot,\one)$ and $(\one,\bot)$ have in common? The answer is that they represent computations that might turn out to be the same, should their computations progress, i.e.\ they have an upper bound with respect to the domain ordering.

What about when the starting point is an arbitrary $Q \in \LOCI(D)$? The story here is essentially the same, but here we must think of the equivalence classes of $Q$ instead of individual elements, and the relation $Q$ instead of the domain ordering.
\begin{definition}[Compatible extension]
  \label{def:consistent}
Given two elements $d, d' \in D$, and a preorder $Q$ on $D$,
we say that $d$ and $d'$ are $Q$-\emph{compatible} if there exists an $e$ such that $d \mathrel{Q} e$ and  $d' \mathrel{Q} e$.
Define $\consi{Q}$, the \emph{compatible extension} of $Q$, to be
$ \setdef{(d,d')}{\text{$d$ is $Q$-compatible with $d'$}} $.
\end{definition}
For any preorder $Q$, compatible extension has the following evident properties:
\begin{enumerate}
    \item ${\consi{Q}} \supseteq {Q}$ (if $d \mathrel{Q} e$ then $e$ is a witness to the compatibility of $d$ and $e$, since $Q$ is reflexive).
    \item ${\consi{Q}}$ is reflexive and symmetric (but not, in general, transitive).
\end{enumerate}
A candidate general notion of termination-insensitive noninterference is then to use properties of the form
\[
    f: P \ERarrow {\consi{Q}}
\]
where $P$ and $Q$ are complete preorders.
This captures the essential idea outlined above,
and passes at least one sanity check:
$f: P \ERarrow {\consi{Q}}$ is indeed a weaker property than
$f: P \ERarrow Q$ (simply because ${\consi{Q}} \supseteq Q$).
However, a drawback of this choice is that it lacks a strong composition property. In general,
${f: P \ERarrow {\consi{Q}}} \wedge {g: Q \ERarrow {\consi{R}}}$
does \emph{not} imply that
${g \circ f}: P \ERarrow {\consi{R}}$.
For a counterexample, consider the following function
$g \in [A \rightarrow A]$,
where $A = \{0,1,2\}_\bot$:
\\[1ex]
\begin{minipage}{0.5\textwidth}
\[
    g(a) = \left\{\begin{array}{cl}
        \bot    & \mbox{ if } a = \bot \\
        0       & \mbox{ if } a = 0 \\
        1       & \mbox{ if } a = 1 \\
        \bot    & \mbox{ if } a = 2
    \end{array}\right.
\]
\end{minipage}
$Q = \mbox{}$ \begin{minipage}[c]{0.4\textwidth}
\includegraphics[width=0.3\textwidth]{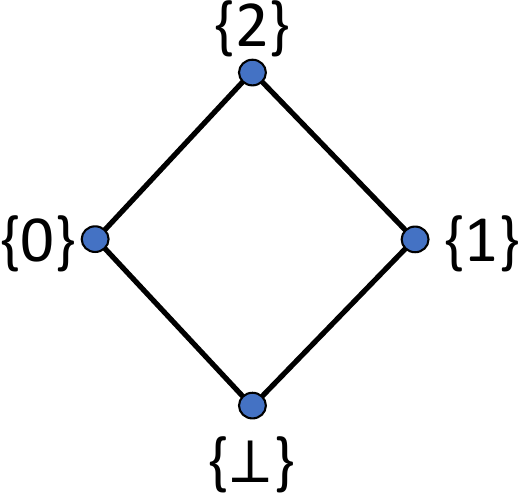}
\end{minipage}
\\[1ex]
Let $Q$ be the complete preorder whose underlying equivalence relation is the identity relation but which orders the elements of $A$ in a diamond shape, as pictured above.
It is easily checked that
$g: Q \ERarrow {\consi{(\poleq)}}$.
Now, since $Q$ has a top element, $\consi{Q}$ is just $\All$,
so for \emph{every} $P$ and $f$ of appropriate type, it will hold that
$f: P \ERarrow {\consi{Q}}$.
But it is \emph{not} true that
${g \circ f}: P \ERarrow {\consi{(\poleq)}}$ holds for every $P$ and $f$
(take $P = \All$ and $f = \id$, for example).

Clearly, the above counterexample is rather artificial. Indeed, it is hard to see how we might construct a program with denotation $g$ such that a termination-insensitive analysis could be expected to verify
$g: Q \ERarrow {\consi{(\poleq)}}$.
Notice that $g$ not only fails to send $Q$-related inputs to $(\poleq)$-related outputs, it effectively ignores the ordering imposed by $Q$ entirely, in that it fails even to preserve $Q$-compatibility.
This suggests a natural strengthening of our candidate notion.
We define our generalisation of termination-insensitive noninterference over the lattice of computable information to be ``preservation of compatibility'':
\begin{definition}[Generalised Termination-Insensitivity]
  \label{def:TIarrow}
  Let $f \in [D \rightarrow E]$ and let
$P$ and $Q$ be elements of $\LOCI(D)$ and $\LOCI(E)$, respectively. Define:
\[ f: P \tiarrow Q \quad \iff \quad f: {\consi{P}} \ERarrow {\consi{Q}} \]
\end{definition}
Crucially, although this is stronger than our initial candidate, it is still a weakening of ${\_ \ERarrow \_}$:
\begin{lemma}
Let $f \in [A \rightarrow B]$. Let $P$ and $Q$ be complete preorders on $A$ and $B$, respectively. Then
\[
    f: P \ERarrow Q \quad \text{implies} \quad f: P \tiarrow Q.
\]
\begin{proof}
Assume $f: P \ERarrow Q$ and suppose $x \mathrel{\consi{P}} y$.
Since $x \mathrel{\consi{P}} y$, there is some $z$ such that
$x \mathrel{P} z$ and $y \mathrel{P} z$.
Since $f: P \ERarrow Q$, we have
${f(x)} \mathrel{Q} {f(z)}$ and ${f(y)} \mathrel{Q} {f(z)}$,
hence ${f(x)} \mathrel{\consi{Q}} {f(y)}$.
\end{proof}
\end{lemma}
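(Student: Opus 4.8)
The plan is to unfold the definition of $\tiarrow$ and reduce the claim to a direct relational argument. By Definition~\ref{def:TIarrow}, the conclusion $f: P \tiarrow Q$ is literally $f: \consi{P} \ERarrow \consi{Q}$, so it suffices to show that, under the hypothesis $f: P \ERarrow Q$, every pair related by $\consi{P}$ is mapped by $f$ to a pair related by $\consi{Q}$.

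First I would take an arbitrary pair $x, y$ with $x \mathrel{\consi{P}} y$. By Definition~\ref{def:consistent}, compatibility is witnessed by a common $P$-upper bound, so there is some $z$ with $x \mathrel{P} z$ and $y \mathrel{P} z$. The key idea is that the $f$-image of this witness will serve as a witness on the output side.

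Applying the hypothesis $f: P \ERarrow Q$ to the two facts $x \mathrel{P} z$ and $y \mathrel{P} z$ yields $f(x) \mathrel{Q} f(z)$ and $f(y) \mathrel{Q} f(z)$. Thus $f(z)$ witnesses that $f(x)$ and $f(y)$ are $Q$-compatible, i.e.\ $f(x) \mathrel{\consi{Q}} f(y)$, which is exactly what is required.

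There is essentially no obstacle here: the argument is a one-step unwinding of the two compatibility definitions through the hypothesis. The only point worth flagging is what the proof does \emph{not} need. It never appeals to transitivity (which $\consi{Q}$ generally lacks, by the remarks following Definition~\ref{def:consistent}), nor to monotonicity or continuity of $f$; the hypothesis $f: P \ERarrow Q$ alone does all the work. The conceptual content is simply that $Q$-compatibility is certified by a common upper bound, and $f$ transports such a certificate from the input side to the output side.
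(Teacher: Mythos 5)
Your proposal is correct and is essentially the same argument as the paper's proof: unfold $\tiarrow$ to $\consi{P} \ERarrow \consi{Q}$, take the common $P$-upper bound $z$ witnessing $x \mathrel{\consi{P}} y$, and push it through the hypothesis so that $f(z)$ witnesses $f(x) \mathrel{\consi{Q}} f(y)$. Your added remarks (that neither transitivity of $\consi{Q}$ nor continuity of $f$ is needed) are accurate observations, though not part of the paper's proof.
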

Furthermore, Definition~\ref{def:TIarrow} gives us both compositionality and ``subtyping'':
\begin{proposition}
\label{prop:subTI-compTI}
The following inference rules are valid for all continuous functions and elements of $\LOCI$ of appropriate type:\[
\infer[\emph{SubTI}]{\;P' \LoCIgeq P \;\;\; f: P \tiarrow Q \;\;\; Q \LoCIgeq Q'\;}
{\;f: {P'} \tiarrow {Q'}\;}
\hspace{3em}
\infer[\emph{CompTI}]{\;f: P \tiarrow Q \;\;\; g: Q \tiarrow R\;}
{\;g \circ f: {P} \tiarrow {R}\;}
\]
\begin{proof}
We rely on the general Sub and Comp rules (Fact~\ref{fact:sub-comp}).

For SubTI, the premise for $f$ unpacks to
$f: {\consi{P}} \ERarrow {\consi{Q}}$ and the conclusion unpacks to
$f: \consi{P'} \ERarrow \consi{Q'}$.
It suffices then to show that $P' \LoCIgeq P$ implies
${\consi{P'}} \subseteq {\consi{P}}$
(and similarly for $Q, Q')$,
since we can then apply the general Sub rule directly.
So, suppose $P' \LoCIgeq P$,
hence $P' \subseteq P$,
and suppose $x \mathrel{\consi{P'}} y$.
Then, for some $z$, we have $x \mathrel{P'} z$ and $y \mathrel{P'} z$,
thus $x \mathrel{P} z$ and $y \mathrel{P} z$, thus $x \mathrel{\consi{P}} y$,
as required.

For CompTI we observe that it is simply a specialisation of the general Comp rule, since the premises unpack to
$f: {\consi{P}} \ERarrow {\consi{Q}}$ and
$g: {\consi{Q}} \ERarrow {\consi{R}}$, while the conclusion unpacks to
$g \circ f: \consi{P} \ERarrow \consi{R}$.
\end{proof}
\end{proposition}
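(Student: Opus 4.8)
The plan is to reduce both rules to the general Sub and Comp rules of Fact~\ref{fact:sub-comp} by unfolding Definition~\ref{def:TIarrow}. First I would rewrite every $\tiarrow$-judgement occurring in the two rules: $f: P \tiarrow Q$ becomes $f: \consi{P} \ERarrow \consi{Q}$, and likewise for the other judgements. After this rewriting, SubTI asks us to pass from $f: \consi{P} \ERarrow \consi{Q}$ to $f: \consi{P'} \ERarrow \consi{Q'}$, while CompTI asks us to compose $f: \consi{P} \ERarrow \consi{Q}$ with $g: \consi{Q} \ERarrow \consi{R}$ to obtain $g \circ f: \consi{P} \ERarrow \consi{R}$.

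The one technical ingredient needed for SubTI is monotonicity of the compatible-extension operator with respect to set inclusion: if $X \subseteq Y$ then $\consi{X} \subseteq \consi{Y}$. I would prove this directly from Definition~\ref{def:consistent}: if $d$ and $d'$ are $X$-compatible via a witness $e$ (so $d \mathrel{X} e$ and $d' \mathrel{X} e$), then $X \subseteq Y$ gives $d \mathrel{Y} e$ and $d' \mathrel{Y} e$, so $d$ and $d'$ are $Y$-compatible. With this lemma in hand I would translate the $\LOCI$ orderings in the premises into set inclusions (recall that $P' \LoCIgeq P$ unfolds to $P' \subseteq P$ and $Q \LoCIgeq Q'$ to $Q \subseteq Q'$). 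Monotonicity then yields $\consi{P'} \subseteq \consi{P}$ (shrinking the precondition) and $\consi{Q} \subseteq \consi{Q'}$ (enlarging the postcondition), which are exactly the side conditions required to apply the general Sub rule to $f: \consi{P} \ERarrow \consi{Q}$ and derive $f: \consi{P'} \ERarrow \consi{Q'}$, i.e. $f: P' \tiarrow Q'$.

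For CompTI the argument is immediate and needs no auxiliary lemma: the intermediate relation produced by the first judgement, $\consi{Q}$, is syntactically identical to the one consumed by the second, so the unfolded premises $f: \consi{P} \ERarrow \consi{Q}$ and $g: \consi{Q} \ERarrow \consi{R}$ chain together under the general Comp rule to give $g \circ f: \consi{P} \ERarrow \consi{R}$, which is precisely $g \circ f: P \tiarrow R$.

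I expect no serious obstacle, since the proposition is engineered so that the substantive design decision has already been made in Definition~\ref{def:TIarrow}, namely applying $\consi{\cdot}$ to \emph{both} sides of the judgement rather than only to $Q$. The only point demanding care is bookkeeping on the orientation of the order relations---confirming that $\LoCIgeq$ on complete preorders corresponds to $\subseteq$ on the underlying relations, so that the inclusions line up with the direction of the Sub rule (precondition shrinks, postcondition grows). The compatible-extension monotonicity lemma is the sole nontrivial step, and even it is a one-line unfolding of the definition of compatibility.
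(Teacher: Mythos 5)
Your proof is correct and follows essentially the same route as the paper's: both unfold $\tiarrow$ via Definition~\ref{def:TIarrow} and reduce SubTI and CompTI to the general Sub and Comp rules of Fact~\ref{fact:sub-comp}, with CompTI an immediate specialisation. The only cosmetic difference is that you isolate the key step as an explicit monotonicity lemma for $\consi{\cdot}$ (if $X \subseteq Y$ then $\consi{X} \subseteq \consi{Y}$), whereas the paper inlines the identical witness argument, applied contravariantly to obtain $\consi{P'} \subseteq \consi{P}$ and $\consi{Q} \subseteq \consi{Q'}$.
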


\subsection{Impossibility of a Knowledge-based Definition}
\label{sec:impossibility}
In this section we return to the question of whether there exists a knowledge-based characterisation which matches our definition of termination-insensitivity, and show why this cannot be the case. 

Suppose we start with an ``ideal'' property of the form $f: P \ERarrow S$, where $S$ is assumed to be realisable (by $\Cp(S)$), and (for simplicity but without loss of generality) $P$ is over a discrete domain (so ${\Cp(P)} = P$).

\newcommand{\lstinmath}[1]{\operatorname
                            {\text{\lstinline[language=haskell]{#1}}}}
\newcommand{\lst}[1]{\ensuremath{\lstinmath{#1}}}
\newcommand{\kKite}{\lstinmath{Kite}}
\newcommand{\kUnit}{\lstinmath{()}}
\newcommand{\kBody}{\lstinmath{Body}}
\newcommand{\kTail}{\lstinmath{Tail}}
\newcommand{\kF}{\lstinmath{f}}
\newcommand{\kFv}{\lstinmath{f'}}
\newcommand{\kG}{\lstinmath{g}}
\newcommand{\kTrue}{\lstinmath{True}}
\newcommand{\kFalse}{\lstinmath{False}}
\newcommand{\kSeq}{\lstinmath{seq}}

The question, which we will answer in the negative, is whether we can  construct a ``termination observer'' $T$ from the structure of the codomain of $f$ such that
\[
f:\ P \LOIjoin {f^\ast(T)} \ERarrow S \iff f:\ P \tiarrow {\Cp(S)}
\]
We build a counterexample based on the following Haskell code:
\begin{lstlisting}[belowskip=-1pt]
  data Kite = Body () () | Tail
\end{lstlisting}
\begin{lstlisting}[belowskip=-1pt]
  spin = spin 
\end{lstlisting}
\begin{lstlisting}
  f h = if h then Body () spin else Body spin ()
  g h = if h then Body () spin else Tail 
\end{lstlisting}

We will use some security intuitions to present the example (since that is the primary context in which termination-insensitivity is discussed). 
Suppose that we view the input to $\kF$ and $\kG$ as either $\kTrue$ or $\kFalse$, and that this is a secret.  We are being sloppy here and ignoring the fact that the input domain is lifted, but that has no consequence on the following. 

\begin{wrapfigure}{r}{0.4\textwidth}\centering
    \includegraphics[width=0.3\textwidth]{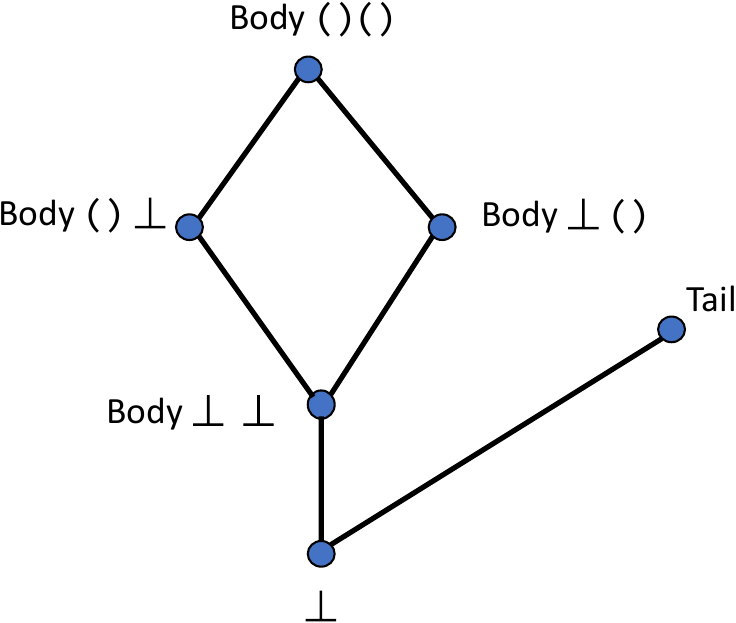}
    \caption{Domain representing \texttt{Kite}}
    \label{fig:kite}
\end{wrapfigure}
Now consider the output to be public, and the question is whether $\kF$ and $\kG$ satisfy termination-insensitive noninterference. Standard noninterference in this case would be the property $(\_ : \All \ERarrow \Id)$. 
Our definition of termination-insensitive noninterference is thus $(\_ : \All \tiarrow (\sqsubseteq))$ where $\sqsubseteq$ here is the ordering on the domain corresponding to \verb!Kite!, namely the domain in \Fig{kite}.
By our definition, $\kF$ satisfies termination-insensitive noninterference but $\kG$ does not. This is perhaps not obvious for $\kF$ because a typical termination-insensitive \emph{analysis} would reject it anyway, so it is instructive to see a semantically equivalent definition $\kFv$ (assuming well-defined Boolean input) which would pass a termination-insensitive analysis\footnote{One should not be surprised that a program analysis can yield different results on semantically equivalent programs -- as Rice's theorem \cite{Rice:Classes:53} shows, this is the price to pay for any non-trivial analysis which is decidable, and having a semantic soundness condition.}.  
\begin{lstlisting}
f' h = Body (assert h ()) (assert (not h) ())
         where  assert b y = seq (if b then () else spin) y 
\end{lstlisting}
 
We claim that a semantic definition of termination-insensitive noninterference should accept $\kFv$ (and hence $\kF$) but reject $\kG$. The reason for this is a fundamental feature of sequential computation, embodied in programming constructs such as call-by-value computation or sequential composition in imperative code.  In Haskell, sequential computation is realised by a primitive function $\kSeq$, which computes its first argument then, if it terminates, returns its second argument.  Consider an expression of the form \lst{seq a b}  where \verb!a! may depend on a secret, but \verb!b! provably does not.  The only way that such a computation reveals information about the secret is if the \emph{termination} of \verb!a! depends on the secret.  This is the archetypal example of the kind of leak that a termination-insensitive analysis ignores. A particular case of this is the function \lst{assert} in the code above, which leaks the value of its first parameter via (non)termination. For this reason, even when \verb!h! is a secret, terms \lst{assert h ()} and 
\lst{assert (not h) ()} are considered termination-insensitive noninterfering (and thus so is $\kFv$). 
This example forms the basis of our impossibility claim, the technical content of which is the following:
\begin{proposition}
There is no termination observer $T$
(i.e.\ an equivalence relation)
on the $\kKite$ domain for which 
$\kF:\ \All \LOIjoin {{\kF}^\ast(T)} \ERarrow \Id$ but for which this does not hold for $\kG$. 
\end{proposition}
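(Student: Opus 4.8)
The plan is to reduce the displayed left-hand property to elementary statements about which outputs $T$ identifies, and then to derive a contradiction from the failure of transitivity of domain-theoretic compatibility on the $\kKite$ domain. Write $u = \kBody\,\kUnit\,\bot$ and $v = \kBody\,\bot\,\kUnit$ for the two $\kBody$-values defined in exactly one field, and $t = \kTail$ (see \Fig{kite}); unfolding denotations gives $\kF(\kTrue) = u$, $\kF(\kFalse) = v$, $\kG(\kTrue) = u$, $\kG(\kFalse) = t$. First I would simplify: because the input is taken to be discrete, $\All$ is the full relation on it, and since $\LOIjoin$ is intersection, $\All \LOIjoin {\kF}^\ast(T) = {\kF}^\ast(T)$. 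Unwinding $\ERarrow$ with ${\kF}^\ast(T) = \setdef{(x,y)}{\kF(x)\mathrel{T}\kF(y)}$, the property $\kF: \All \LOIjoin {\kF}^\ast(T) \ERarrow \Id$ says exactly that any two inputs with $T$-related images have equal images; for a map with two distinct outputs this is a single disequation. Thus the property holds for $\kF$ iff $u \not\mathrel{T} v$, and it holds for $\kG$ iff $u \not\mathrel{T} t$ — equivalently, it fails for $\kG$ iff $u \mathrel{T} t$. (Discreteness of the input, as assumed, is what lets each verdict reduce to a single $T$-(non)identification among $u$, $v$, and $t$.)

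Next I would compute the \LOCI\ verdicts to see what a $T$ realising the biconditional is obliged to do. Since $\Cp(\Id)$ is the domain order ${\poleq}$, Definition~\ref{def:TIarrow} makes $\_ : \All \tiarrow ({\poleq})$ demand that every pair of outputs be $\consi{({\poleq})}$-related, i.e. share an upper bound. In $\kKite$, $u$ and $v$ have the common upper bound $\kBody\,\kUnit\,\kUnit$, whereas $t$ is maximal and incomparable to every $\kBody$-value, so $t$ has no upper bound in common with $u$ (nor with $v$). Hence \LOCI\ accepts $\kF$ but rejects $\kG$, so a matching $T$ must satisfy both $u \not\mathrel{T} v$ and $u \mathrel{T} t$.

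The decisive step is to show these cannot coexist for an equivalence relation, and the plan is to exploit the symmetry of the two $\kBody$-fields. Since the sought equivalence must hold for every continuous function into $\kKite$, I may also test the mirror function $\kG'$ with $\kG'(\kTrue) = v$ and $\kG'(\kFalse) = t$; by the same analysis \LOCI\ rejects $\kG'$ too (since $v$ and $t$ are incompatible just as $u$ and $t$ are), so a matching $T$ must also make the property fail for $\kG'$, i.e. $v \mathrel{T} t$. Transitivity of $T$ then forces $u \mathrel{T} v$ from $u \mathrel{T} t$ and $t \mathrel{T} v$, contradicting $u \not\mathrel{T} v$. The underlying reason — and the main thing the proof must surface — is that agreeing with \LOCI\ would require $T$ to be, off the diagonal, exactly the complement of the compatible extension $\consi{Q}$, and on $\{u,v,t\}$ this complement is not transitive; this is precisely the expressive gap that lets \LOCI, built on the non-transitive operator $Q \mapsto \consi{Q}$, distinguish $\kF$ from $\kG$ where no single equivalence relation can. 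The routine but essential obstacle is the careful verification of the three compatibility facts ($u,v$ share a bound; $t$ shares a bound with neither) and of the reduction of each $\ERarrow$-property to a single $T$-(non)identification.
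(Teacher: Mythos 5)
Your proof is correct and is essentially the paper's own argument: the paper likewise reduces each property to a single (non)identification among $\kBody\,\kUnit\,\bot$, $\kBody\,\bot\,\kUnit$ and $\kTail$, and its case split with the step ``without loss of generality \ldots otherwise adjust $\kG$ accordingly'' is precisely your explicit appeal to the mirror function $\kG'$, with symmetry and transitivity of $T$ delivering the same contradiction. The only difference is presentational: you spell out the LoCI verdicts and the mirrored $\kG'$ that the paper leaves implicit in its WLOG.
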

\begin{proof}
The problem is to define $T$ in such a way that it distinguishes different $\kBody$ instances but none of the $\kBody$ instances from $\kTail$, while still being an equivalence relation. $T$ would either have to (1) relate $\kBody \kUnit \bot$ and 
$\kBody \bot \kUnit$ or (2) distinguish them and also distinguish one of them from $\kTail$ (if it related both to $\kTail$, then, by transitivity and symmetry, it would also relate the two $\kBody$ instances). Without loss of generality, assume 
$(\kBody \kUnit \bot, \kTail) \not\in T$ 
(otherwise adjust $\kG$ accordingly).
In case (1), $\kF$ does not have property $\All \LOIjoin {{\kF}^\ast(T)} \ERarrow \Id$, because ${\kF}^\ast(T)$ is $\All$, but $\kF \kTrue \neq \kF \kFalse$.
In case (2), $\kG$ does have this property  because ${\kG}^\ast(T)$ is the identity relation.
\end{proof}

\subsection{Case Study: Nondeterminism and Powerdomains}
\label{sec:nondeterminism}
In this section we consider the application of generalised termination-insensitive properties to nondeterministic languages modelled using \emph{powerdomains} \cite{Plotkin:Powerdomain}. 
In the first part we instantiate our definition for a finite powerdomain representing a nondeterministic computation over lifted Booleans and illustrate that it ``does the right thing''.  In the second part we prove that we have an analogous compositional reasoning principle to the function composition property CompTI (Proposition~\ref{prop:subTI-compTI}), but replacing regular composition with the Kleisli composition of the finite powerdomain monad.

\para{Example: Termination-Insensitive Nondeterminism}
\newcommand{\Boolean}{\mathrm{Bool}}
We are not aware of any specific studies of termination-insensitive noninterference for nondeterministic languages, and the definitions in this paper were conceived independently of this example, so it provides an interesting case study. 

\begin{wrapfigure}{r}{0.4\textwidth}
\centering
    \vspace{-10pt}
    \includegraphics[width=0.3\textwidth]{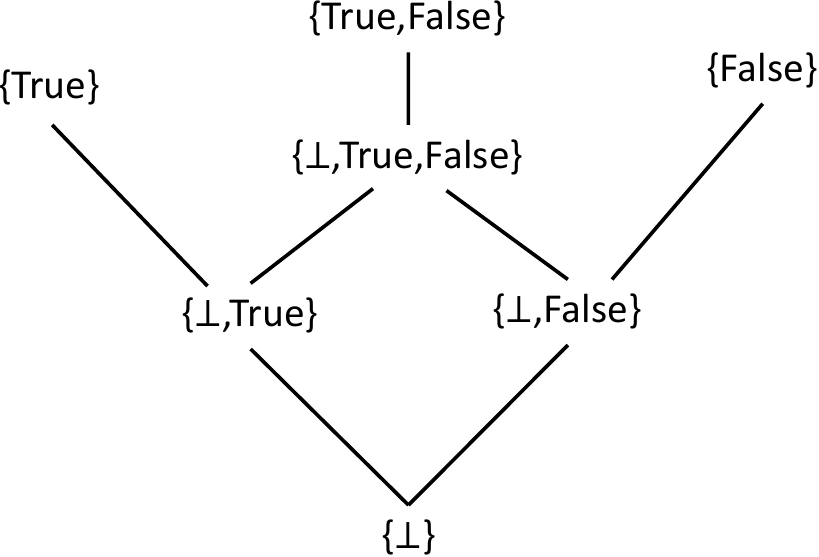}
    \caption{Powerdomain $\powerdomain(\Boolean_\bot)$}
    \label{fig:powerdomainBool}
\end{wrapfigure}
Suppose we have a nondeterministic program $C$, modelled as a function in
\(\Boolean \rightarrow \powerdomain(\Boolean_\bot)\), where \(\powerdomain\) is the Plotkin powerdomain
constructor and \(\Boolean = \set{\kTrue,\kFalse}\).
In the case of powerdomains over finite domains, the elements can be viewed as convex subsets of the underlying domain
(see below for more technical details).
In this section we only consider such finite powerdomains.
\(\powerdomain(\Boolean_\bot)\), for example, is given in Figure~\ref{fig:powerdomainBool}.

Each element of the powerdomain represents a set of possible outcomes of a nondeterministic computation.  Let's consider the input of some program $C$ to be a secret, and the output public.
The property of interest here is what we can call TI-security, i.e.,  \(C : \All \tiarrow (\sqsubseteq)\). 

To explore this property, let us assume an imperative programming language with the following features:
\begin{itemize}
    \item a choice operator $C_1 \NDchoice C_2$ which chooses nondeterministically to compute either $C_1$ or $C_2$,
\item  a Boolean input \texttt{x}, and 
\item an output
statement to deliver a final result.
\end{itemize} 
Note how the semantics of ${\NDchoice}$ can be given as set union of values in the powerdomain. 

Under our definition, the compatible extension of the domain ordering
for \(\powerdomain(\Boolean_\bot)\) relates all the points in the lower diamond to each other. Note that in particular this means that $\set{\bot, \kTrue}$ and $\set{\bot, \kFalse}$ are related.   
This in turn means that the following program $C$ is TI secure:
\begin{lstlisting}
          while True { } | output x
\end{lstlisting}
This looks suspicious, to say the least.
A static analysis would never allow such a program.
But our definition says that it is TI-secure,
since the denotation of $C$ maps
$\kTrue$ to $\set{\bot, \kTrue}$  and
$\kFalse$ to $\set{\bot, \kFalse}$,
and these are compatible by virtue of the common upper bound $\set{\bot, \kTrue, \kFalse}$.

To show that our definition is, nonetheless, ``doing the right thing'',
we can write $C$ in a semantically
equivalent way as:
\begin{lstlisting}
        ( while x { } ; output False ) | ( while (not x) { }; output True)
\end{lstlisting}
Not only is this equivalent, but the insecurity apparent in the first
rendition of the program is now invisible to a termination-insensitive
analysis.

Now we turn to properties relevant to compositional reasoning about generalised termination-insensitivity for nondeterministic programs modelled using finite powerdomains.

\para{Compositional Reasoning for Finite Powerdomains}
\label{sec:powerdomains}
We review the basic theory of finite Plotkin powerdomains,
as developed in \cite{Plotkin:Powerdomain}.
We then define a natural lifting of complete preorders to powerdomains and show that this yields pleasant analogues (Corollary~\ref{corollary:subP-compP})
of the SubTI and CompTI inference rules (Proposition~\ref{prop:subTI-compTI}) with respect to the powerdomain monad. 
Note: throughout this section we restrict attention to finite posets, so a preorder is complete iff it contains the partial order of its domain.

The Plotkin powerdomain construction uses the so-called Egli-Milner ordering on subsets of a poset, derived from the order of the poset.
For our purposes it is convenient to generalise the Egli-Milner definition to arbitrary binary relations:
\begin{definition}[Egli-Milner extension]
  Let $R$ be a binary relation on $A$. Then $\EM(R)$ is the binary relation on subsets of $A$ defined by
  \[ {X \mathrel{\EM(R)} Y} \iff
    {(\forall x \in X.\exists y \in Y. x \mathrel{R} y)
    \wedge
    (\forall y \in Y.\exists x \in X. x \mathrel{R} y)}
  \]
\end{definition}
\begin{fact}
\label{fact:EM}\quad\emph{(1)} $\EM(\_)$ is monotone. \quad\emph{(2)} $\EM(\_)$ preserves reflexivity, transitivity, and symmetry.\end{fact}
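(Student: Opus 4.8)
The plan is to prove both claims by directly unfolding the definition of the Egli-Milner extension; no structural machinery is needed beyond careful bookkeeping of existential witnesses. Recall that $X \mathrel{\EM(R)} Y$ is the conjunction of two ``covering'' conditions, which I will call the \emph{forward half} ($\forall x \in X.\,\exists y \in Y.\, x \mathrel{R} y$) and the \emph{backward half} ($\forall y \in Y.\,\exists x \in X.\, x \mathrel{R} y$). For part (1), monotonicity, I would assume $R \subseteq S$ and $X \mathrel{\EM(R)} Y$; since every $R$-pair is an $S$-pair, both halves hold verbatim with $R$ replaced by $S$, giving $X \mathrel{\EM(S)} Y$ in one line.

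For part (2) I would treat reflexivity, symmetry, and transitivity in turn. \emph{Reflexivity} is immediate: given reflexive $R$ and any set $X$, for each $x \in X$ the witness $x$ itself satisfies $x \mathrel{R} x$, which discharges both halves of $X \mathrel{\EM(R)} X$. For \emph{symmetry} I would observe that exchanging $X$ and $Y$ swaps the two halves: the forward half of $X \mathrel{\EM(R)} Y$, after applying symmetry of $R$ to each witnessed pair, supplies the backward half of $Y \mathrel{\EM(R)} X$, and dually the backward half of $X \mathrel{\EM(R)} Y$ supplies the forward half of $Y \mathrel{\EM(R)} X$. Both cases are routine.

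\emph{Transitivity} is the one case requiring genuine chaining, and is where I would spend the most care. Assuming $X \mathrel{\EM(R)} Y$ and $Y \mathrel{\EM(R)} Z$, I aim for $X \mathrel{\EM(R)} Z$. For the forward half I would fix $x \in X$, use the forward half of the first hypothesis to get $y \in Y$ with $x \mathrel{R} y$, then the forward half of the second to get $z \in Z$ with $y \mathrel{R} z$, and conclude $x \mathrel{R} z$ by transitivity of $R$. For the backward half I would fix $z \in Z$, use the backward half of the second hypothesis to descend to $y \in Y$ with $y \mathrel{R} z$, then the backward half of the first to reach $x \in X$ with $x \mathrel{R} y$, and chain again. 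The point worth flagging is that the two halves of $\EM$ must be used in a crossed pattern (forward-then-forward for one half, backward-then-backward for the other) so that the intermediate witness $y$ always lands in the shared set $Y$.

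There is no substantive obstacle here: the whole argument is a mechanical unfolding, and the only mild care needed is routing the existential witnesses through $Y$ in the transitivity case. I would therefore keep the write-up compact, dispatching monotonicity, reflexivity, and symmetry in a sentence or two each and giving transitivity the only explicit witness-chasing.
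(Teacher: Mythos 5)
Your proof is correct, and it is exactly the routine witness-chasing argument the paper has in mind: the statement is presented as a \emph{Fact} with no proof given, precisely because the verification is the mechanical unfolding you describe. Your handling of the transitivity case --- routing both existential witnesses through the shared set $Y$, using the forward halves of both hypotheses for one covering condition and the backward halves for the other --- is the only step with any content, and you get it right.
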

The Egli-Milner ordering on subsets of a poset $A$ is then $\EM(\poleq)$.
Note that (2) entails that $\EM(R)$ is a preorder whenever $R$ is a preorder.
However, since antisymmetry is \emph{not} preserved, in general
$\EM(\poleq)$ is only a preorder, so to obtain a partial order it is necessary to quotient by the induced equivalence relation.
Conveniently, the \emph{convex} subsets provide a natural canonical representative for each equivalence class:
\begin{definition}[Convex Closure]
The \emph{convex closure} of $X$ is
$\Cv(X) \eqdef \setdef{ b \in A }{ a \in X, c \in X, a \poleq b \poleq c }$.
\end{definition}
\begin{fact}
\label{fact:convex}
\quad\emph{(1)} $\, \Cv$ is a closure operator.
\quad\emph{(2)} $\, \Cv(X)$ is the largest member of
    $\cell{X}_{\EM(\poleq)}$.
\end{fact}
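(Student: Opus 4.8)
The plan is to treat the two claims separately; both reduce to short ``sandwich'' arguments that use reflexivity and transitivity of $\poleq$ together with the extensivity of $\Cv$.

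For part (1) I would verify the three defining properties of a closure operator. \emph{Extensivity} ($X \subseteq \Cv(X)$) is immediate: any $b \in X$ satisfies $b \poleq b \poleq b$ with both sandwiching witnesses in $X$, so $b \in \Cv(X)$. \emph{Monotonicity} is equally direct: if $X \subseteq Y$ and $b \in \Cv(X)$, the witnesses $a, c \in X$ bracketing $b$ also lie in $Y$, whence $b \in \Cv(Y)$. The only step needing any work is \emph{idempotence}. The inclusion $\Cv(X) \subseteq \Cv(\Cv(X))$ follows from extensivity; for the reverse I would take $b \in \Cv(\Cv(X))$, unfold the definition twice to obtain $a', c' \in \Cv(X)$ with $a' \poleq b \poleq c'$ and then $a, c'' \in X$ with $a \poleq a'$ and $c' \poleq c''$, so that transitivity yields $a \poleq b \poleq c''$ with $a, c'' \in X$, giving $b \in \Cv(X)$.

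For part (2) I would first show that $\Cv(X)$ belongs to the class $\cell{X}_{\EM(\poleq)}$, i.e.\ that both $X \mathrel{\EM(\poleq)} \Cv(X)$ and $\Cv(X) \mathrel{\EM(\poleq)} X$ hold. Each of the four Egli--Milner conditions is discharged in one line: the ``halves'' quantifying over $X$ are witnessed by the element itself (using $X \subseteq \Cv(X)$ and reflexivity), while the ``halves'' quantifying over $\Cv(X)$ are witnessed by the lower (resp.\ upper) bracketing element $a$ (resp.\ $c$) supplied by the definition of $\Cv$. It then remains to establish maximality; here the conceptual point is that ``largest'' must be read with respect to set inclusion, since all members of a single $\EM(\poleq)$-equivalence class are by construction $\EM(\poleq)$-indistinguishable. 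I would take an arbitrary $Y \in \cell{X}_{\EM(\poleq)}$ and any $y \in Y$, extract $c \in X$ with $y \poleq c$ from $Y \mathrel{\EM(\poleq)} X$ and $a \in X$ with $a \poleq y$ from $X \mathrel{\EM(\poleq)} Y$, and conclude $y \in \Cv(X)$; hence $Y \subseteq \Cv(X)$.

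None of these steps is a genuine obstacle. The only place calling for care is bookkeeping: one must pull the correct conjunct out of each Egli--Milner statement so that the lower and upper bounds come out in the right direction. Combined with $\Cv(X) \in \cell{X}_{\EM(\poleq)}$, the inclusion $Y \subseteq \Cv(X)$ for every $Y$ in the class shows that $\Cv(X)$ is the inclusion-greatest member, completing the proof.
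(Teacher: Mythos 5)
Your proof is correct and complete: the paper states this as a Fact without proof (it is standard material from Plotkin's powerdomain theory), and your verification --- extensivity/monotonicity/idempotence via reflexivity and transitivity of $\poleq$, membership of $\Cv(X)$ in the class by checking all four Egli--Milner conjuncts, and inclusion-maximality by bracketing an arbitrary $y \in Y$ between witnesses drawn from the two directions of the equivalence --- is exactly the routine argument the authors leave implicit. Your observation that ``largest'' must mean largest under set inclusion (since the class is $\EM(\poleq)$-homogeneous by definition) is also the right reading of the statement.
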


\begin{definition}[Finite Plotkin Powerdomain]
  Let $(A,\poleq)$ be a finite poset.
  Then the \emph{Plotkin powerdomain} $\powerdomain(A)$ is
  the poset of all non-empty convex subsets of $A$ ordered by $\EM(\poleq)$.
The union operation is defined by
  $X \plotkinunion Y \eqdef {\Cv(X \cup Y)}$.
\end{definition}

The powerdomain constructor is naturally extended to a monad,
allowing us to compose functions with types of the form
$A \rightarrow \powerdomain(B)$.
\begin{definition}[Kleisli-extension]
  Let $A, B$ be finite posets.
  Let $f \in [A \rightarrow \powerdomain(B)]$.
  The Kleisli-extension of $f$ is $\kleisli{f} \in [\powerdomain(A) \rightarrow \powerdomain(B)]$ defined by
$ \kleisli{f}(X) = \Cv(\bigcup_{x \in X} f(x)). $ 
\end{definition}

\begin{definition}[Kleisli-composition]
  Let $A, B, C$ be finite posets and let
  $f \in [A \rightarrow \powerdomain(B)]$ and
  $g \in [B \rightarrow \powerdomain(C)]$.
  Then the Kleisli-composition $f;g \in [A \rightarrow \powerdomain(C)]$ is
  $\kleisli{g} \circ f$.
\end{definition}

We lift the powerdomain constructor to binary relations in the obvious way:
\begin{definition}
  Let $R$ be a binary relation on finite poset $A$.
  Then $\powerdomain(R)$ is the relation on $\powerdomain(A)$ obtained by restricting $\EM(R)$ to non-empty convex sets.
\end{definition}
\begin{lemma}
If $P$ is a complete preorder on finite poset $A$ then
$\powerdomain(P)$ is a complete preorder on $\powerdomain(A)$.
\end{lemma}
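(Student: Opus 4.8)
The plan is to reduce everything to the two clauses of Fact~\ref{fact:EM}, exploiting the finite-case simplification noted at the start of this subsection: on a finite poset a preorder is complete exactly when it contains the partial order of its carrier. Here the carrier is the finite poset $\powerdomain(A)$, whose partial order is $\powerdomain(\poleq)$, i.e.\ $\EM(\poleq)$ restricted to non-empty convex sets. So the lemma unpacks into exactly two obligations: (a) $\powerdomain(P)$ is a preorder on the non-empty convex subsets, and (b) $\powerdomain(P) \supseteq \powerdomain(\poleq)$.

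For obligation (a), I would argue as follows. Since $P$ is a complete preorder it is in particular reflexive and transitive, so by Fact~\ref{fact:EM}~(2) the relation $\EM(P)$ is a preorder on \emph{arbitrary} subsets of $A$. The relation $\powerdomain(P)$ is by definition the restriction of $\EM(P)$ to the non-empty convex subsets, and restriction of any preorder to a subfamily of its carrier trivially preserves reflexivity and transitivity. Hence $\powerdomain(P)$ is a preorder. For obligation (b), I would use completeness of $P$ in the form $\poleq_A \subseteq P$ (the finite-case characterisation). Then Fact~\ref{fact:EM}~(1), monotonicity of $\EM(\_)$, gives $\EM(\poleq) \subseteq \EM(P)$, and restricting both sides to the common carrier of non-empty convex subsets yields $\powerdomain(\poleq) \subseteq \powerdomain(P)$, which is precisely containment of the domain order. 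Combining (a) and (b) with the finite-case characterisation of completeness finishes the proof.

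I do not expect a genuine obstacle here; the work is entirely in lining up definitions. The one point that warrants a moment's care is recognising that completeness of $\powerdomain(P)$ requires only that it \emph{contain} the partial order $\powerdomain(\poleq)$ of $\powerdomain(A)$, so that antisymmetry of $\EM(\poleq)$ on convex sets (the reason the convex representatives were introduced, via Fact~\ref{fact:convex}) plays no role in this argument. It is also worth noting explicitly that the carrier of $\powerdomain(P)$ is the set of $\poleq_A$-convex sets rather than any notion of $P$-convexity, but since both $\powerdomain(P)$ and $\powerdomain(\poleq)$ live on this same carrier, the inclusion in (b) is unaffected.
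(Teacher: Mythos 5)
Your proof is correct and matches the route the paper intends: the lemma is stated without proof precisely because it follows immediately from Fact~\ref{fact:EM} together with the section's opening remark that, on finite posets, completeness reduces to containing the domain order. Your two obligations (preorder via Fact~\ref{fact:EM}~(2) under restriction, and $\powerdomain(P) \supseteq \powerdomain(\poleq)$ via Fact~\ref{fact:EM}~(1)) are exactly that argument, and your closing observations about antisymmetry and $\poleq_A$-convexity being the right notion are accurate.
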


Now, in order to establish our desired analogues of SubTI and CompTI, we must be able to relate $\consi{\powerdomain(P)}$ to $\consi{P}$.
The key properties are the following:
\begin{lemma}
\label{lemma:twiddle-EM}
Let $R$ be a preorder and let $P$ be a complete preorder. Then:
\\[1ex]
\mbox{}
\hfill 
\emph{(1)}\quad ${\consi{\EM(R)}} = {\EM(\consi{R})}$
\hfill
\emph{(2)}\quad $\Cv(X) \mathrel{\consi{\powerdomain(P)}} \Cv(Y)$ iff
$X \mathrel{\consi{\EM(P)}} Y$
\hfill
\emph{(3)}\quad ${\consi{\powerdomain(P)}} = {\powerdomain(\consi{P})}$

\end{lemma}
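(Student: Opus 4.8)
The plan is to prove (1) directly as the technical heart of the lemma, then obtain (2) from (1) together with a transparency property of convex closure, and finally read off (3) as an immediate corollary of (1) and (2). Throughout I will lean on Fact~\ref{fact:EM} (monotonicity of $\EM$ and its preservation of reflexivity, transitivity and symmetry), Fact~\ref{fact:convex} (that $\Cv(X)$ is the largest element of $\cell{X}_{\EM(\poleq)}$), and the fact that a complete preorder $P$ contains the domain order $\poleq$.

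For (1), I would argue by double inclusion that $\consi{\EM(R)} = \EM(\consi{R})$, first noting that $\EM(R)$ is a preorder by Fact~\ref{fact:EM}(2), so $\consi{\EM(R)}$ is well defined. For the inclusion $\consi{\EM(R)} \subseteq \EM(\consi{R})$, suppose $X \mathrel{\consi{\EM(R)}} Y$, witnessed by some $Z$ with $X \mathrel{\EM(R)} Z$ and $Y \mathrel{\EM(R)} Z$. Given $x \in X$, the first conjunct of $X \mathrel{\EM(R)} Z$ yields $z \in Z$ with $x \mathrel{R} z$, and the second conjunct of $Y \mathrel{\EM(R)} Z$ yields $y \in Y$ with $y \mathrel{R} z$; then $z$ witnesses $x \mathrel{\consi{R}} y$. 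The symmetric argument starting from a $y \in Y$ completes $X \mathrel{\EM(\consi{R})} Y$. The converse inclusion is where the real work lies: given $X \mathrel{\EM(\consi{R})} Y$, I must manufacture a single set $Z$ that both $X$ and $Y$ Egli--Milner-relate to under $R$. The key idea is to take $Z$ to be the set of all compatibility witnesses: for each $x \in X$ pick $e_x$ and $y_x \in Y$ with $x \mathrel{R} e_x$ and $y_x \mathrel{R} e_x$, symmetrically pick $e_y, x_y$ for each $y \in Y$, and let $Z$ collect all the $e_x$ and $e_y$. Checking $X \mathrel{\EM(R)} Z$ and $Y \mathrel{\EM(R)} Z$ is then a routine verification of the four Egli--Milner clauses, each discharged by the matching witness.

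Next I would isolate a transparency lemma for convex closure: for a complete preorder $P$ and any subsets $U, V$, replacing either argument of $\EM(P)$ by its convex closure does not change whether $U \mathrel{\EM(P)} V$ holds. This follows because $U$ and $\Cv(U)$ are $\EM(\poleq)$-related in both directions (Fact~\ref{fact:convex}(2)), hence $\EM(P)$-related in both directions since $\poleq \subseteq P$ and $\EM$ is monotone (Fact~\ref{fact:EM}(1)), so they can be spliced in and out using transitivity of $\EM(P)$ (Fact~\ref{fact:EM}(2)). With this in hand, (2) is straightforward. Unfolding $\consi{\powerdomain(P)}$, we have $\Cv(X) \mathrel{\consi{\powerdomain(P)}} \Cv(Y)$ iff there is a non-empty convex $Z$ with $\Cv(X) \mathrel{\EM(P)} Z$ and $\Cv(Y) \mathrel{\EM(P)} Z$; by transparency this is equivalent to $X \mathrel{\EM(P)} Z$ and $Y \mathrel{\EM(P)} Z$. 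The forward direction of (2) is then immediate, and for the backward direction, given an arbitrary witness $W$ to $X \mathrel{\consi{\EM(P)}} Y$ (necessarily non-empty, as $X$ is), its convex closure $\Cv(W)$ is a non-empty convex witness, again by transparency.

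Finally, (3) drops out by chaining (1) and (2): for non-empty convex $X, Y$ we have $\Cv(X) = X$ and $\Cv(Y) = Y$, so $X \mathrel{\consi{\powerdomain(P)}} Y$ iff $X \mathrel{\consi{\EM(P)}} Y$ by (2), iff $X \mathrel{\EM(\consi{P})} Y$ by (1), iff $X \mathrel{\powerdomain(\consi{P})} Y$ by the definition of $\powerdomain(\consi{P})$ as $\EM(\consi{P})$ restricted to non-empty convex sets. I expect the main obstacle to be the converse inclusion in (1): the witness $Z$ must satisfy all four Egli--Milner clauses simultaneously, and the natural ``union of witnesses'' construction has to be checked to cover the clauses that quantify over elements of $Z$ coming from the \emph{other} side, which is exactly where the paired witnesses $y_x$ and $x_y$ are needed.
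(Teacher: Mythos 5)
Your proof is correct, and since the paper states Lemma~\ref{lemma:twiddle-EM} without an explicit proof, your argument is exactly the intended one: the paired-witness construction ($e_x$, $y_x$ and $e_y$, $x_y$) for the hard inclusion ${\EM(\consi{R})} \subseteq {\consi{\EM(R)}}$ in (1), and the transparency of $\Cv$ under $\EM(P)$ (via Fact~\ref{fact:convex}(2), ${\poleq} \subseteq P$, monotonicity and transitivity of $\EM(P)$) for (2), with (3) as the chained corollary. You also correctly handle the two points where a careless version would fail: the witness set $Z$ must satisfy the $\forall z \in Z$ clauses using witnesses from the \emph{opposite} side, and the witness for (2)'s backward direction must be replaced by its (non-empty) convex closure to lie in $\powerdomain(A)$.
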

We then have:
\begin{theorem}
Let $A, B$ be finite posets and let
$f \in [A \rightarrow \powerdomain(B)]$.
Let $P, P'$ be a complete preorders on $A$ and let $Q$ be a complete preorder on $B$.
\begin{enumerate}
    \item If $P' \LoCIgeq P$ then ${\consi{\powerdomain(P')}} \subseteq {\consi{\powerdomain(P)}}$.
    \item If $f: P \tiarrow \powerdomain(Q)$ then $\kleisli{f}: \powerdomain(P) \tiarrow \powerdomain(Q)$.
\end{enumerate}
\pagebreak[2] \begin{proof}~
\begin{enumerate}
    \item
        By definition $P' \LoCIgeq P$ iff $P' \subseteq P$ hence,
        as argued in the proof of Proposition~\ref{prop:subTI-compTI},
        $P' \LoCIgeq P$ implies ${\consi{P'}} \subseteq {\consi{P}}$.
        The conclusion then follows by monotonicity of $\EM(\_)$ and
        Lemma~\ref{lemma:twiddle-EM}.
    \item 
        Assume $f: P \tiarrow \powerdomain(Q)$.
        By the definition of $\kleisli{f}$ and Lemma~\ref{lemma:twiddle-EM}, it suffices to show that
        $X_1 \mathrel{\EM(\consi{P})} X_2$
        implies
        $Z_1 \mathrel{\EM(\consi{Q})} Z_2$,
        where $Z_i = \bigcup_{x \in X_i} f(x)$.
Let $z_1 \in Z_1$, thus $z_1 \in f(x_1)$ for some $x_1 \in X_1$.
        Since $X_1 \mathrel{\EM(\consi{P})} X_2$, there is some $x_2 \in X_2$ with
        $x_1 \mathrel{\consi{P}} x_2$.
        Since $f: P \tiarrow \powerdomain(Q)$, it follows that
        $f(x_1) \mathrel{\consi{\powerdomain(Q)}} f(x_2)$, hence by Lemma~\ref{lemma:twiddle-EM}
        $f(x_1) \mathrel{\EM(\consi{Q})} f(x_2)$,
        hence $z_1 \mathrel{\consi{Q}} z_2$ for some $z_2 \in f(x_2) \subseteq Z_2$.
        Thus
        $\forall z_1 \in Z_1.\exists z_2 \in Z_2. z_1 \mathrel{\consi{Q}} z_2$.
        It follows by a symmetrical argument that
        $\forall z_2 \in Z_2.\exists z_1 \in Z_1. z_1 \mathrel{\consi{Q}} z_2$.
        \qedhere
\end{enumerate}
\end{proof}
\end{theorem}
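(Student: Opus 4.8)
The plan is to reduce both parts to manipulations of the Egli-Milner extension $\EM(\_)$, using Lemma~\ref{lemma:twiddle-EM} to commute the compatible-extension operation $\consi{(\cdot)}$ with the powerdomain construction. In each case I would first unfold the $\tiarrow$ notation into $\ERarrow$ form via Definition~\ref{def:TIarrow}, and then push $\consi{(\cdot)}$ through $\powerdomain(\cdot)$, so that the genuine verification always takes place at the level of $\EM$ of a compatible extension.

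For part (1), I would start by recalling that $P' \LoCIgeq P$ is, by definition of the \LoCI order, just $P' \subseteq P$. Exactly as in the proof of Proposition~\ref{prop:subTI-compTI} (SubTI), inclusion of preorders lifts to their compatible extensions, giving $\consi{P'} \subseteq \consi{P}$. Monotonicity of $\EM(\_)$ (Fact~\ref{fact:EM}(1)) then yields $\EM(\consi{P'}) \subseteq \EM(\consi{P})$, and restricting to non-empty convex sets gives $\powerdomain(\consi{P'}) \subseteq \powerdomain(\consi{P})$. The conclusion follows immediately from Lemma~\ref{lemma:twiddle-EM}(3), which identifies $\consi{\powerdomain(P')}$ with $\powerdomain(\consi{P'})$ and likewise for $P$.

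For part (2), I would first unfold $\kleisli{f}: \powerdomain(P) \tiarrow \powerdomain(Q)$ to $\kleisli{f}: \consi{\powerdomain(P)} \ERarrow \consi{\powerdomain(Q)}$, so that it remains to show that $X_1 \mathrel{\consi{\powerdomain(P)}} X_2$ implies $\kleisli{f}(X_1) \mathrel{\consi{\powerdomain(Q)}} \kleisli{f}(X_2)$. Writing $Z_i = \bigcup_{x \in X_i} f(x)$, so that $\kleisli{f}(X_i) = \Cv(Z_i)$, Lemma~\ref{lemma:twiddle-EM}(2) reduces the goal to $Z_1 \mathrel{\consi{\EM(Q)}} Z_2$, which by Lemma~\ref{lemma:twiddle-EM}(1) is $Z_1 \mathrel{\EM(\consi{Q})} Z_2$; dually, parts (2) and (1) (or simply (3), since the $X_i$ are convex) turn the hypothesis $X_1 \mathrel{\consi{\powerdomain(P)}} X_2$ into $X_1 \mathrel{\EM(\consi{P})} X_2$. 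The remaining work is a standard Egli-Milner element chase: given $z_1 \in Z_1$, pick $x_1 \in X_1$ with $z_1 \in f(x_1)$, use $X_1 \mathrel{\EM(\consi{P})} X_2$ to find $x_2 \in X_2$ with $x_1 \mathrel{\consi{P}} x_2$, apply the hypothesis $f: P \tiarrow \powerdomain(Q)$ (that is, $f: \consi{P} \ERarrow \consi{\powerdomain(Q)}$) followed by Lemma~\ref{lemma:twiddle-EM} to obtain $f(x_1) \mathrel{\EM(\consi{Q})} f(x_2)$, and extract $z_2 \in f(x_2) \subseteq Z_2$ with $z_1 \mathrel{\consi{Q}} z_2$. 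A symmetric argument establishes the other half of the Egli-Milner condition.

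The genuine content is all concentrated in Lemma~\ref{lemma:twiddle-EM}, which I am treating as given; granted that lemma, the theorem is essentially bookkeeping. The step I would watch most carefully is the symmetry half of the element chase in part (2): it relies on $\consi{P}$ and $\consi{Q}$ being symmetric and on $\EM(\_)$ preserving symmetry (Fact~\ref{fact:EM}(2)), so that the witness selection can be run in both directions. The only other subtlety is to confirm that $f(x_1)$ and $f(x_2)$, being values of $f$ in $\powerdomain(B)$, are non-empty convex sets, so that the passage from $\consi{\powerdomain(Q)}$ to $\EM(\consi{Q})$ via Lemma~\ref{lemma:twiddle-EM} applies to them directly.
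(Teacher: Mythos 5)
Your proposal is correct and takes essentially the same approach as the paper's own proof: part (1) by lifting $P' \subseteq P$ to ${\consi{P'}} \subseteq {\consi{P}}$ and then applying monotonicity of $\EM(\_)$ together with Lemma~\ref{lemma:twiddle-EM}, and part (2) by the same reduction through $\kleisli{f}$ and Lemma~\ref{lemma:twiddle-EM} to the identical Egli--Milner element chase. One small remark: your symmetry-based justification of the second half is valid, but it is not strictly needed, since the dual clause of $X_1 \mathrel{\EM(\consi{P})} X_2$ (for every $x_2 \in X_2$ there is $x_1 \in X_1$ with $x_1 \mathrel{\consi{P}} x_2$) already lets you run the chase in the reverse direction without invoking symmetry of $\consi{Q}$.
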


\begin{corollary}
\label{corollary:subP-compP}
Let $A, B, C$ be finite posets and let
  $f \in [A \rightarrow \powerdomain(B)]$ and
  $g \in [B \rightarrow \powerdomain(C)]$.
The following inference rules are valid for all elements of $\LOCI$ of appropriate type:
\begin{gather*}
    \infer{\;P' \LoCIgeq P \;\;\; f: P \tiarrow {\powerdomain(Q)} \;\;\; Q \LoCIgeq Q'\;}
    {\;f: {P'} \tiarrow {\powerdomain(Q')}\;}
    \hspace{3em}
    \infer{\;f: P \tiarrow \powerdomain(Q) \;\;\; g: Q \tiarrow {\powerdomain(R)}\;}
{\;f ; g: {P} \tiarrow {\powerdomain(R)}\;}
\end{gather*}
\end{corollary}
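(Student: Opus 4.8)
The plan is to derive both inference rules from the general \emph{Sub} and \emph{Comp} rules of Fact~\ref{fact:sub-comp}, after unfolding the $\tiarrow$ notation via Definition~\ref{def:TIarrow} and importing the two parts of the preceding theorem. Throughout I would keep in mind that $f : P \tiarrow Q$ abbreviates $f : \consi{P} \ERarrow \consi{Q}$, and that $X \LoCIgeq Y$ is just the reverse inclusion $X \subseteq Y$.

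For the first (subtyping) rule, I would unfold the premise $f : P \tiarrow \powerdomain(Q)$ to $f : \consi{P} \ERarrow \consi{\powerdomain(Q)}$ and the goal $f : P' \tiarrow \powerdomain(Q')$ to $f : \consi{P'} \ERarrow \consi{\powerdomain(Q')}$. It then suffices to feed the premise into the general \emph{Sub} rule, which requires $\consi{P'} \subseteq \consi{P}$ on the precondition side and $\consi{\powerdomain(Q)} \subseteq \consi{\powerdomain(Q')}$ on the postcondition side. The first inclusion follows from $P' \LoCIgeq P$ exactly as in the proof of Proposition~\ref{prop:subTI-compTI}, since $P' \subseteq P$ implies $\consi{P'} \subseteq \consi{P}$. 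The second inclusion is precisely part~(1) of the preceding theorem, instantiated with $Q$ and $Q'$ in the roles of $P'$ and $P$: the hypothesis $Q \LoCIgeq Q'$ then yields $\consi{\powerdomain(Q)} \subseteq \consi{\powerdomain(Q')}$.

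For the composition rule, the key move is to lift $g$ through the powerdomain monad. I would apply part~(2) of the preceding theorem to the premise $g : Q \tiarrow \powerdomain(R)$, obtaining $\kleisli{g} : \powerdomain(Q) \tiarrow \powerdomain(R)$, \ie $\kleisli{g} : \consi{\powerdomain(Q)} \ERarrow \consi{\powerdomain(R)}$. Combined with the unfolded first premise $f : \consi{P} \ERarrow \consi{\powerdomain(Q)}$, the general \emph{Comp} rule of Fact~\ref{fact:sub-comp} yields $\kleisli{g} \circ f : \consi{P} \ERarrow \consi{\powerdomain(R)}$. Since $f ; g$ is by definition $\kleisli{g} \circ f$, this is exactly $f ; g : P \tiarrow \powerdomain(R)$.

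Given that both parts of the preceding theorem are already available, the corollary is essentially a repackaging exercise, so I do not expect a genuine obstacle. The one point requiring care is the bookkeeping around Kleisli composition: part~(2) of the theorem produces the lifted map $\kleisli{g}$ acting on $\powerdomain(Q)$ and $\powerdomain(R)$, and one must verify both that $\kleisli{g} \circ f$ coincides with $f ; g$ and that the intermediate relation $\consi{\powerdomain(Q)}$ matches on the two halves of the composition before the general \emph{Comp} rule applies. This in turn relies on $\powerdomain(Q)$ being a bona fide complete preorder on $\powerdomain(B)$, which is guaranteed by the earlier lifting lemma.
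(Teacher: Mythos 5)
Your proof is correct and takes essentially the same route the paper intends: the subtyping rule falls out of the general \emph{Sub} rule once part~(1) of the preceding theorem (legitimately instantiated at the poset $B$ with $Q \LoCIgeq Q'$) supplies $\consi{\powerdomain(Q)} \subseteq \consi{\powerdomain(Q')}$ and the argument from Proposition~\ref{prop:subTI-compTI} supplies $\consi{P'} \subseteq \consi{P}$, while the composition rule is part~(2) applied to $g$ followed by the general \emph{Comp} rule and the definitional identity $f;g = \kleisli{g} \circ f$. The paper states the corollary without proof, and your repackaging—including the check that $\powerdomain(Q)$ is a complete preorder so the intermediate $\tiarrow$ judgement is well-formed—is exactly the intended derivation.
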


\section{Related Work}
\label{sec:related}
Readers of this paper hoping to see a reconciliation of Shannon's quantitative information theory with domain theory may be disappointed to see that we have tackled a less ambitious problem based on Shannon's lesser-known qualitative theory of information.  \citet{abramsky2008information} discusses the issues involved in combining the quantitative theory of Shannon with the qualitative theory of Scott and gives a number of useful pointers to the literature.  

As we mentioned in the introduction, Shannon's paper describing information lattices \cite{Shannon:Lattice} is relatively unknown, but a more recent account by \citet{Li+:Connection} make Shannon's ideas more accessible (see also \cite{Rioul+:22}). Most later works using similar abstractions for representing information have been made independently of Shannon's ideas.  In the security area, \citet{Cohen:Information} used partitions to describe varieties of information flow via so-called \emph{selective dependencies}. In an independent line of work, various authors developed the use of the lattice of partial equivalence relations (PERs) to give semantic models to polymorphic types in programming languages e.g.~\cite{Coppo:Zacchi:Type,Abadi:Plotkin:PER}. 
PERs generalise equivalence relations by dropping the reflexivity requirement, so a PER is just an equivalence relation on a subset of the space in question. An important generalisation over equivalence relations, particularly when used for semantic models of types, is that ``flow properties'' of the form $f : P \ERarrow Q$ can expressed by interpreting $P \ERarrow Q $ itself as a PER over functions, and  $f : P \ERarrow Q$ is just shorthand for $f$ being related to itself by this PER.
The connection to information flow and security properties comes via \emph{parametricity}, a property of polymorphic types which  can be used to establish noninterference e.g.~\cite{Tse:Zdancewic:Translating,Bowman:Ahmed:Noninterference}. 

Independent of all of the above, \citet{Landauer:Redmond:CSFW93} described the use of the lattice of equivalence relations to describe security properties, dubbing it  
\emph{a lattice of information}.  \citet{Sabelfeld:Sands:PER}, inspired by the use of PERs for static analysis of dependency \cite{Hunt:PhD,Hunt:Sands:PER} (and independent of Landauer and Redmond's work) used PERs over domains to give semantic models of information flow properties, including for more complex domains for nondeterminism and probability, and showed that the semantic properties could be used to prove semantic soundness of a simple type system.  Our TI results in \S~\ref{sec:nondeterminism} mirror the termination-sensitive composition principle for powerdomains given by \citet{Sabelfeld:Sands:PER}. \citet{Hunt:Sands:Quantale} introduce a refinement of \LoI, orthogonal to the present paper, which adds disjunctive information flow properties to the lattice.  \citet{Li+:Downgrading} use a postprocessing definition of declassification policies in the manner of Proposition~\ref{prop:postprocessing}(2); \citet{Sabelfeld:Sands:Declassification} sketched how this could be reformulated within \LOI.

\citet{Giacobazzi:Mastroeni:Abstract2004}
introduced \emph{abstract noninterference} (ANI) in which a security-centric noninterference property is parameterised by abstract interpretations to represent the observational power of an attacker, and the properties to be protected.
\citet{Hunt:Mastroeni:SAS05} showed how so-called \emph{narrow} ANI in \cite{Giacobazzi:Mastroeni:Abstract2004} and some key results can be recast as properties over \LOI.
In its most general form, \citet[Def~4.2]{Giacobazzi:Mastroeni:Abstract:Journal}
define ANI as a property of a function $f$ parameterised by three upper closure operators (operating on \emph{sets} of values):
an output observation $\rho$,
an input property $\phi$ which may flow,
and an input property $\eta$ ``to protect''.
A function $f$ is defined to have abstract noninterference property $\set{\phi,\eta}f\set{\rho}$ if, for all $x, y$:
\[
\phi(\{x\}) = \phi(\{y\}) ~\text{implies}~ \rho(\hat{f}(\eta(\{x\}))) = \rho(\hat{f}(\eta(\{y\}))) 
\]
where $\hat{f}$ is the lifting of $f$ to sets.
Note that this can be directly translated to an equivalent property over the lattice of information, as follows: 
\[
\hat{f} \circ \eta' : \ker(\phi') \ERarrow \ker(\rho)
\]
where
$\phi'(x) = \phi(\{x\})$ and $\eta'(x) = \eta(\{x\})$.
In the special case that $\eta$ is the identity, this reduces simply to an information flow property of $f$, namely:
\[
f: \ker(\phi') \ERarrow \ker(\rho')
\]
where $\rho'(y) = \rho(\{y\})$.
In the general case,
\citet{Giacobazzi:Mastroeni:Abstract:Journal}
observe that the ANI framework models attackers whose ability to make
logical deductions (about the inputs of $f$) is constrained within the abstract interpretation fixed by $\rho$ and $\eta$.
Inheriting from the underlying abstract interpretation framework, ANI can be developed within a variety of semantic frameworks (denotational, operational, trace-based, etc.).

The lattice of information, either directly or indirectly provides a robust baseline for various quantitative measures of information flow \cite{Malacaria:Algebraic,McIver+:Abstract}.
In the context of quantitative information flow, 
\citet[Chapter~16]{Alvim+:Science} discuss leakage refinement orders for potentially nonterminating probabilistic programs.  Their ordering allows increase in ``security'' or termination. Increase in security here corresponds to decrease in information (a system which releases no information being the most secure). Thus \citeauthor{Alvim+:Science}'s ordering is incomparable to ours: the \LoCI ordering reflects increase in information (decrease in security) or increase in termination. 

Regarding the question of termination-sensitive noninterference, the first static analysis providing this kind of guarantee was by \citet{Denning:Certification}.  This used Dorothy Denning's lattice model of information \cite{Denning:Lattice}.  It is worth noting that Denning's lattice model is a model for security properties expressed via labels, inspired by, but generalising, classical military security clearance levels.  As such these are syntactic lattices used to identify different objects in a system and to provide a definition of the intended information flows. But Denning's work did not come with any actual formal definition of information flow, and so their analysis did not come with any proof of a semantic security property.  Such a proof came later in the form of a termination-insensitive noninterference property for a type system \cite{Volpano+:Sound}, intended to capture the essence of Denning's static analysis.  The semantic guarantees for such analysis in the presence of stream outputs was studied by \citet{Askarov+:Termination}.  There they showed that stream outputs can leak arbitrary amounts of information through the termination (progress) insensitivity afforded by a Denning-style analysis, but also that the information is bound to leak slowly.  

In recent work, \citet{Sterling:Harper:Sheaf} propose a new semantic model for termination-insensitive noninterference properties using more elaborate domain-theoretic machinery. The approach is fundamentally type-centric, adopting sheaf semantics ideas from their earlier work on the semantics of phase distinctions \cite{Sterling:Harper:Logical}. 
The fundamental difference in their work is that it is an \emph{intrinsic} approach to the semantics of information flow types in the sense of \citet{Reynolds:What}, whereby information flow specifications are viewed as an integral part of types, and thus the meaning of the type for a noninterfering function is precisely the semantics of noninterference.  This is in contrast with the \emph{extrinsic} relational models studied here in which information flow properties are characterised by properties carved out of a space of arbitrary functions.
Though the merger of domains and relations
sketched in \Sec{categorical} may be considered an intrinsic presentation, the approach of \citeauthor{Sterling:Harper:Logical} goes further: it requires a language of information flow properties to be part of the type language (and 
the underlying semantics). In their work the class of properties discussed is quite specific, namely those specifiable by a
Denning-style (semi)lattice of security labels. The approach is particularly suited to reasoning about systems in the style of DCC \cite{Abadi+:Core} in which security labels are part of the programming language itself.  Unlike in the present work, the only kind of termination-insensitive noninterference  discussed in their paper is the simple case in which a program either terminates or it does not.

The most advanced semantic soundness proof of termination-insensitive noninterference (in terms of programming language features) is the recent work of \citet{Gregersen+:Mechanized}. In terms of advanced typing features (combinations of higher-order state, polymorphism, existential and recursive types\ldots) this work is a tour de force, although the notion of termination-sensitivity at the top level is just the simplest kind; any termination-insensitive notions that arise internally through elaborate types are not articulated explicitly. 

\section{Conclusion and Future Work}
\label{sec:future}
In this paper we have reconciled two different theories of information: 
\begin{itemize}
    \item Shannon's lattice model, which gives an encoding-independent view of the information that is released from some data source by a function, and orders one information element above another when it provides more information about the source; and 
    \item Scott's domain theory, where an ``information element'' is a provisional representation of the information produced so far by a computational process, and the ordering relation reflects an increase in definedness, or computational progress. 
\end{itemize}
Our combination of these models, which we have dubbed the Lattice of Computable Information (as a nod to the fact that Scott's theory is designed to model computable functions via continuity, even if it does not always do so perfectly) retains the essential features of both theories -- it possesses the lattice properties which describe how information can be combined and compared, at the same time as taking into account the Scott ordering in a natural way.  We have also shown how the combination yields the first definition, general in its output domain, of what it means to be the termination-insensitive weakening of an arbitrary flow property.

We identify some lines of further work which we believe would be interesting to explore:

\para{New Information Flow Policies using LoCI} \LoCI allows the expression of new, more fine-grained information flow properties, but which ones are useful?  One example worth exploring relates to noninterference for systems with input streams.  In much of the literature on noninterference for such systems there is an explicit assumption that systems are ``input total'', so that the system never blocks when waiting for a secret input. Using \LOCI we have the machinery to explore this space without such assumptions -- we can formulate what we might call \emph{input termination-sensitive} and \emph{input termination-insensitive} properties within \LOCI (without weakening). Input termination-sensitive properties are very strong since they assume that the high user might try to sneak information to a low observer via the decision to supply or withhold information, whereas insensitive properties permit upfront knowledge of the number of high inputs consumed, thus ignoring these flows.

Another place where \LOCI can prove useful is in \emph{required release} policies \cite{Chong:Required}, where a minimum amount of information flow is required (e.g.\ a freedom of information property).  In this case we would like to ensure that the information which is released is produced in a ``decent'' form -- i.e.\ as a maximal element among those \LOCI elements which have the same equivalence classes.  This prevents the use of nontermination to obfuscate the information. 

\para{Semantic Proofs of Noninterference}
We have developed some basic semantic-level tools for compositional reasoning about information flow properties in \LOCI, and their termination-insensitive relatives
(e.g.\ Proposition~\ref{prop:subTI-compTI} and Corollary~\ref{corollary:subP-compP}).
It seems straightforward to establish a flow-sensitive variant of the progress-insensitive type system of \citet{Askarov+:Termination} that can be given a semantic soundness proof based on the definition of termination-insensitivity given here (the proof in \cite{Askarov+:Termination} is not given in the paper, but it is a syntactic proof
). It would be interesting to tackle a more involved language, for example with both input and output streams, and with input termination-sensitive/insensitive variants.  It would be important, via such case studies, to further develop an arsenal of properties, established at the semantic level, which can be reused across different proofs for different systems.
For languages which support higher-order functions, semantic proofs would call for the ability to build complete preorders on continuous function spaces $[A \rightarrow B]$ by the usual logical relations construction.
That is, given complete preorders $P$ and $Q$ on $A$ and $B$, we would like to construct a complete preorder $P \Rightarrow Q$, relating $f$, and $g$ just when $a \mathrel{P} a'$ implies $f(a) \mathrel{Q} g(a')$.
In fact, defined this way, the relation $P \Rightarrow Q$ will in general only be a \emph{partial} preorder (some elements of $[A \rightarrow B]$ will not be in the relation at all). Promisingly, the results in \cite{Abadi:Plotkin:PER} suggest that complete partial preorders are well-behaved, yielding a cartesian-closed category.

\para{Domain Constructors}
The powerdomain results in \S\ref{sec:powerdomains} are limited to finite posets.
It would be interesting to extend these results beyond the finite case and, more generally, to see if other domain constructions, including via recursive domain equations, can be lifted to complete preorders.
Clearly this will require restriction to an appropriate category of algebraic domains, rather than arbitrary posets.
It remains to be seen whether it will also be necessary to impose additional constraints on the preorders.

\begin{acks}
Thanks to the anonymous referees for numerous constructive suggestions, in particular connections to category theory that formed the basis of \S\ref{sec:categorical}, and the suggestion to use an example based on powerdomains. Thanks to Andrei Sabelfeld and Aslan Askarov for helpful advice.  This work was partially supported by the Swedish Foundation for Strategic Research
(SSF), the Swedish Research Council (VR). 
\end{acks}
\vfill
\pagebreak
\bibliography{bib-bibliography}

\end{document}